\documentclass[twocolumn]{IEEEtran}

\usepackage{multirow}
\usepackage{graphicx}
\usepackage{xcolor}
\usepackage{epstopdf}
\usepackage{caption}
\usepackage{subcaption}
\usepackage{amsmath}
\usepackage{bm}
\usepackage{pgfplots}
\usepackage{footnote}
\usepackage{stfloats}
\usepackage{placeins}
\usepackage{color,soul}
\usepackage{dsfont}
\usepackage{amssymb}
\usepackage[boxruled]{algorithm2e}
\usepackage{pbox}
\usepackage{enumerate}

\usepackage{etextools}
\usepackage{mathtools}

\usepackage[hidelinks]{hyperref}

\usepackage{amsthm}

\providecommand{\customgenericname}{}
\newcommand{\newcustomtheorem}[2]{%
  \newenvironment{#1}[1]
  {%
   \renewcommand\customgenericname{#2}%
   \renewcommand\theinnercustomgeneric{##1}%
   \innercustomgeneric
  }
  {\endinnercustomgeneric}
}

\newcustomtheorem{customthm}{Theorem}
\newcustomtheorem{customlemma}{Lemma}
\newcustomtheorem{customdef}{Definition}

\theoremstyle{plain}
\newtheorem{thm}{Theorem} 

\newtheorem*{theorem*}{Theorem}
\newtheorem*{lemma*}{Lemma}
    
\mathtoolsset{showonlyrefs}

\newtheorem{lemma}{Lemma}

\theoremstyle{definition}
\newtheorem{defn}[thm]{Definition} 
\newtheorem{exmp}[thm]{Example} 


\newlength
\figureheight
\newlength
\figurewidth

\def\x{{\mathbf x}}
\def\X{{\mathbf X}}

\def\d{{\mathbf d}}

\def\uh{ \overline{\mathbf{u}}}
\def\u{ \mathbf{u}}
\def\r{ \mathbf{r}}
\def\xi{\x_i}

\def\Y{{\mathbf Y}}
\def\z{{\mathbf z}}
\def\Z{{\mathbf Z}}
\def\u{{\mathbf u}}

\def\S{{\mathbf S}}
\def\G{{\mathbf G}}
\def\D{{\mathbf D}}

\def\A{{\mathbf A}}

\def\M{{\mathbf Z}}
\def\P{{\mathbf P}}

\def\alfa{{\boldsymbol \alpha}}
\def\gama{{\boldsymbol \gamma}}
\def\Gama{{\boldsymbol \Gamma}}
\def\Delt{{\boldsymbol \Delta}}
\def\delt{{\boldsymbol \delta}}
\def\O{{\boldsymbol \Omega}}

\def\Poi{{P_{0,\infty}}}
\def\Loi{{\ell_{0,\infty}}}
\def\Poie{{P_{0,\infty}^\epsilon}}

\def\C{{\mathbf C}}

\def\Lamda{{\boldsymbol \Lambda}}

\def\R{{\mathbf R}}

\def\E{{\mathbf E}}
\def\DT{\D_\mathcal{T}}
\def\DTB{\D_{\overline{\mathcal{T}}}}
\def\x{{\mathbf x}}
\def\X{{\mathbf X}}

\def\d{{\mathbf d}}

\def\uh{ \bar{\mathbf{u}}}
\def\u{ \mathbf{u}}
\def\r{ \mathbf{r}}
\def\xi{\x_i}

\def\Y{{\mathbf Y}}
\def\z{{\mathbf z}}
\def\Z{{\mathbf Z}}
\def\u{{\mathbf u}}

\def\S{{\mathbf S}}
\def\G{{\mathbf G}}
\def\D{{\mathbf D}}

\def\A{{\mathbf A}}

\def\M{{\mathbf Z}}
\def\P{{\mathbf P}}

\def\alfa{{\boldsymbol \alpha}}
\def\gama{{\boldsymbol \gamma}}
\def\Gama{{\boldsymbol \Gamma}}
\def\Delt{{\boldsymbol \Delta}}
\def\delt{{\boldsymbol \delta}}
\def\O{{\boldsymbol \Omega}}

\def\Poi{{P_{0,\infty}}}
\def\Loi{{\ell_{0,\infty}}}
\def\Poie{{P_{0,\infty}^\epsilon}}

\def\C{{\mathbf C}}

\def\Lamda{{\boldsymbol \Lambda}}

\def\R{{\mathbf R}}

\def\E{{\mathbf E}}

\DeclarePairedDelimiterX\set[1]\lbrace\rbrace{\def\given{\;\delimsize\vert\;}#1}

\begin{document}

\title{Working Locally Thinking Globally: Theoretical Guarantees for Convolutional Sparse Coding}
\author{Vardan~Papyan*,
	Jeremias~Sulam*,
	Michael~Elad

\thanks{*The authors contributed equally to this work. 
	
All authors are with the Computer Science Department, the Technion - Israel Institute of Technology.}}

\maketitle

\begin{abstract}
The celebrated sparse representation model has led to remarkable results in various signal processing tasks in the last decade.
However, \mbox{despite} its initial purpose of serving as a global prior for entire signals, it has been commonly used for modeling low dimensional patches due to the computational constraints it entails when deployed with learned dictionaries.
A way around this problem has been recently proposed, adopting a convolutional sparse representation model.
This approach assumes that the global dictionary is a concatenation of banded Circulant matrices.
While several works have presented algorithmic solutions to the global pursuit problem under this new model, very few truly-effective guarantees are known for the success of such methods.
In this work, we address the theoretical aspects of the convolutional sparse model providing the first meaningful answers to questions of uniqueness of solutions and success of pursuit algorithms, both greedy and convex relaxations, in ideal and noisy regimes.
To this end, we generalize mathematical quantities, such as the $\ell_0$ norm, mutual coherence, Spark and RIP to their counterparts in the convolutional setting, intrinsically capturing local measures of the global model.
On the algorithmic side, we demonstrate how to solve the global pursuit problem by using simple local processing, thus offering a first of its kind bridge between global modeling of signals and their patch-based local treatment.
\end{abstract}

\begin{IEEEkeywords}
	Sparse Representations, Convolutional Sparse Coding, Uniqueness Guarantees, Stability Guarantees, Orthogonal Matching Pursuit, Basis Pursuit, Global modeling, Local Processing.
\end{IEEEkeywords}

\section{Introduction}

A popular choice for a signal model, which has proven to be very effective in a wide range of applications, is the celebrated sparse representation prior \cite{Bruckstein2009,Mairal2014,Romano2014,Dong2011}. In this framework, one assumes a signal $\X \in \mathbb{R}^N$ to be a sparse combination of a few columns (atoms) $\d_i$ from a collection $\D \in \mathbb{R}^{N\times M}$, termed dictionary. In other words, $\X = \D\Gama$ where $\Gama\in \mathbb{R}^{M}$ is a sparse vector. Finding such a vector can be formulated as the following optimization problem:
\begin{equation}
\min_{\Gama}\ g(\Gama)\ \text{ s.t. } \D\Gama = \X,
\label{Eq:Global Sparse Model}
\end{equation}
where $g(\cdot)$ is a function which penalizes dense solutions, such as the $\ell_1$ or $\ell_0$ ``norms''\footnote{Despite the $\ell_0$ not being a norm (as it does not satisfy the homogeneity property), we will use this jargon throughout this paper for the sake of simplicity.}. For many years, analytically defined matrices or operators were used as the dictionary $\D$ \cite{Mallat1993,Elad2005}. However, designing a model from real examples by some learning procedure has proven to be more effective, providing sparser solutions \cite{Engan1999,Aharon2006,Mairal2009a}. This led to vast work that deploys dictionary learning in a variety of applications \cite{Li2011,Zhang2010,Dong2011,Gao2012,Yang2014}.

Generally, solving a pursuit problem is a computationally challenging task. As a consequence, most such recent successful methods have been deployed on relatively small dimensional signals, commonly referred to as \emph{patches}. Under this \emph{local} paradigm, the signal is broken into overlapped blocks and the above defined sparse coding problem is reformulated as
\begin{equation}
\forall \ i \quad \min_{\alfa}\ g(\alfa)\ \text{ s.t. }\ \D_L\alfa = \R_i\X,
\label{eq:first_equation}
\end{equation}
where $\D_L \in \mathbb{R}^{n\times m}$ is a local dictionary, and $\R_i\in \mathbb{R}^{n\times N}$ is an operator which extracts a small local patch of length $n\ll N$ from the global signal $\X$. In this set-up, one processes each patch independently and then aggregates the estimated results using plain averaging in order to recover the global reconstructed signal. A local-global gap naturally arises when solving global tasks with this local approach, which ignores the correlation between overlapping patches. The reader is referred to \cite{Sulam2015,Romano2015b,Romano2015,Papyan2016,batenkov2017global,Zoran2011} for further insights on this dichotomy.

The above discussion suggests that in order to find a consistent global representation for the signal, one should propose a global sparse model. However, employing a general global (unconstrained) dictionary is infeasible due the computational complexity involved, and training this model suffers from the curse of dimensionality. An alternative is a (constrained) global model in which the signal is composed as a superposition of local atoms. The family of dictionaries giving rise to such signals is a concatenation of banded Circulant matrices. This global model benefits from having a local shift invariant structure -- a popular assumption in signal and image processing -- suggesting an interesting connection to the above-mentioned local modeling.

When the dictionary $\D$ has this structure of a concatenation of banded Circulant matrices, the pursuit problem in \eqref{Eq:Global Sparse Model} is usually known as convolutional sparse coding \cite{Grosse2007}. Recently, several works have addressed the problem of using and training such a model in the context of image inpainting, super-resolution, and general image representation \cite{Bristow2013,Heide2015,Kong2014,Wohlberg2014,Gu2015}. These methods usually exploit an ADMM formulation \cite{Boyd2011} while operating in the Fourier domain in order to search for the sparse codes and train the dictionary involved. Several variations have been proposed for solving the pursuit problem, yet there has been no theoretical analysis of their success. 

Assume a signal is created by multiplying a sparse vector by a convolutional dictionary. In this work, we consider the following set of questions:
\begin{enumerate}
	\item Can we guarantee the uniqueness of such a global (convolutional) sparse vector?
	\item Can global pursuit algorithms, such as the ones suggested in recent works, be guaranteed to find the true underlying sparse code, and if so, under which conditions?
	\item Can we guarantee a stability of the sparse approximation problem, and a stability of corresponding pursuit methods in a noisy regime?; And
	\item Can we solve the global pursuit by restricting the process to local pursuit operations?
\end{enumerate}

A na\"ive approach to address such theoretical questions is to apply the fairly extensive results for sparse representation and compressed sensing to the above defined model \cite{Elad_Book}. However, as we will show throughout this paper, this strategy provides nearly useless results and bounds from a global perspective. Therefore, there exists a true need for a deeper and alternative analysis of the sparse coding problem in the convolutional case which would yield meaningful bounds. 

In this work, we will demonstrate the futility of the $\ell_0$-norm in providing meaningful bounds in the convolutional model. This, in turn, motivates us to propose a new localized measure -- the $\Loi$ norm. Based on it, we redefine our pursuit into a problem that operates locally while thinking globally. To analyze this problem, we extend useful concepts, such as the Spark and mutual coherence, to the convolutional setting. We then provide claims for uniqueness of solutions and for the success of pursuit methods in the noiseless case, both for greedy algorithms and convex relaxations. Based on these theoretical foundations, we then extend our analysis to a more practical scenario, handling noisy data and model deviations. We generalize and tie past theoretical constructions, such as the Restricted Isometry Property (RIP) \cite{Candes2005} and the Exact Recovery Condition (ERC) \cite{Tropp2004}, to the convolutional framework proving the stability of this model in this case as well. 


This paper is organized as follows. We begin by reviewing the unconstrained global (traditional) sparse representation model in Section \ref{Sec:Preliminaries}, followed by a detailed description of the convolutional structure in Section \ref{Sec:Conv_Model}. Section \ref{Sec:Global2Local} briefly motivates the need of a thorough analysis of this model, which is then provided in Section \ref{Sec:TheoStudy}. We introduce additional mathematical tools in Section \ref{Sec:ShiftedMutualCoherence}, which provide further insight into the convolutional model. The noisy scenario is then considered in Section \ref{Sec:NoisyRegime} and analyzed in Section \ref{Sec:TheoreticalAnalysis_noisy}, where we assume the global signal to be contaminated with norm-bounded error. We then bridge the local and global models in Section \ref{Sec:FromGlobal2LocalProcessing}, proposing local algorithmic solutions to tackle the convolutional pursuit. Finally, we conclude this work in Section \ref{Sec:Conclusions}, proposing exciting future directions.

\section{The Global Sparse Model -- Preliminaries}
\label{Sec:Preliminaries}

Consider the constrained $P_0$ problem, a special case of Eq. \eqref{Eq:Global Sparse Model}, given by
\begin{equation}
(P_0): \quad \underset{\Gama}{\min}\ \|\Gama\|_0 \ \text{ s.t. } \  \D\Gama = \X. 
\label{Eq:P0problem}
\end{equation}
Several results have shed light on the theoretical aspects of this problem, claiming a unique solution under certain circumstances. These guarantees are given in terms of properties of the dictionary $\D$, such as the \emph{Spark}, defined as the minimum number of linearly dependent columns (atoms) in $\D$ \cite{Donoho2003}. Formally,
\begin{equation}
\sigma(\D) = \underset{\Gama}{\min}\ \|\Gama\|_0 \ \text{ s.t. } \ \D\Gama = \mathbf{0}, \ \Gama \neq \mathbf{0}.
\end{equation}
Based on this property, a solution obeying $\|\Gama\|_0 < \sigma(\D)/2$ is necessarily the sparsest one \cite{Donoho2003}. Unfortunately, this bound is of little practical use, as computing the Spark of a matrix is a combinatorial problem -- and infeasible in practice.  

Another guarantee is given in terms of the \emph{mutual coherence} of the dictionary, $\mu(\D)$, which quantifies the similarity of atoms in the dictionary. Assuming hereafter that $\|\d_i\|_2 = 1 \ \forall i$, this was defined in \cite{Donoho2003} as:
\begin{equation}
\mu(\D) = \underset{i\neq j}{\max} \|\d_i^T \d_j\|.
\end{equation}
A relation between the Spark and the mutual coherence was also shown in \cite{Donoho2003}, stating that $\sigma(\D) \geq 1 + \frac{1}{\mu(\D)}$. This, in turn, enabled the formulation of a practical uniqueness bound guaranteeing that $\Gama$ is the unique solution of the $P_0$ problem if $\|\Gama\|_0 < \frac{1}{2} \left(1 + 1/\mu(\D)\right)$.

Solving the $P_0$ problem is NP-hard in general. Nevertheless, its solution can be approximated by either greedy pursuit algorithms, such as the Orthogonal Matching Pursuit (OMP) \cite{Pati1993a,Chen1989}, or convex relaxation approaches like Basis Pursuit (BP) \cite{Chen2001}. Despite the difficulty of this problem, these methods (and other similar ones) have been proven to recover the true solution if $\|\Gama\|_0 < \frac{1}{2} \left(1 + 1/\mu(\D)\right)$ \cite{Donoho2006,Tropp2004,Donoho2003,Gribonval2003}.

When dealing with natural signals, the $P_0$ problem is often relaxed to consider model deviations as well as measurement noise. In this set-up one assumes $\Y = \D\Gama + \E$, where $\E$ is a nuisance vector of bounded energy, $\| \E\|_2 \leq \epsilon$. The corresponding recovery problem can then be stated as follows:
\begin{equation}
(P_0^\epsilon): \quad \underset{\Gama}{\min} \ \|\Gama\|_0 \ \text{ s.t. } \|\D\Gama - \Y\|_2 \leq \epsilon.
\end{equation}
Unlike the noiseless case, given a solution to the above problem, one can not claim its uniqueness in solving the $P_0^\epsilon$ problem but instead can guarantee that it will be close enough to the true vector $\Gama$ that generated the signal $\Y$.
This kind of stability results have been derived in recent years by leveraging the Restricted Isometry Property (RIP) \cite{Candes2005}. A matrix $\D$ is said to have a k-RIP with constant $\delta_k$ if this is the smallest quantity such that
\begin{equation}
(1-\delta_k) \| \Gama \|_2^2 \leq \| \D\Gama \|_2^2 \leq (1+\delta_k) \| \Gama \|_2^2,
\end{equation}
for every $\Gama$ satisfying $\| \Gama \|_0=k$. Based on this property, it was shown that assuming $\Gama$ is sparse enough, the distance between $\Gama$ and all other solutions to the $P_0^\epsilon$ problem is bounded \cite{Elad_Book}. Similar stability claims can be formulated in terms of the mutual coherence also, by exploiting its relationship with the RIP property \cite{Elad_Book}. 

Success guarantees of practical algorithms, such as the Orthogonal Matching Pursuit (OMP) and the Basis Pursuit Denoising (BPDN), have also been derived under this regime. In the same spirit of the aforementioned stability results, the work in \cite{Donoho2006} showed that these methods recover a solution close to the true sparse vector as long as some sparsity constraint, relying on the mutual coherence of the dictionary, is met.

\begin{figure*}[t]
	\centering
	\includegraphics[trim = -40 60 40 30 ,width=0.9\textwidth]{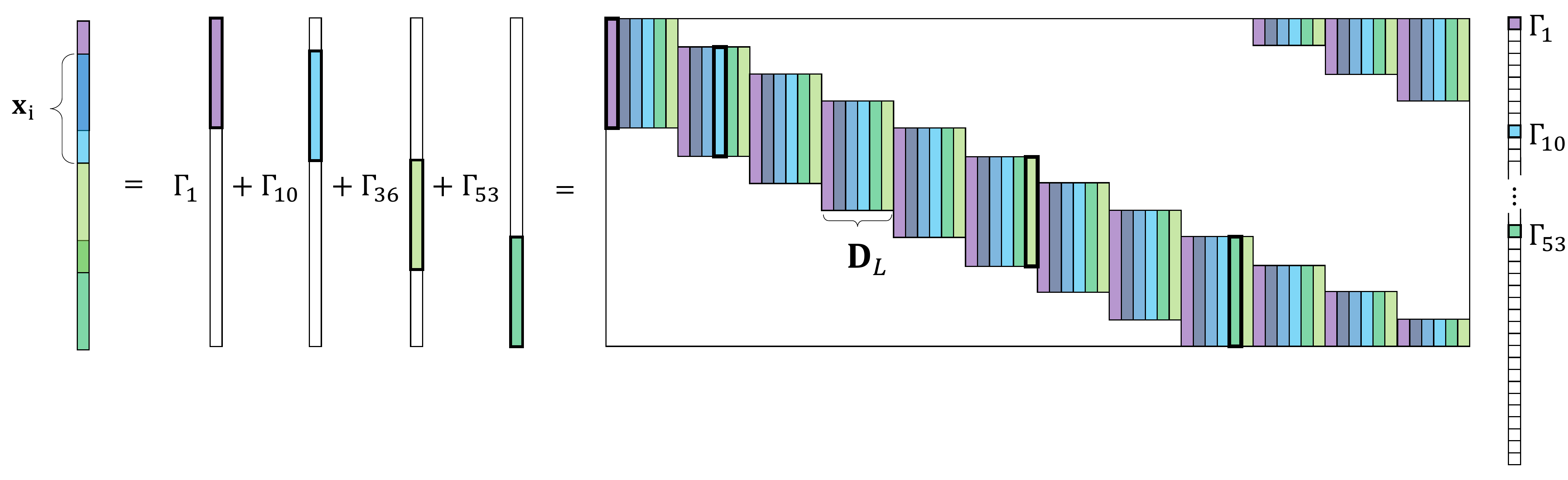}
	\caption{The convolutional model description, and its composition in terms of the local dictionary $\D_L$.}
	\label{GlobalDiagram}
\end{figure*}

Another useful property for analyzing the success of pursuit methods, initially proposed in \cite{Tropp2004}, is the Exact Recovery Condition (ERC). Formally, one says that the ERC is met for a support $\mathcal{T}$ with a constant $\theta$ whenever
\begin{equation}
	\theta =  1 - \underset{i \notin \mathcal{T}}{\max} \|\D^{\dagger}_{\mathcal{T}} \d_i \|_1 > 0,
\end{equation}
where we have denoted by $\D_\mathcal{T}^{\dagger}$ the Moore-Penrose pseudoinverse of the dictionary restricted to support $\mathcal{T}$, and $\d_i$ refers to the $i^{th}$ atom in $\D$. Assuming the above is satisfied, the stability of both the OMP and BP was proven in \cite{Tropp2006}. Moreover, in an effort to provide a more intuitive result, the ERC was shown to hold whenever the total number of non-zeros in $\mathcal{T}$ is less than a certain number, which is a function of the mutual coherence.

\section{The Convolutional Sparse Model}
\label{Sec:Conv_Model}

Consider now the global dictionary to be a concatenation of $m$ banded Circulant matrices\footnote{Each of these matrices is constructed by shifting a single column, supported on $n$ subsequent entries, to all possible shifts. This choice of Circulant matrices comes to alleviate boundary problems.}, where each such matrix has a band of width $n \ll N$. As such, by simple permutation of its columns, such a dictionary consists of all shifted versions of a \emph{local} dictionary $\D_L$ of size $n\times m$. This model is commonly known as Convolutional Sparse Representation \cite{Grosse2007,Bristow2014,Heide2015}. Hereafter, whenever we refer to the global dictionary $\D$, we assume it has this structure. Assume a signal $\X$ to be generated as $\D\Gama$. In Figure \ref{GlobalDiagram} we describe such a global signal, its corresponding dictionary that is of size $N \times mN$ and its sparse representation, of length $mN$. We note that $\Gama$ is built of $N$ distinct and independent sparse parts, each of length $m$, which we will refer to as the local sparse vectors $\alfa_i$.
 
Consider a sub-system of equations extracted from $\X=\D\Gama$ by multiplying this system by the patch extraction\footnote{Denoting by $\mathbf{0}_{( a\times b)}$ a zeros matrix of size $a \times b$, and $\mathbf{I}_{( n\times n )}$ an identity matrix of size $n \times n$, then $\R_i = \left[ \mathbf{0}_{( n\times (i-1) )},\mathbf{I}_{( n\times n )},\mathbf{0}_{( n \times (N-i-n+1) )}\right]$.} operator $\R_i \in \mathbb{R}^{n \times N}$. The resulting system is $\x_i$ = $\R_i \X = \R_i \D \Gama$, where $\x_i$ is a patch of length $n$ extracted from $\X$ from location $i$. Observe that in the set of extracted rows, $\R_i \D$, there are only $(2n-1)m$ columns that are non-trivially zero. Define the operator $\S_i\in \mathbb{R}^{(2n-1)m \times mN}$ as a columns' selection operator\footnote{An analogous definition can be written for this operator as well.}, such that $\R_i \D \S_i^T$ preserves all the non-zero columns in $\R_i \D$. Thus, the subset of equations we get is essentially 
\begin{equation} \label{eq:xyz}
\x_i = \R_i \X = \R_i \D \Gama = \R_i \D \S_i^T \S_i \Gama.
\end{equation}
\begin{defn}
Given a global sparse vector $\Gama$, define $\gama_i = \S_i \Gama$ as its $i^{th}$ stripe representation.
\end{defn}
\noindent
Note that a stripe $\gama_i$ can be also seen as a group of $2n-1$ adjacent local sparse vectors $\alfa_j$ of length $m$ from $\Gama$, centered at location $\alfa_i$.
\begin{defn}
Consider a convolutional dictionary $\D$ defined by a local dictionary $\D_L$ of size $n \times m$. Define the stripe dictionary $\O$ of size $n \times (2n - 1)m$, as the one obtained by extracting $n$ consecutive rows from $\D$, followed by the removal of its zero columns, namely $\O = \R_i \D \S_i^T$.
\end{defn}
\noindent
Observe that $\O$, depicted in Figure \ref{PartialStripe}, is independent of $i$, being the same for all locations due to the union-of-Circulant-matrices structure of $\D$. In other words, the shift invariant property is satisfied for this model -- all patches share the same stripe dictionary in their construction. Armed with the above two definitions, Equation \eqref{eq:xyz} simply reads $\x_i = \O \gama_i$. 


From a different perspective, one can synthesize the signal $\X$ by considering $\D$ as a concatenation of $N$ vertical stripes of size $N\times m$ (see Figure \ref{GlobalDiagram}), where each can be represented as $\R_i^T \D_L$. In other words, the vertical stripe is constructed by taking the small and local dictionary $\D_L$ and positioning it in the $i^{th}$ row. The same partitioning applies to $\Gama$, leading to the $\alfa_i$ ingredients. Thus,  
\begin{equation}
\X = \sum_i \R_i^T \D_L \alfa_i.
\end{equation}
Since $\alfa_i$ play the role of local sparse vectors, $\D_L \alfa_i$ are reconstructed patches (which are not the same as $\x_i = \O \gama_i$), and the sum above proposes a patch averaging approach as practiced in several works \cite{Aharon2006,Zoran2011,Sulam2015}. This formulation provides another local interpretation of the convolutional model.

Yet a third interpretation of the very same signal construction can be suggested, in which the signal is seen as resulting from a sum of local/small atoms which appear in a small number of locations throughout the signal. This can be formally expressed as 
\begin{equation}
\X = \sum_{i=1}^m \d_i \ast \z_i, 
\end{equation}
where the vectors $\z_i\in \mathbb{R}^N$ are sparse maps encoding the location and coefficients convolved with the $i^{th}$ atom \cite{Grosse2007}. In our context, $\Gama$ is simply the interlaced concatenation of all $\z_i$.

This model (adopting the last convolutional interpretation) has received growing attention in recent years in various applications. In \cite{Morup2008} a convolutional sparse coding framework was used for pattern detection in images and the analysis of instruments in music signals, while in \cite{Zhu2015} it was used for the reconstruction of 3D trajectories. The problem of learning the local dictionary $\D_L$ was also studied in several works \cite{Zeiler2010,Kavukcuoglu2010,Bristow2014,Wohlberg2014,Huang2015}.
Different methods have been proposed for solving the convolutional sparse coding problem under an $\ell_1$-norm penalty. Commonly, these methods rely on the ADMM algorithm \cite{Boyd2011}, exploiting multiplications of vectors by the global dictionary in the Fourier domain in order to reduce the computational cost involved \cite{Bristow2014}. An alternative is the deployment of greedy algorithms of the Matching Pursuit family \cite{Mallat1993}, which suggest an $\ell_0$ constraint on the global sparse vector. The reader is referred to the work of \cite{Wohlberg2014} and references therein for a thorough discussion on these methods.
In essence, all the above works are solutions to the minimization of a global pursuit under the convolutional structure. As a result, the theoretical results in our work will apply to these methods, providing guarantees for the recovery of the underlying sparse vectors.

\begin{figure}[t]
	\centering
	\includegraphics[trim = 0 50 0 20,width=0.5\textwidth]{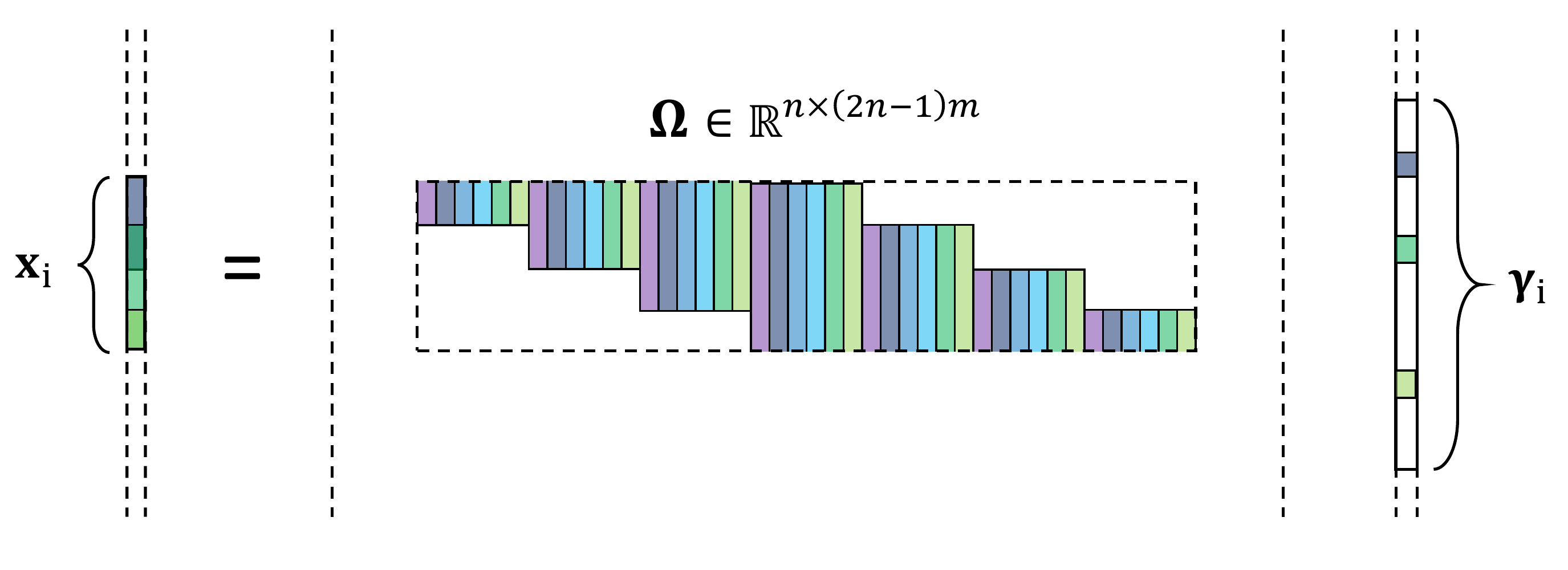}
	\caption{Stripe Dictionary}
	\label{PartialStripe}
	\vspace{-0.3cm}
\end{figure}
\section{From Global to Local Analysis}
\label{Sec:Global2Local}

Consider a sparse vector $\Gama$ of size $mN$ which represents a global (convolutional) signal. Assume further that this vector has a few $k \ll N$ non-zeros. If these were to be clustered together in a given stripe $\gama_i$, the local patch corresponding to this stripe would be very complex, and pursuit methods would likely fail in recovering it. On the contrary, if these $k$ non-zeros are spread all throughout the vector $\Gama$, this would clearly imply much simpler local patches, facilitating their successful recovery. This simple example comes to show the futility of the traditional global $\ell_0$-norm in assessing the success of convolutional pursuits, and it will be the pillar of our intuition throughout our work.

\subsection{The $\ell_{0,\infty}$ Norm and the $P_{0,\infty}$ Problem}

Let us now introduce a measure that will provide a local notion of sparsity within a global sparse vector.
\begin{defn}
Define the $\Loi$ pseudo-norm of a global sparse vector $\Gama$ as
\begin{equation}
\|\Gama\|_{0,\infty} = \max_i \|\gama_i\|_0.
\end{equation}
\end{defn}
\noindent
In words, this quantifies the number of non-zeros in the densest stripe $\gama_i$ of the global $\Gama$. This is equivalent to extracting all stripes from the global sparse vector $\Gama$, arranging them column-wise into a matrix $\A$ and applying the usual $\|\A\|_{0,\infty}$ norm -- thus, the name. By constraining the $\Loi$ norm to be low, we are essentially limiting all stripes $\gama_i$ to be sparse, and their corresponding patches $\R_i\X$ to have a sparse representation under a shift-invariant local dictionary $\O$. This is one of the underlying assumptions in many signal and image processing algorithms. As for properties of this norm, similar to $\ell_0$ case, in the $\Loi$ the non-negativity and triangle inequality properties hold, while homogeneity does not.

Armed with the above definition, we now move to define the $\Poi$ problem:
\begin{equation}
(\Poi): \quad \min_\Gama \quad \|\Gama\|_{0,\infty} \ \text{ s.t. }\ \D\Gama=\X.
\end{equation}
When dealing with a global signal, instead of solving the $P_0$ problem (defined in Equation \eqref{Eq:P0problem}) as is commonly done, we aim to solve the above defined objective instead. The key difference is that we are not limiting the overall number of zeros in $\Gama$, but rather putting a restriction on its local density. 

\subsection{Global versus Local Bounds}
\label{GlobalVsLocal}
As mentioned previously, theoretical bounds are often given in terms of the mutual coherence of the dictionary. In this respect, a lower bound on this value is much desired. In the case of the convolution sparse model, this value quantifies not only the correlation between the atoms in $\D_L$, but also the correlation between their shifts. Though in a different context, a bound for this value was derived in \cite{Welch1974}, and it is given by
\begin{equation} \label{eq:mu_bound}
\mu(\D)\geq\sqrt{\frac{m-1}{m(2n-1)-1}}.
\end{equation}
For a large value of $m$, one obtains that the best possible coherence is \mbox{$\mu(\D)\approx \frac{1}{\sqrt{2n}}$}. 
This implies that if we are to apply BP or OMP to recover the sparsest $\Gama$ that represents $\X$, the classical sparse approximation results \cite{Bruckstein2009} would allow merely $O(\sqrt{n})$ non-zeros in \textbf{all} $\Gama$, for any $N$, no matter how long $\X$ is!
As we shall see next, the situation is not as grave as it may seem, due to our migration from $P_0$ to $\Poi$.
Leveraging the previous definitions, we will provide global recovery guarantees that will have a local flavor, and the bounds will be given in terms of the number of non-zeros in the densest stripe. This way, we will show that the guarantee conditions can be significantly enhanced to $O(\sqrt{n})$ non-zeros \emph{locally} rather than \emph{globally}.

\vspace{-0.2cm}
\section{Theoretical Study of Ideal Signals}
\label{Sec:TheoStudy}
As motivated in the previous section, the concerns of uniqueness, recovery guarantees and stability of sparse solutions in the convolutional case require special attention. We now formally address these questions by following the path taken in \cite{Elad_Book}, carefully generalizing each and every statement to the global-local model discussed here.

Before proceeding onto theoretical grounds, we briefly summarize, for the convenience of the reader, all notations used throughout this work in Table \ref{Table:Notations}. Note the somewhat unorthodox choice of capital letters for global vectors and lowercase for local ones.

\vspace{0.5cm}

\begin{table}[t] \centering
\begin{tabular}{|p{1.85cm}@{:\quad}l|} \hline
$N$ & length of the global signal. \\ \hline
$n$ & size of a local atom or a local signal patch. \\ \hline
$m$ & \pbox{20cm}{number of unique local atoms (filters) or the number\\ of Circulant matrices.} \\[0.2cm] \hline
$\X$, $\Y$ and $\E$ & \pbox{20cm}{global signals of length $N$, where generally\\ $\Y=\X+\E$.} \\ \hline
$\D$ & global dictionary of size $N\times mN$. \\ \hline
$\Gama$ and $\Delt$ & global sparse vectors of length $mN$. \\ \hline
$\Gamma_i$ and $\Delta_i$ & the $i^{th}$ entry in $\Gama$ and $\Delt$, respectively. \\ \hline
$\D_L$ & local dictionary of size $n\times m$. \\ \hline
$\O$ & \pbox{20cm}{stripe dictionary of size $n\times(2n-1)m$, which \\ contains all possible shifts of $\D_L$.} \\ \hline
$\alfa_i$ & local sparse code of size $m$. \\ \hline
$\gama_i$ and $\delt_i$ & \pbox{20cm}{a stripe of length $(2n-1)m$ extracted \\ from the global vectors $\Gama$ and $\Delt$, respectively.} \\ \hline
$\gama_{i,s}$ and $\delt_{i,s}$ & \pbox{20cm}{a local sparse vector of length $m$ which corresponds \\ to the $s^{th}$ portion inside $\gama_i$ and $\delt_i$, respectively.} \\ \hline
\end{tabular}
\caption{Summary of notations used throughout the paper.}
\label{Table:Notations}
\vspace{-0.4cm}
\end{table}

\vspace{-0.6cm}
\subsection{Uniqueness and Stripe-Spark}

Just as it was initially done in the general sparse model, one might ponder about the uniqueness of the sparsest representation in terms of the $\Loi$ norm. More precisely, does a unique solution to the $\Poi$ problem exist? and under which circumstances? In order to answer these questions we shall first extend our mathematical tools, in particular the characterization of the dictionary, to the convolutional scenario.

In Section \ref{Sec:Preliminaries} we recalled the definition of the Spark of a general dictionary $\D$. In the same spirit, we propose the following:
\begin{defn}
Define the Stripe-Spark of a convolutional dictionary\ $\D$ as
\begin{equation}
\sigma_\infty(\D)= \min_\Delt \quad \|\Delt\|_{0,\infty} \ \text{ s.t. }\  \Delt\neq 0,\ \D\Delt=0.
\end{equation}
\end{defn}
\noindent
In words, the Stripe-Spark is defined by the sparsest non-zero vector, in terms of the $\Loi$ norm, in the null space of $\D$. 
Next, we use this definition in order to formulate an uncertainty and a uniqueness principle for the $\Poi$ problem that emerges from it. The proof of this and the following theorems are described in detail in the Supplementary Material.

\begin{thm}{(Uncertainty and uniqueness using Stripe-Spark):}
	Let $\D$ be a convolutional dictionary. If a solution $\Gama$ obeys \mbox{$\|\Gama\|_{0,\infty}<\frac{1}{2}\sigma_\infty$}, then this is necessarily the global optimum for the $\Poi$ problem for the signal $\D\Gama$.
\end{thm}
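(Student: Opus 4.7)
The plan is to mimic the classical Spark-based uniqueness argument almost verbatim, substituting the $\Loi$ norm for $\ell_0$ and the Stripe-Spark for the ordinary Spark. The only structural ingredient I need beyond the definitions is the triangle inequality for $\Loi$, which the paper has already asserted; I would verify it in one line by noting that for every stripe index $i$, $\|(\Gama_1-\Gama_2)_i\|_0 \le \|(\Gama_1)_i\|_0 + \|(\Gama_2)_i\|_0$, where $(\cdot)_i$ denotes $\S_i(\cdot)$, and then taking the maximum over $i$.

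With that in hand, the uniqueness claim proceeds by contradiction. Suppose $\Gama$ satisfies $\|\Gama\|_{0,\infty} < \tfrac{1}{2}\sigma_\infty(\D)$ and that there exists a different feasible point $\Gama'\neq\Gama$ with $\D\Gama' = \D\Gama$ and $\|\Gama'\|_{0,\infty} \le \|\Gama\|_{0,\infty}$. Set $\Delt = \Gama - \Gama'$. Then $\Delt \neq \mathbf{0}$ and $\D\Delt = \mathbf{0}$, so by the definition of the Stripe-Spark, $\|\Delt\|_{0,\infty} \ge \sigma_\infty(\D)$. On the other hand, the triangle inequality gives
\begin{equation}
\|\Delt\|_{0,\infty} \le \|\Gama\|_{0,\infty} + \|\Gama'\|_{0,\infty} \le 2\|\Gama\|_{0,\infty} < \sigma_\infty(\D),
\end{equation}
contradicting the previous bound. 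Hence no such $\Gama'$ exists, and $\Gama$ is the (unique) global minimum of $\Poi$ for the signal $\D\Gama$.

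The same inequality establishes the companion uncertainty principle: for any two distinct representations $\Gama_1,\Gama_2$ of the same signal, their difference lies in the null space of $\D$, so that $\|\Gama_1\|_{0,\infty} + \|\Gama_2\|_{0,\infty} \ge \sigma_\infty(\D)$.

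I do not expect any serious obstacle here: the argument is a direct transliteration of the classical proof, and everything hinges on two easy facts, namely the sub-additivity of $\Loi$ under addition and the definition of $\sigma_\infty$. The only place where one must be slightly careful is in verifying the triangle inequality stripe by stripe, since $\Loi$ is defined as a maximum over stripes rather than as a single count, but this is immediate from the corresponding inequality for $\ell_0$ applied to each $\gama_i$ followed by taking maxima on both sides.
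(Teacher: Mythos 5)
Your proof is correct and follows essentially the same route as the paper's: both invoke the definition of the Stripe-Spark on the difference vector in the null space of $\D$ and combine it with the triangle inequality for the $\Loi$ norm; you merely package the argument as a contradiction while the paper states the uncertainty principle first and then concludes. No gaps.
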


\subsection{Lower Bounding the Stripe-Spark}

In general, and similar to the Spark, calculating the Stripe-Spark is computationally intractable. Nevertheless, one can bound its value using the global mutual coherence defined in Section \ref{Sec:Preliminaries}. Before presenting such bound, we formulate and prove a Lemma that will aid our analysis throughout this paper.

\begin{lemma}{}
\label{lemma:GershgorinConvolutional}
Consider a convolutional dictionary $\D$, with mutual coherence $\mu(\D)$, and a support $\mathcal{T}$ with $\Loi$ norm\footnote{Note that specifying the $\Loi$ of a support rather than a sparse vector is a slight abuse of notation, that we will nevertheless use for the sake of simplicity.} equal to $k$. Let $\G^\mathcal{T} = \D_\mathcal{T}^T \D_\mathcal{T}$, where $\D_\mathcal{T}$ is the matrix $\D$ restricted to the columns indicated by the support $\mathcal{T}$. Then, the eigenvalues of this Gram matrix, given by $\lambda_i \left(\G^\mathcal{T}\right)$, are bounded by
\begin{equation}
1-(k-1)\mu(\D) \ \leq \ \lambda_i \left(\G^\mathcal{T}\right) \ \leq \ 1+(k-1)\mu(\D).
\end{equation}
\end{lemma}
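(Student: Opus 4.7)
The plan is a straightforward application of Gershgorin's disc theorem to $\G^\mathcal{T}$, but with a careful combinatorial argument that exploits the convolutional structure to replace the global support size by the local one $k$.

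First I would note that because the atoms are unit-norm, every diagonal entry of $\G^\mathcal{T} = \D_\mathcal{T}^T \D_\mathcal{T}$ equals $1$, and every off-diagonal entry $\d_i^T \d_j$ with $i \neq j$ is bounded in absolute value by $\mu(\D)$. Gershgorin's theorem then gives, for every eigenvalue $\lambda_i(\G^\mathcal{T})$,
\begin{equation}
|\lambda_i(\G^\mathcal{T}) - 1| \ \leq \ \max_{p \in \mathcal{T}} \sum_{q \in \mathcal{T},\, q \neq p} |\d_p^T \d_q|.
\end{equation}
The whole game is to control the row-sum on the right.

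The key step, where the convolutional structure enters, is the following observation: because every atom in $\D$ is supported on only $n$ consecutive entries, $\d_p^T \d_q = 0$ whenever the shifts of $\d_p$ and $\d_q$ do not overlap. Concretely, the atoms $q$ that can possibly have $\d_p^T \d_q \neq 0$ are exactly those whose index lies in the stripe of $\D$ centered at $p$, i.e.\ within the $(2n-1)m$ columns captured by $\S_p$. This stripe corresponds precisely to $\gama_p$, so the number of such $q \in \mathcal{T}$ is at most $\|\gama_p\|_0 \leq \|\Gama_\mathcal{T}\|_{0,\infty} = k$ (with $\Gama_\mathcal{T}$ the indicator vector of $\mathcal{T}$). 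Excluding $p$ itself leaves at most $k-1$ non-zero off-diagonal terms in row $p$, each of magnitude at most $\mu(\D)$, so the row-sum is bounded by $(k-1)\mu(\D)$.

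Combining these two steps yields $|\lambda_i(\G^\mathcal{T}) - 1| \leq (k-1)\mu(\D)$, which is exactly the claimed two-sided bound. The only mildly subtle point — and really the heart of the argument — is the second step: recognizing that in the convolutional setting the relevant ``interaction neighborhood'' of atom $p$ is precisely the stripe indexed by $\S_p$, so it is the $\Loi$ norm of the support, rather than its cardinality, that controls the Gershgorin radius. Everything else is a direct transcription of the classical mutual-coherence Gershgorin bound.
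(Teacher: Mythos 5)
Your proof is correct and follows essentially the same route as the paper's: Gershgorin's theorem applied to the Gram matrix with unit diagonal, combined with the observation that only atoms overlapping $\d_p$ (i.e., those indexed within the stripe around $p$) contribute non-zero off-diagonal terms, so the Gershgorin radius is controlled by the $\Loi$ norm $k$ of the support rather than its total cardinality. Your identification of the interaction neighborhood with the stripe extracted by $\S_p$ makes the counting argument slightly more explicit than the paper's, but the substance is identical.
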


\begin{proof}
	From Gerschgorin's theorem, the eigenvalues of the Gram matrix $\G^\mathcal{T}$ reside in the union of its Gerschgorin circles. The $j^{th}$ circle, corresponding to the $j^{th}$ row of $\G^\mathcal{T}$, is centered at the point $\G^{\mathcal{T}}_{j,j}$ (belonging to the Gram's diagonal) and its radius equals the sum of the absolute values of the off-diagonal entries; i.e., $\sum_{i,i\neq j} |\G^{\mathcal{T}}_{j,i}|$. Notice that both indices $i,j$ correspond to atoms in the support $\mathcal{T}$. Because the atoms are normalized, $\forall\ j,\ \G^{\mathcal{T}}_{j,j} = 1$, implying that all Gershgorin disks are centered at $1$. Therefore, all eigenvalues reside inside the circle with the largest radius. Formally,
	\begin{equation} \label{eq:max_radius}
	\big| \lambda_i\left(\G^\mathcal{T}\right) - 1 \big| \leq \max_j \sum_{i,i\neq j}  \big| \G^{\mathcal{T}}_{j,i} \big| = \max_j \sum_{\substack{i,i\neq j \\ i,j \in \mathcal{T}}} | \d_j^T\d_i \big|.
	\end{equation}
	On the one hand, from the definition of the mutual coherence, the inner product between atoms that are close enough to overlap is bounded by $\mu(\D)$. On the other hand, the product $\d_j^T\d_i$ is zero for atoms $\d_i$ too far from $\d_j$ (i.e., out of the stripe centered at the $j^{th}$ atom). Therefore, we obtain:
	\begin{equation}
	\sum_{\substack{i,i\neq j \\ i,j \in \mathcal{T}}} | \d_j^T\d_i | \leq (k - 1)\ \mu(\D),
	\end{equation}
	where $k$ is the maximal number of non-zero elements in a stripe, defined previously as the $\Loi$ norm of $\mathcal{T}$. Note that we have subtracted $1$ from $k$ because we must omit the entry on the diagonal. Putting this back in Equation \eqref{eq:max_radius}, we obtain
	\begin{equation}
	\big| \lambda_i\left(\G^\mathcal{T}\right) - 1 \big| \leq \max_j \sum_{\substack{i,i\neq j \\ i,j \in \mathcal{T}}} | \d_j^T\d_i \big| \leq\ (k - 1)\ \mu(\D).
	\end{equation}
	From this we obtain the desired claim.
\end{proof}

We now dive into the next theorem, whose proof relies on the above Lemma.
\begin{thm}{(Lower bounding the Stripe-Spark via the mutual coherence):}
	For a convolutional dictionary $\D$ with mutual coherence $\mu(\D)$, the Stripe-Spark can be lower-bounded by
	\begin{equation}
	\sigma_\infty(\D)\geq 1+\frac{1}{\mu(\D)}.
	\end{equation}
\end{thm}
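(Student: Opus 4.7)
The plan is to leverage Lemma \ref{lemma:GershgorinConvolutional} directly: any non-zero vector in the null space of $\D$ forces the associated Gram matrix to be singular, and singularity combined with the Gershgorin bound of the Lemma will force the $\Loi$ norm of the vector to be large.

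More concretely, let $\Delt$ be any non-zero vector with $\D\Delt = 0$, let $\mathcal{T}$ be its support, and let $k = \|\Delt\|_{0,\infty}$. Restricting attention to $\mathcal{T}$, we have $\D_\mathcal{T} \Delt_\mathcal{T} = 0$ with $\Delt_\mathcal{T} \neq 0$, so the columns of $\D_\mathcal{T}$ are linearly dependent. Consequently, the Gram matrix $\G^\mathcal{T} = \D_\mathcal{T}^T \D_\mathcal{T}$ is singular, and therefore has $0$ as one of its eigenvalues.

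Now I would apply Lemma \ref{lemma:GershgorinConvolutional} to the support $\mathcal{T}$, which has $\Loi$ norm equal to $k$. The Lemma states that every eigenvalue of $\G^\mathcal{T}$ lies in the interval $[1-(k-1)\mu(\D),\, 1+(k-1)\mu(\D)]$. Since $0$ is an eigenvalue, this forces
\begin{equation}
1 - (k-1)\mu(\D) \leq 0,
\end{equation}
i.e., $k \geq 1 + 1/\mu(\D)$. Taking the infimum over all admissible $\Delt$ yields $\sigma_\infty(\D) \geq 1 + 1/\mu(\D)$, which is the desired bound.

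I do not anticipate any real obstacle in this argument, since the technical work has already been encapsulated in Lemma \ref{lemma:GershgorinConvolutional}; the only subtlety worth double-checking is that the Lemma applies unchanged when the support $\mathcal{T}$ is taken to be exactly the support of a null-space vector (rather than the support of a solution to some pursuit), and that the singularity of $\D_\mathcal{T}$ is correctly passed to $\G^\mathcal{T}$ via the identity $\|\D_\mathcal{T} \v\|_2^2 = \v^T \G^\mathcal{T} \v$. Both are immediate, so the proof reduces to invoking the Lemma and reading off the inequality.
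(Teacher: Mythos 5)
Your proof is correct and follows essentially the same route as the paper's own argument: restrict $\D$ to the support of a null-space vector, observe that the resulting Gram matrix $\G^\mathcal{T}$ must be singular, and invoke Lemma \ref{lemma:GershgorinConvolutional} to force $1-(k-1)\mu(\D)\leq 0$. No gaps to report.
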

\noindent
Using the above derived bound and the uniqueness based on the Stripe-Spark we can now formulate the following theorem:
\begin{thm}{(Uniqueness using mutual coherence):}
	Let $\D$ be a convolutional dictionary with mutual coherence $\mu(\D)$. If a solution $\Gama$ obeys $\|\Gama\|_{0,\infty}<\frac{1}{2}(1+\frac{1}{\mu(\D)})$, then this is necessarily the sparsest (in terms of $\Loi$ norm) solution to $\Poi$ with the signal $\D\Gama$.
\end{thm}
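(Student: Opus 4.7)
The plan is to derive this theorem as a direct corollary by chaining the two preceding results. Specifically, the Uncertainty/Uniqueness theorem using the Stripe-Spark guarantees that if $\|\Gama\|_{0,\infty} < \tfrac{1}{2}\sigma_\infty(\D)$, then $\Gama$ is the unique global minimizer of the $\Poi$ problem for the signal $\D\Gama$. The second ingredient is the lower bound $\sigma_\infty(\D) \geq 1 + 1/\mu(\D)$, which we already have in hand.

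First, I would start from the hypothesis $\|\Gama\|_{0,\infty} < \tfrac{1}{2}\bigl(1 + 1/\mu(\D)\bigr)$. Then I would invoke the mutual-coherence bound on the Stripe-Spark to conclude
\begin{equation}
\|\Gama\|_{0,\infty} < \tfrac{1}{2}\Bigl(1+\tfrac{1}{\mu(\D)}\Bigr) \leq \tfrac{1}{2}\sigma_\infty(\D).
\end{equation}
At this point, the hypothesis of the Stripe-Spark uniqueness theorem is met, so applying it immediately yields that $\Gama$ is the sparsest solution, in the $\Loi$ sense, of $\D\Gama = \X$. That finishes the argument.

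There is essentially no obstacle here: all the heavy lifting has been carried out in the earlier Gershgorin-based Lemma and in the two theorems it supports. The only thing to be mindful of, stylistically, is to phrase the conclusion in terms of the $\Poi$ problem exactly as stated (i.e., a minimizer of $\|\cdot\|_{0,\infty}$ subject to $\D\Delt = \D\Gama$), so the reader sees this as a practical, coherence-based counterpart to the Stripe-Spark criterion, analogous to the classical $\tfrac{1}{2}(1+1/\mu(\D))$ bound for the ordinary $P_0$ problem.
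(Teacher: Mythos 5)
Your proposal is correct and is exactly the paper's argument: the theorem is stated as an immediate corollary obtained by chaining the Stripe-Spark uniqueness theorem with the bound $\sigma_\infty(\D)\geq 1+1/\mu(\D)$. No further comment is needed.
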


At the end of Section \ref{Sec:Global2Local} we mentioned that for $m\gg 1$, the classical analysis would allow an order of $O(\sqrt{n})$ non-zeros all over the vector $\Gama$, regardless of the length of the signal $N$. In light of the above theorem, in the convolutional case, the very same quantity of non-zeros is allowed locally per stripe, implying that the overall number of non-zeros in $\Gama$ grows linearly with the global dimension $N$.

\subsection{Recovery Guarantees for Pursuit Methods}
\label{Sec:BPnoiseless}

In this subsection, we attempt to solve the $\Poi$ problem by employing two common, but very different, pursuit methods: the Orthogonal Matching Pursuit (OMP) and the Basis Pursuit (BP) -- the reader is referred to \cite{Elad_Book} for a detailed description of these formulations and respective algorithms. Leaving aside the computational burdens of running such algorithms, which will be addressed in the second part of this work, we now consider the theoretical aspects of their success.

Previous works \cite{Donoho2003,Tropp2004} have shown that both OMP and BP succeed in finding the sparsest solution to the $P_0$ problem if the cardinality of the representation is known a priori to be lower than $\frac{1}{2}(1+\frac{1}{\mu(\D)})$. That is, we are guaranteed to recover the underlying solution as long as the \emph{global sparsity} is less than a certain threshold. In light of the discussion in Section \ref{GlobalVsLocal}, these values are pessimistic in the convolutional setting. By migrating from $P_0$ to the $\Poi$ problem, we show next that both algorithms are in fact capable of recovering the underlying solutions under far weaker assumptions.

\begin{thm}{(Global OMP recovery guarantee using $\Loi$ norm):} \label{thm:OMPSuccess}
	Given the system of linear equations $\X = \D\Gama$, if a solution $\Gama$ exists satisfying
	\begin{equation} \label{eq:Loi_condition}
	\|\Gama\|_{0,\infty} < \ \frac{1}{2}\left(1+\frac{1}{\mu(\D)}\right),
	\end{equation}
	then OMP is guaranteed to recover it.
\end{thm}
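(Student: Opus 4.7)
The plan is to follow the classical OMP correctness proof via Tropp's Exact Recovery Condition (ERC) argument, but to rework the coherence-based bound on $\|\D_\mathcal{T}^{\dagger}\d_j\|_1$ so that the quantity controlling the argument is the \emph{local} sparsity $k_\infty=\|\Gama\|_{0,\infty}$ rather than the global cardinality $|\mathcal{T}|$. Let $\mathcal{T}=\mathrm{supp}(\Gama)$. I proceed by induction on the iteration index $t$, with the hypothesis that after $t$ iterations the OMP-chosen support $\mathcal{T}_t$ is contained in $\mathcal{T}$. Under this hypothesis the residual has the form $\r^t=\D_\mathcal{T}\Delt^t$ for some vector $\Delt^t$ supported inside $\mathcal{T}$, so the inductive step reduces to showing $\max_{j\notin\mathcal{T}}|\d_j^T\r^t|<\max_{i\in\mathcal{T}}|\d_i^T\r^t|$, which forces OMP to select an atom from $\mathcal{T}$ at step $t+1$.

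Using $\Delt^t_\mathcal{T}=(\G^\mathcal{T})^{-1}\D_\mathcal{T}^T\r^t$, one obtains for $j\notin\mathcal{T}$ the identity $\d_j^T\r^t=(\D_\mathcal{T}^{\dagger}\d_j)^T\D_\mathcal{T}^T\r^t$, whence $|\d_j^T\r^t|\le \|\D_\mathcal{T}^{\dagger}\d_j\|_1\cdot \max_{i\in\mathcal{T}}|\d_i^T\r^t|$. It therefore suffices to show the ERC-type estimate $\|\D_\mathcal{T}^{\dagger}\d_j\|_1<1$ for every $j\notin\mathcal{T}$. Setting $\u=\D_\mathcal{T}^{\dagger}\d_j$ and $\mathbf{v}=\D_\mathcal{T}^T\d_j$, the identity $\G^\mathcal{T}\u=\mathbf{v}$ rearranges to $\u=\mathbf{v}-(\G^\mathcal{T}-I)\u$, so that $\|\u\|_1\le \|\mathbf{v}\|_1+\|(\G^\mathcal{T}-I)\u\|_1$. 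Here the convolutional structure is exploited: since $\d_j$ overlaps only atoms within a stripe of width $2n-1$ around its location, and $\mathcal{T}$ contains at most $k_\infty$ atoms in any such stripe, $\mathbf{v}$ has at most $k_\infty$ nonzero entries, each of magnitude at most $\mu(\D)$, giving $\|\mathbf{v}\|_1\le k_\infty\mu(\D)$. Likewise, for each $j'\in\mathcal{T}$ the $j'$-th column of $\G^\mathcal{T}-I$ has at most $k_\infty-1$ nonzero entries bounded by $\mu(\D)$, which by a column-sum argument yields $\|(\G^\mathcal{T}-I)\u\|_1\le(k_\infty-1)\mu(\D)\|\u\|_1$. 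Combining gives $\|\u\|_1\le k_\infty\mu(\D)/(1-(k_\infty-1)\mu(\D))$, which is strictly less than $1$ precisely when $(2k_\infty-1)\mu(\D)<1$, i.e.\ under hypothesis \eqref{eq:Loi_condition}. The inductive step is complete, and after at most $|\mathcal{T}|$ iterations OMP exhausts the true support and recovers $\Gama$ exactly via the final least-squares step.

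The main obstacle is obtaining this $\ell_1$ bound on $\D_\mathcal{T}^{\dagger}\d_j$: Tropp's classical estimate reads $|\mathcal{T}|\mu(\D)/(1-(|\mathcal{T}|-1)\mu(\D))$, which is vacuous in the convolutional regime because $|\mathcal{T}|$ may scale with $N$. The decisive observation is that the banded-Circulant structure of $\D$ confines the supports of $\mathbf{v}$ and of each column of $\G^\mathcal{T}-I$ to a single stripe, so the row/column sums appearing in Tropp's argument are governed by $k_\infty$ rather than by $|\mathcal{T}|$. This is exactly the stripe-wise Gershgorin counting that drives Lemma~\ref{lemma:GershgorinConvolutional}, now re-applied to control an $\ell_1$ norm instead of a spectral quantity.
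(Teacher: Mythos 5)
Your proof is correct, but it takes a genuinely different route from the paper's own argument for this theorem. The paper proves Theorem~\ref{thm:OMPSuccess} directly: it isolates the largest-magnitude coefficient $\Gamma_i$, lower-bounds $|\d_i^T\X|$ and upper-bounds $\max_{j\notin\mathcal{T}}|\d_j^T\X|$ by restricting each correlation sum to the stripe $\mathcal{T}_{p(\cdot)}$ and invoking $\mu(\D)$, and then handles later iterations by noting that the residual's representation has support contained in $\mathcal{T}$, hence $\Loi$ norm no larger than $\|\Gama\|_{0,\infty}$. You instead establish the stripe-localized ERC bound $\|\D_\mathcal{T}^{\dagger}\d_j\|_1 \le k_\infty\mu(\D)/(1-(k_\infty-1)\mu(\D)) < 1$ and run the standard Tropp induction; your Neumann-type rearrangement $\u=\mathbf{v}-(\G^\mathcal{T}-I)\u$ with stripe-wise column counting is equivalent to the Ahlberg--Nilson--Varah estimate the paper uses when it proves exactly this ERC statement later, as Theorem~\ref{Theorem:ERC_Loi}, in the noisy setting. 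Your route buys a cleaner induction (the ERC depends only on the fixed support $\mathcal{T}$, so nothing about the residual's coefficients needs to be tracked) and a single condition that simultaneously yields the BP guarantee and the BPDN stability results via Tropp's framework; the paper's direct route is more elementary and is the one that extends naturally to the sharper shifted-mutual-coherence version (Theorem~\ref{thm:OMPSuccess_stripeCoherence}), where the per-shift weights $\mu_s$ must be kept separate inside the correlation sums rather than collapsed into a single worst-case $\mu(\D)$.
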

Note that if we assume $\|\Gama\|_{0,\infty} < \ \frac{1}{2}\left(1+\frac{1}{\mu(\D)}\right)$, according to our uniqueness theorem, the solution obtained by the OMP is the unique solution to the $\Poi$ problem. Interestingly, under the same conditions the BP algorithm is guaranteed to succeed as well. 
\begin{thm}{(Global Basis Pursuit recovery guarantee using the $\Loi$ norm):} \label{thm:BPSuccess}
	For the system of linear equations $\D\Gama=\X$, if a solution $\Gama$ exists obeying
	\begin{equation}
		\|\Gama\|_{0,\infty}<\frac{1}{2}\left(1+\frac{1}{\mu(\D)}\right),
	\end{equation}
	then Basis Pursuit is guaranteed to recover it.
\end{thm}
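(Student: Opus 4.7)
The plan is to mimic the classical Exact Recovery Condition (ERC) argument for Basis Pursuit \cite{Tropp2004}, but to replace every occurrence of the global cardinality $|\mathcal{T}|$ in the resulting coherence bounds by the local quantity $\|\Gama\|_{0,\infty}$, by systematically exploiting the stripe structure of the convolutional dictionary. Concretely, I would let $\mathcal{T}$ denote the support of $\Gama$ and, assuming for contradiction a non-zero perturbation $\Delt$ with $\D\Delt=0$ and $\|\Gama+\Delt\|_1\leq\|\Gama\|_1$, split $\Delt=\Delt_\mathcal{T}+\Delt_{\overline{\mathcal{T}}}$ and use the reverse triangle inequality to obtain $\|\Delt_{\overline{\mathcal{T}}}\|_1\leq\|\Delt_\mathcal{T}\|_1$. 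The feasibility constraint $\DT\Delt_\mathcal{T}=-\DTB\Delt_{\overline{\mathcal{T}}}$, multiplied by $\DT^{\dagger}$ (which requires $\DT^T\DT$ to be invertible), will yield $\|\Delt_\mathcal{T}\|_1\leq\theta'\|\Delt_{\overline{\mathcal{T}}}\|_1$ with $\theta'=\max_{i\notin\mathcal{T}}\|\DT^{\dagger}\d_i\|_1$. Together these two inequalities force $\Delt=0$ whenever $\theta'<1$, so the whole proof reduces to showing that $\theta'<1$.

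The second step is to bound $\theta'$ using two local estimates. Since $k=\|\Gama\|_{0,\infty}<\frac{1}{2}(1+1/\mu(\D))$, we have $(k-1)\mu(\D)<1$, so by Lemma \ref{lemma:GershgorinConvolutional} the Gram matrix $\G^\mathcal{T}=\DT^T\DT$ is positive-definite. Writing $\G^\mathcal{T}=I+E$ with $E$ hollow and expanding $(\G^\mathcal{T})^{-1}$ as a Neumann series, the row-sum estimate from the same Gershgorin argument will bound the induced $\ell_1$ norm $\|(\G^\mathcal{T})^{-1}\|_{1,1}$ — equal to $\|(\G^\mathcal{T})^{-1}\|_{\infty,\infty}$ by symmetry — by $1/(1-(k-1)\mu(\D))$. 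For the numerator of $\theta'$, I would observe that for any atom $\d_i$ with $i\notin\mathcal{T}$ the inner product $\d_j^T\d_i$ vanishes unless $\d_j$ lies in the $(2n-1)m$-wide stripe centered at the location of $\d_i$; within that stripe $\Gama$ has at most $k$ non-zeros by definition of the $\Loi$ norm, and each non-vanishing inner product is bounded by $\mu(\D)$. Hence $\|\DT^T\d_i\|_1\leq k\,\mu(\D)$, and applying the induced $\ell_1$ norm bound to $\DT^{\dagger}\d_i=(\G^\mathcal{T})^{-1}\DT^T\d_i$ gives $\|\DT^{\dagger}\d_i\|_1\leq k\mu(\D)/(1-(k-1)\mu(\D))$.

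Requiring this quantity to be strictly below $1$ is equivalent to $(2k-1)\mu(\D)<1$, which is exactly the hypothesis $k<\frac{1}{2}(1+1/\mu(\D))$. The main obstacle, and the whole reason for migrating from $P_0$ to $\Poi$, is that the classical ERC bound $\|\DT^T\d_i\|_1\leq|\mathcal{T}|\mu(\D)$ becomes useless in the convolutional setting, where $|\mathcal{T}|$ typically scales with $N$. The essential technical move is recognizing that the banded structure of $\D$ confines the support of $\DT^T\d_i$ to a single stripe, so that $|\mathcal{T}|$ can be safely replaced by $\|\Gama\|_{0,\infty}$; the same stripe-counting is precisely what underlies the Gershgorin-based control of $(\G^\mathcal{T})^{-1}$ provided by Lemma \ref{lemma:GershgorinConvolutional}. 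Once these two local estimates are combined, the resulting threshold matches the OMP condition of Theorem \ref{thm:OMPSuccess}, as expected.
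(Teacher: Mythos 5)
Your proof is correct, but it takes a genuinely different route from the one the paper uses for this theorem. The paper's own argument is a direct null-space construction: it defines the set of alternative solutions with no larger $\ell_1$ norm, derives the entry-wise bound $|\Delta_i|\leq\frac{\mu(\D)}{\mu(\D)+1}\|\delt_{p(i)}\|_1$ from $-\Delt=(\D^T\D-\mathbf{I})\Delt$, and then performs a careful interchange of a double sum over a banded matrix (swapping the roles of the $\ell_0$ norm of a local piece and the $\ell_1$ norm of a stripe) to reach $\mathds{1}^T_{\mathcal{T}(\Gama)}|\Delt|\leq\frac{\mu(\D)}{\mu(\D)+1}\|\Delt\|_1\|\Gama\|_{0,\infty}$ and conclude the set is empty. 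You instead reduce everything to the Exact Recovery Condition $\theta'=\max_{i\notin\mathcal{T}}\|\D_\mathcal{T}^{\dagger}\d_i\|_1<1$, bounding the numerator by $k\mu(\D)$ via the stripe-localization of $\D_\mathcal{T}^T\d_i$ and the denominator by $1-(k-1)\mu(\D)$ via the Gershgorin lemma; this is precisely the machinery the paper deploys later, in Theorem \ref{Theorem:ERC_Loi}, for the noisy BPDN analysis, so in effect you have shown that the noiseless recovery guarantee is an immediate corollary of that ERC result combined with the standard null-space argument. Your route buys reusability (the same $\theta'$ bound feeds directly into the stability claims) and avoids the somewhat delicate sum-interchange, at the price of needing $\D_\mathcal{T}$ to have full column rank and the pseudo-inverse to be well defined, which you correctly secure from Lemma \ref{lemma:GershgorinConvolutional} under the hypothesis $(k-1)\mu(\D)<1$; the paper's approach is more elementary and self-contained in that it never forms $\D_\mathcal{T}^{\dagger}$. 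Both arguments land on the identical threshold $\|\Gama\|_{0,\infty}<\frac{1}{2}\left(1+\frac{1}{\mu(\D)}\right)$.
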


The recovery guarantees for both pursuit methods have now become \emph{independent of the global signal dimension and sparsity}. Instead, the condition for success is given in terms of the \emph{local} concentration of non-zeros of the global sparse vector. Moreover, the number of non-zeros allowed per stripe under the current bounds is in fact the same number previously allowed globally. As a remark, note that we have used these two algorithms in their natural form, being oblivious to the $\Loi$ objective they are serving. Further work is required to develop OMP and BP versions that are aware of this specific goal, potentially benefiting from it.  

\subsection{Experiments}
In this subsection we intend to provide numerical results that corroborate the above presented theoretical bounds. While doing so, we will shed light on the performance of the OMP and BP algorithms in practice, as compared to our previous analysis.

In \cite{Soltanalian2014} an algorithm was proposed to construct a local dictionary such that all its aperiodic auto-correlations and cross-correlations are low. This, in our context, means that the algorithm attempts to minimize the mutual coherence of the dictionary $\D_L$ and all of its shifts, decreasing the global mutual coherence as a result. We use this algorithm to numerically build a dictionary consisting of two atoms ($m=2$) with patch size $n=64$. The theoretical lower bound on the $\mu(\D)$ presented in Equation \eqref{eq:mu_bound} under this setting is approximately $0.063$, and we manage to obtain a mutual coherence of $0.09$ using the aforementioned method. With these atoms we construct a convolutional dictionary with global atoms of length $N = 640$.

Once the dictionary is fixed, we generate sparse vectors with random supports of (global) cardinalities in the range $\left[1,300\right]$. The non-zero entries are drawn from random independent and identically-distributed Gaussians with mean equal to zero and variance equal to one. Given these sparse vectors, we compute their corresponding global signals and attempt to recover them using the global OMP and BP. We perform $500$ experiments per each cardinality and present the probability of success as a function of the representation's $\Loi$ norm. We define the success of the algorithm as the full recovery of the true sparse vector. The results for the experiment are presented in Figure \ref{fig:phase_transition_both}. The theorems provided in the previous subsection guarantee the success of both OMP and BP as long as the $\|\Gama \|_{0,\infty} \leq 6$. 
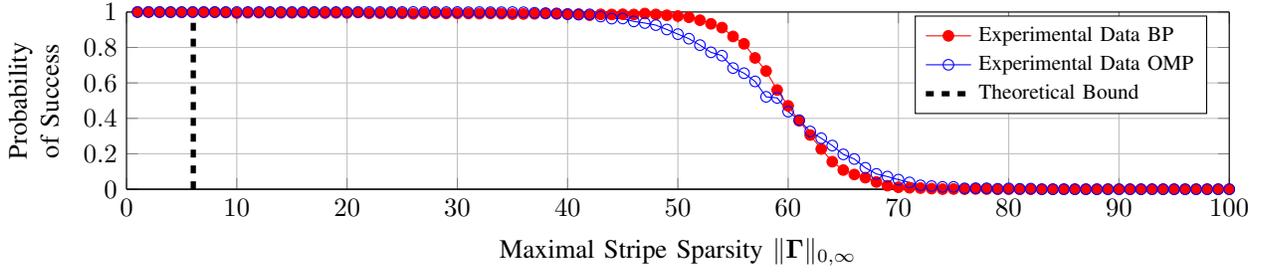
\begin{figure*}[t]
	\vspace{-.5cm}
	\centering
	\setlength\figureheight{0.13\textwidth}
	\setlength\figurewidth{0.85\textwidth}
%
%
\begin{tikzpicture}

\begin{axis}[%
width=0.951\figurewidth,
height=\figureheight,
at={(0\figurewidth,0\figureheight)},
scale only axis,
separate axis lines,
every outer x axis line/.append style={black},
every x tick label/.append style={font=\color{black}},
xmin=0,
xmax=100,
xlabel={Maximal Stripe Sparsity $\|\Gama \|_{0,\infty}$},
xmajorgrids,
every outer y axis line/.append style={black},
every y tick label/.append style={font=\color{black}},
ymin=0,
ymax=1,
ylabel={Probability of Success},
ylabel style={align=center,text width=3cm},
ymajorgrids,
axis background/.style={fill=white},
legend style={legend cell align=left,align=left,draw=black}
]

\addplot[color=red,solid,mark=*,mark options={solid}] plot table[row sep=crcr] {%
1	1\\
2	1\\
3	0.999250374812594\\
4	0.998520710059172\\
5	0.997893258426966\\
6	0.999328408327737\\
7	0.998756991920448\\
8	0.998751560549313\\
9	0.997043169722058\\
10	0.994051160023795\\
11	0.99578567128236\\
12	0.996468510888758\\
13	0.997116493656286\\
14	0.995505617977528\\
15	0.995614035087719\\
16	0.996095928611266\\
17	0.996694214876033\\
18	0.994114499732477\\
19	0.995565410199557\\
20	0.994612068965517\\
21	0.997251236943375\\
22	0.992972972972973\\
23	0.99354491662184\\
24	0.994324045407637\\
25	0.993036957686127\\
26	0.990231362467866\\
27	0.99238578680203\\
28	0.993407707910751\\
29	0.991979949874687\\
30	0.992150706436421\\
31	0.991778006166495\\
32	0.990950226244344\\
33	0.992401215805471\\
34	0.990673575129534\\
35	0.988174807197943\\
36	0.988939165409754\\
37	0.989717223650386\\
38	0.991517436380773\\
39	0.99000999000999\\
40	0.988977955911824\\
41	0.988301119023398\\
42	0.982073643410853\\
43	0.986973947895792\\
44	0.986124876114965\\
45	0.987317073170732\\
46	0.985983566940551\\
47	0.991546494281452\\
48	0.986342943854325\\
49	0.981629769194536\\
50	0.97735460627895\\
51	0.970366886171214\\
52	0.953748782862707\\
53	0.9335232668566\\
54	0.912030432715169\\
55	0.862102217936355\\
56	0.82055063913471\\
57	0.741045498547919\\
58	0.667281956622058\\
59	0.559924206537186\\
60	0.470616570327553\\
61	0.386941910705713\\
62	0.305853658536585\\
63	0.226923076923077\\
64	0.155466399197593\\
65	0.108235294117647\\
66	0.0830564784053156\\
67	0.0652372262773723\\
68	0.041018387553041\\
69	0.0202764976958525\\
70	0.0111867704280156\\
71	0.00831024930747922\\
72	0.00648748841519926\\
73	0.0034330554193232\\
74	0.000525762355415352\\
75	0.00113250283125708\\
76	0.000628930817610063\\
77	0.00151285930408472\\
78	0.000956022944550669\\
79	0\\
80	0\\
81	0\\
82	0\\
83	0\\
84	0\\
85	0\\
86	0\\
87	0\\
88	0\\
89	0\\
90	0\\
91	0\\
92	0\\
93	0\\
94	0\\
95	0\\
96	0\\
97	0\\
98	0\\
99	0\\
100	0\\
};
\addlegendentry{\footnotesize{Experimental Data BP}};

\addplot[color=blue,solid,mark=o,mark options={solid}] plot table[row sep=crcr] {%
1	1\\
2	1\\
3	1\\
4	1\\
5	1\\
6	1\\
7	1\\
8	1\\
9	1\\
10	1\\
11	1\\
12	1\\
13	1\\
14	1\\
15	1\\
16	1\\
17	1\\
18	1\\
19	1\\
20	1\\
21	1\\
22	1\\
23	1\\
24	1\\
25	1\\
26	1\\
27	1\\
28	1\\
29	0.999478079331942\\
30	1\\
31	0.99897066392177\\
32	0.999479708636837\\
33	0.999003984063745\\
34	0.998980112187659\\
35	0.999491353001017\\
36	0.998481781376518\\
37	0.995529061102832\\
38	0.993496748374187\\
39	0.991826923076923\\
40	0.987593052109181\\
41	0.985074626865672\\
42	0.986875315497224\\
43	0.974531475252283\\
44	0.962776659959759\\
45	0.963321446765155\\
46	0.947072599531616\\
47	0.937806072477963\\
48	0.926133469179827\\
49	0.901277013752456\\
50	0.875121477162293\\
51	0.84952380952381\\
52	0.813233223838573\\
53	0.772528007793473\\
54	0.753664302600473\\
55	0.683736367946894\\
56	0.654767726161369\\
57	0.608381502890173\\
58	0.52144578313253\\
59	0.515503875968992\\
60	0.438036224976168\\
61	0.389518413597734\\
62	0.325969563082965\\
63	0.287747839349263\\
64	0.246468582562104\\
65	0.196759259259259\\
66	0.170582706766917\\
67	0.120941176470588\\
68	0.0865518845974872\\
69	0.071360153256705\\
70	0.0542778288868445\\
71	0.0386847195357834\\
72	0.0219728845254792\\
73	0.0175867122618466\\
74	0.0132382892057026\\
75	0.013172338090011\\
76	0.0051150895140665\\
77	0.00593471810089021\\
78	0.00552995391705069\\
79	0.00238095238095238\\
80	0.00470957613814757\\
81	0.00210526315789474\\
82	0.00335570469798658\\
83	0\\
84	0\\
85	0\\
86	0\\
87	0\\
88	0\\
89	0\\
90	0\\
91	0\\
92	0\\
93	0\\
94	0\\
95	0\\
96	0\\
97	0\\
98	0\\
99	0\\
100	0\\
};
\addlegendentry{\footnotesize{Experimental Data OMP}};

\addplot [color=black,solid,forget plot]
  table[row sep=crcr]{%
0	0\\
100	0\\
};
\addplot [color=black,dashed,line width=2.0pt]
  table[row sep=crcr]{%
6.05373654502998	0\\
6.05373654502998	1\\
};
\addlegendentry{\footnotesize{Theoretical Bound}};

\end{axis}
\end{tikzpicture}%
	\caption{Probability of success of OMP and BP at recovering the true convolutional sparse code. The theoretical guarantee is presented on the same graph.}
	\label{fig:phase_transition_both}
	\vspace{-0.3cm}
\end{figure*}

As can be seen from these results, the theoretical bound is far from being tight. However, in the traditional sparse representation model the corresponding bounds have the same loose flavor \cite{Bruckstein2009}. This kind of results is in fact expected when using such a worst-case analysis. Tighter bounds could likely be obtained by a probabilistic study, which we leave for future work.


\section{Shifted Mutual Coherence and Stripe Coherence}
\label{Sec:ShiftedMutualCoherence}
When considering the mutual coherence $\mu(\D)$, one needs to look at the maximal correlation between every pair of atoms in the global dictionary. One should note, however, that atoms having a non-zero correlation must have overlapping supports, and $\mu(\D)$ provides a bound for these values independently of the amount of overlap.
One could go beyond this characterization of the convolutional dictionary by a single value and propose to bound all the inner products between atoms for a \emph{given shift}. As a motivation, in several applications one can assume that signals are built from local atoms separated by some minimal lag, or shift. In radio communications, for example, such a situation appears when there exists a minimal time between consecutive transmissions on the same channel \cite{He2009}. In such cases, knowing how the correlation between the atoms depends on their shifts is fundamental for the design of the dictionary, its utilization and its theoretical analysis. 

In this section we briefly explore this direction of analysis, introducing a stronger characterization of the convolutional dictionary, termed shifted mutual coherence. By being a considerably more informative measure than the standard mutual coherence, this will naturally lead to stronger bounds. We will only present the main points of these results here for the sake of brevity; the interested reader can find a more detailed discussion on this matter in the Supplementary Material.

Recall that $\O$ is defined as a stripe extracted from the global dictionary $\D$. Consider the sub-system given by $\x_i=\O\gama_i$, corresponding to the $i^{th}$ patch in $\X$. Note that $\O$ can be split into a set of $2n-1$ blocks of size $n\times m$, where each block is denoted by $\O_s$, i.e., 
\begin{equation}
\O=[\O_{-n+1},\dots,\O_{-1},\O_0,\O_1,\dots,\O_{n-1}],
\end{equation}
as shown previously in Figure \ref{PartialStripe}. 
\begin{defn}
Define the shifted mutual coherence $\mu_s$ by
\begin{equation}
\mu_s = \underset{i,j}{\max} \quad |\langle \d^0_i,\d^s_j  \rangle|,
\end{equation}
where $\d^0_i$ is a column extracted from $\O_0$, $\d^s_j$ is extracted from $\O_s$, and we require\footnote{The condition $i\neq j$ if $s=0$ is necessary so as to avoid the inner product of an atom by itself.} that $i\neq j$ if $s=0$.
\end{defn}
\noindent
The above definition can be seen as a generalization of the mutual coherence for the shift-invariant local model presented in Section \ref{Sec:Conv_Model}. Indeed, $\mu_s$ characterizes $\O$ just as $\mu(\D)$ characterizes the coherence of a general dictionary. Note that if $s=0$ the above definition boils down to the traditional mutual coherence of $\D_L$, i.e., $\mu_0=\mu(\D_L)$. It is important to stress that the atoms used in the above definition {\em are normalized globally} according to $\D$ and not $\O$. In the Supplementary Material we comment on several interesting properties of this measure.

Similar to $\O$, $\gama_i$ can be split into a set of $2n-1$ vectors of length $m$, each denoted by $\gama_{i,s}$ and corresponding to $\O_s$. In other words, $\gama_i=[\gama_{i,-n+1}^T,\dots,\gama_{i,-1}^T,\gama_{i,0}^T,\gama_{i,1}^T,\dots,\gama_{i,n-1}^T]^T$. Note that previously we denoted local sparse vectors of length $m$ by $\alfa_j$. Yet, we will also denote them by $\gama_{i,s}$ in order to emphasize the fact that they correspond to the $s^{th}$ shift within $\gama_i$. Denote the number of non-zeros in $\gama_i$ as $n_i$. We can also write $n_i=\displaystyle\sum_{s=-n+1}^{n-1}n_{i,s}$, where $n_{i,s}$ is the number of non-zeros in each $\gama_{i,s}$. With these definitions, we can now propose the following measure.
\begin{defn}
Define the stripe coherence as
\begin{equation}
\zeta(\gama_i) = \sum_{s=-n+1}^{n-1} n_{i,s}\ \mu_s.
\end{equation}
\end{defn}
\noindent
According to this definition, each stripe has a coherence given by the sum of its non-zeros weighted by the shifted mutual coherence. As a particular case, if all $k$ non-zeros correspond to atoms in the center sub-dictionary, $\D_L$, this becomes $\mu_0k$. Note that unlike the traditional mutual coherence, this new measure depends on the location of the non-zeros in $\Gama$ -- it is a function of the support of the sparse vector, and not just of the dictionary. As such, it characterizes the correlation between the atoms participating in a given stripe. In what follows, we will use the notation $\zeta_i$ for $\zeta(\gama_i)$.

Having formalized these tighter constructions, we now leverage them to improve the previous results. Although these theorems are generally sharper, they are harder to grasp. We begin with a recovery guarantee for the OMP and BP algorithms, followed by a discussion on their implications.

\begin{thm}{(Global OMP recovery guarantee using the stripe coherence):}
	\label{thm:OMPSuccess_stripeCoherence}
	Given the system of linear equations $\X = \D\Gama$, if a solution $\Gama$ exists satisfying \begin{equation} \label{eq:stripe_coherence_condition3}
	\max_i\ \zeta_i = \max_i\sum_{s=-n+1}^{n-1} n_{i,s} \mu_s < \ \frac{1}{2}\left(1+\mu_0\right),
	\end{equation}
	then OMP is guaranteed to recover it.
\end{thm}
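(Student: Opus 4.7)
The plan is to adapt the standard iterative OMP recovery argument, carrying out the coherence estimates shift-by-shift so that the uniform bound $\mu(\D)$ is everywhere replaced by the sharper shifted coherence $\mu_s$. At step zero, $\r_0=\D\Gama$, and for any atom $\d_j$ with $j\in\mathcal{T}$ at position $p_j$ one observes that $\langle\d_j,\d_i\rangle\neq 0$ only when $\d_i$ lies in the stripe centered at $p_j$. Organizing the overlapping support atoms by their relative shift $s$ and applying $|\langle\d_j,\d_i\rangle|\leq\mu_s$ yields
\begin{equation}
|\langle\d_j,\r_0\rangle|\ \geq\ |\Gamma_j|-\|\Gama_{\mathcal{T}}\|_\infty\,(\zeta_{p_j}-\mu_0),
\end{equation}
the $-\mu_0$ arising because the index $i=j$ at shift zero must be excluded from the coherence sum. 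Taking $j^\star\in\mathcal{T}$ with $|\Gamma_{j^\star}|=\|\Gama_\mathcal{T}\|_\infty$ gives the lower bound $\|\Gama_\mathcal{T}\|_\infty(1-\zeta_{p_{j^\star}}+\mu_0)$. The same decomposition, without any diagonal correction, produces $|\langle\d_k,\r_0\rangle|\leq\|\Gama_\mathcal{T}\|_\infty\,\zeta_{p_k}$ for $k\notin\mathcal{T}$. Requiring the former strictly exceeds the latter uniformly in $k$, and majorizing both stripe coherences by $\max_i\zeta_i$, reproduces exactly the hypothesis $\max_i\zeta_i<\tfrac{1}{2}(1+\mu_0)$.

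For the inductive step, suppose $\Lambda_t\subsetneq\mathcal{T}$ have been correctly chosen after $t$ iterations. With $\z_t=\D_{\Lambda_t}^\dagger\X$, one can write $\r_t=\D\tilde{\Gama}_t$ for a vector $\tilde{\Gama}_t$ supported on $\mathcal{T}$ and satisfying $\tilde\Gamma_{t,j}=\Gamma_j$ for $j\in\mathcal{T}\setminus\Lambda_t$. I would pick $j^\star\in\mathcal{T}$ attaining $\|\tilde{\Gama}_t\|_\infty$ and argue $j^\star\in\mathcal{T}\setminus\Lambda_t$ by contradiction: were $j^\star\in\Lambda_t$, orthogonality of the OMP residual gives $\langle\d_{j^\star},\r_t\rangle=0$, and the stripe-wise bound applied to $\tilde{\Gama}_t$ would force $\|\tilde{\Gama}_t\|_\infty(1-\zeta^t_{p_{j^\star}}+\mu_0)\leq 0$; but $\zeta^t_i\leq\zeta_i$ (shrinking the support only decreases each $n_{i,s}$), so the hypothesis makes the bracket strictly positive and forces $\tilde{\Gama}_t=0$, contradicting $\tilde\Gamma_{t,j}=\Gamma_j\neq 0$ on $\mathcal{T}\setminus\Lambda_t$. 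Replaying the step-zero estimate with $\tilde{\Gama}_t$ in place of $\Gama$ then gives the desired strict inequality and shows OMP next selects an atom in $\mathcal{T}\setminus\Lambda_t$. After $|\mathcal{T}|$ correct steps the residual vanishes and the final least-squares step on $\D_\mathcal{T}$ returns $\Gama$; invertibility of $\D_\mathcal{T}^T\D_\mathcal{T}$ follows from a sharpened Gershgorin argument in the spirit of Lemma \ref{lemma:GershgorinConvolutional} whose row radii are bounded by $\zeta_{p_j}-\mu_0<\tfrac{1}{2}(1-\mu_0)<1$.

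The main technical obstacle is the bookkeeping of shifts: one must verify that the support atoms overlapping $\d_j$, organized by relative shift, contribute exactly $\sum_s n_{p_j,s}\mu_s=\zeta_{p_j}$ to the coherence sum, and that the $-\mu_0$ correction appears precisely for $j\in\mathcal{T}$ (excluding the self-term from the shift-zero count) but disappears for $j\notin\mathcal{T}$. This asymmetry between the "inside $\mathcal{T}$" and "outside $\mathcal{T}$" cases is the mechanism by which the classical threshold $\tfrac{1}{2}(1+1/\mu(\D))$ sharpens to $\tfrac{1}{2}(1+\mu_0)$ on the right-hand side and to the shift-weighted stripe coherence $\max_i\zeta_i$ on the left.
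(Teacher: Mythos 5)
Your proposal is correct and follows essentially the same route as the paper's proof: the shift-by-shift Gershgorin-type bound with the $-\mu_0$ self-term correction for in-support atoms versus no correction for out-of-support atoms is exactly the paper's mechanism, and your observation that support containment of the residual's representation forces $\zeta_i^t\le\zeta_i$ is precisely the paper's Theorem on stripe coherences of nested supports. Your explicit contradiction argument locating the max-magnitude coefficient of $\tilde{\Gama}_t$ outside $\Lambda_t$ is a slightly more careful rendering of the induction than the paper's "same set of steps plus orthogonality" remark, but it is the same argument.
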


\begin{thm}{(Global BP recovery guarantee using the stripe coherence):}
	Given the system of linear equations $\X = \D\Gama$, if a solution $\Gama$ exists satisfying \begin{equation}
	\max_i\ \zeta_i = \max_i\sum_{s=-n+1}^{n-1} n_{i,s} \mu_s < \ \frac{1}{2}\left(1+\mu_0\right),
	\end{equation}
	then Basis Pursuit is guaranteed to recover it.
\end{thm}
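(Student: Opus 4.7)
The plan is to establish the Exact Recovery Condition (ERC) of \cite{Tropp2004}, namely $\max_{j\notin\mathcal{T}}\|\D_{\mathcal{T}}^{\dagger}\d_j\|_1<1$, for the support $\mathcal{T}$ of the candidate solution $\Gama$; once ERC is in place, the classical duality argument guarantees that $\Gama$ is the unique BP minimiser. The proof will follow the same skeleton as Theorem \ref{thm:BPSuccess}, but every inequality is refined by resolving the contribution of each shift $s$ separately through $\mu_s$ and $n_{i,s}$, rather than bounding uniformly in $\mu(\D)$ and the stripe cardinality.

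For the first ingredient, fix $j\notin\mathcal{T}$ and let $\ell$ denote the position within $\X$ of the atom $\d_j$. Atoms $\d_i$ contributing to $\|\D_\mathcal{T}^T\d_j\|_1$ must overlap $\d_j$, and hence must be indexed inside the stripe $\gama_{\ell}$; grouping them by relative shift $s$ and using $|\d_i^T\d_j|\leq\mu_{|s|}$ yields $\|\D_\mathcal{T}^T\d_j\|_1 \leq \sum_s n_{\ell,s}\mu_s = \zeta_{\ell} \leq \max_i \zeta_i$. For the second ingredient, apply the same per-shift counting to each column $j\in\mathcal{T}$ of $\G^\mathcal{T}=\D_\mathcal{T}^T\D_\mathcal{T}$ to bound $\|I-\G^\mathcal{T}\|_{1,1}$. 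The diagonal $\d_j^T\d_j=1$ is absorbed in $I$, so we only need the off-diagonal sum; inside $\sum_s n_{\ell,s}\mu_s$ the atom $j$ itself is over-counted with weight $\mu_0$, and subtracting this correction gives $\sum_{i\in\mathcal{T},\,i\neq j}|\d_i^T\d_j|\leq\zeta_{\ell}-\mu_0\leq\max_i\zeta_i-\mu_0$, so $\|I-\G^\mathcal{T}\|_{1,1}\leq \max_i\zeta_i-\mu_0$.

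Under the hypothesis $\max_i\zeta_i<\frac{1}{2}(1+\mu_0)$, this quantity is strictly smaller than $\frac{1}{2}(1-\mu_0)<1$, so a Neumann series expansion applies and yields $\|(\G^\mathcal{T})^{-1}\|_{1,1}\leq (1-\max_i\zeta_i+\mu_0)^{-1}$. Using submultiplicativity of the induced $\ell_1\to\ell_1$ operator norm, $\|\D_\mathcal{T}^\dagger\d_j\|_1=\|(\G^\mathcal{T})^{-1}\D_\mathcal{T}^T\d_j\|_1 \leq \max_i\zeta_i/(1-\max_i\zeta_i+\mu_0)$, which is strictly smaller than $1$ precisely when $2\max_i\zeta_i<1+\mu_0$, i.e., our hypothesis. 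The main obstacle I anticipate is the bookkeeping: correctly enumerating atoms at each shift whether or not the centre $\ell$ belongs to the active support, and applying the $-\mu_0$ self-correction only in the $j\in\mathcal{T}$ case. Once this is handled cleanly, the rest is a direct transcription of the standard BP proof with $(k-1)\mu$ replaced by $\max_i\zeta_i-\mu_0$ and $k\mu$ by $\max_i\zeta_i$.
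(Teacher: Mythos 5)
Your proposal is correct, but it takes a genuinely different route from the paper. The paper proves its noiseless BP guarantees (both the $\Loi$ version and, by the authors' remark, the stripe-coherence version) via a null-space argument: one collects all $\ell_1$-competitors of $\Gama$ into a set, relaxes that set through a chain of supersets using $\D^T\D\Delt = \Delt$ and the banded structure of the Gram matrix, and shows the final superset is empty under the sparsity hypothesis. You instead establish the Exact Recovery Condition $\max_{j\notin\mathcal{T}}\|\D_{\mathcal{T}}^{\dagger}\d_j\|_1<1$ and invoke the Fuchs--Tropp duality result that ERC implies exact BP recovery in the noiseless case. Your two key bounds are sound: for $j\notin\mathcal{T}$ the per-shift count gives $\|\D_\mathcal{T}^T\d_j\|_1\leq\zeta_{p(j)}\leq\max_i\zeta_i$, and for $j\in\mathcal{T}$ the $-\mu_0$ self-correction gives $\|I-\G^\mathcal{T}\|_{1,1}\leq\max_i\zeta_i-\mu_0$, exactly mirroring the over-counting correction the paper itself uses in the OMP stripe-coherence proof; the Neumann bound $\|(\G^\mathcal{T})^{-1}\|_{1,1}\leq(1-\max_i\zeta_i+\mu_0)^{-1}$ then closes the argument, and the hypothesis $2\max_i\zeta_i<1+\mu_0$ is precisely what makes the ratio drop below one (invertibility of $\G^\mathcal{T}$, hence full column rank of $\D_\mathcal{T}$ and uniqueness of the representation over $\mathcal{T}$, comes for free from the same Neumann condition). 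What each approach buys: your route is shorter and reuses machinery the paper develops anyway for the noisy case -- indeed it is the shift-resolved sharpening of the paper's own ERC theorem, which bounds $\|\D^{\dagger}_{\mathcal{T}}\d_i\|_1$ by $k\mu(\D)/(1-(k-1)\mu(\D))$, with $k\mu$ replaced by $\zeta$ and $(k-1)\mu$ by $\zeta-\mu_0$ -- but it imports Tropp's ERC-implies-BP theorem as a black box; the paper's null-space argument is self-contained and elementary, at the cost of considerably more delicate bookkeeping with the stripe decomposition.
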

\noindent The corresponding proofs are similar to their counterparts presented in the preceding section but require a more delicate analysis; one of them is thoroughly discussed in the Supplementary Material. 

In order to provide an intuitive interpretation for these results, the above bounds can be tied to a concrete number of non-zeros per stripe. First, notice that requiring the maximal stripe coherence to be less than a certain threshold is equal to requiring the same for every stripe:
\begin{equation}
\forall i\quad\sum_{s=-n+1}^{n-1} n_{i,s} \mu_s< \ \frac{1}{2}\left(1+\mu_0\right).
\end{equation}
Multiplying and dividing the left-hand side of the above inequality by $n_i$ and rearranging the resulting expression, we obtain
\begin{equation}
\forall i \quad n_i< \ \frac{1}{2}\frac{1+\mu_0}{\sum_{s=-n+1}^{n-1} \frac{n_{i,s}}{n_i} \mu_s }.
\end{equation}
Define $\bar{\mu}_i=\sum_{s=-n+1}^{n-1}\frac{n_{i,s}}{n_i} \mu_s$. Recall that $\sum_{s=-n+1}^{n-1} \frac{n_{i,s}}{n_i}=1$ and as such $\bar{\mu}_i$ is simply the (weighted) average shifted mutual coherence in the $i^{th}$ stripe. Putting this definition into the above condition, the inequality becomes
\begin{equation}
\forall i\quad n_i< \frac{1}{2}\left(\frac{1}{\bar{\mu}_i}+\frac{\mu_0}{\bar{\mu}_i}\right).
\end{equation}
Thus, the condition in \eqref{eq:stripe_coherence_condition3} boils down to requiring the sparsity of all stripes to be less than a certain number. Naturally, this inequality resembles the one presented in the previous section for the OMP and BP guarantees. In the Supplementary Material we prove that under the assumption that $\mu(\D)=\mu_0$, the shifted mutual coherence condition is at least as strong as the original one. 


\section{From Global to Local Stability Analysis} \label{Sec:NoisyRegime}
One of the cardinal motivations for this work was a series of recent practical methods addressing the convolutional sparse
coding problem; and in particular, the need for their theoretical foundation. However, our results are as of yet not directly applicable to these, as we have restricted our analysis to the ideal case of noiseless signals. This is the path we undertake in the following sections, exploring the question of whether the convolutional model remains stable in the presence of noise.

Assume a clean signal $\X$, which admits a sparse representation $\Gama$ in terms of the convolutional dictionary $\D$, is contaminated with noise $\E$ (of bounded energy, $\| \E \|_2 \leq \epsilon$)  to create $\Y=\D\Gama+\E$. Given this noisy signal, one could propose to recover the true representation $\Gama$, or a vector close to it, by solving the $P_0^\epsilon$ problem. In this context, as mentioned in the previous section, several theoretical guarantees have been proposed in the literature. As an example, consider the stability results presented in the seminal work of \cite{Donoho2006}. Therein, it was shown that assuming the total number of non-zeros in $\Gama$ is less than $\frac{1}{2}\left(1+\frac{1}{\mu(\D)}\right)$, the distance between the solution to the $P_0^\epsilon$ problem, $\overline{\Gama}$, and the true sparse vector, $\Gama$, satisfies
\begin{equation}
	\|\overline{\Gama} - \Gama\|_2^2 \leq \frac{4\epsilon^2}{1-\mu(\D)(2\|\Gama\|_0-1)}.
	\label{Eq:OriginalStability}
\end{equation}
In the context of our convolutional setting, however, this result provides a weak bound as it constrains the total number of non-zeros to be below a certain threshold, which scales with the local filter size $n$. 

We now re-define the $P_0^{\epsilon}$ problem into different one, capturing the convolutional structure by relying on the $\Loi$ norm instead. Consider the problem:
\begin{equation}
\quad (\Poie): \quad \underset{\Gama}{\min} \quad \|\Gama\|_{0,\infty} \ \text{ s.t. }\ \|\Y-\D\Gama\|_2\leq\epsilon.
\end{equation}
In words, given a noisy measurement $\Y$, we seek for the $\Loi$-sparsest representation vector that explains this signal up to an $\epsilon$ error. In what follows, we address the theoretical aspects of this problem and, in particular, study the stability of its solutions and practical yet secured ways for retrieving them.

\section{Theoretical Analysis of Corrupted Signals}
\label{Sec:TheoreticalAnalysis_noisy}

\subsection{Stability of the $\Poie$ Problem}
As expected, one cannot guarantee the uniqueness of the solution to the $\Poie$ problem, as was done for the $\Poi$. Instead, in this subsection we shall provide a stability claim that guarantees the found solution to be close to the underlying sparse vector that generated $\Y$. In order to provide such an analysis, we commence by arming ourselves with the necessary mathematical tools. 

\begin{defn}
Let $\D$ be a convolutional dictionary. Consider all the sub matrices $\D_\mathcal{T}$, obtained by restricting the dictionary $\D$ to a support $\mathcal{T}$ with an $\Loi$ norm equal to $k$. Define $\delta_k$ as the smallest quantity such that
\begin{equation}
\forall \Delt \quad (1-\delta_k)\|\Delt\|_2^2\leq\|\D_\mathcal{T} \Delt\|_2^2\leq(1+\delta_k)\|\Delt\|_2^2
\end{equation}
holds true for any choice of the support. Then, $\D$ is said to satisfy $k$-SRIP (Stripe-RIP) with constant $\delta_k$.
\end{defn}

Given a matrix $\D$, similar to the Stripe-Spark, computing the SRIP is hard or practically impossible. Thus bounding it using the mutual coherence is of practical use. 
\begin{thm}{(Upper bounding the SRIP via the mutual coherence):}
	For a convolutional dictionary $\D$ with global mutual coherence $\mu(\D)$, the SRIP can be upper-bounded by
	\begin{equation}
	\delta_k\leq(k-1)\mu(\D).
	\end{equation}
\end{thm}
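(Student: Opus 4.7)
The plan is to reduce the SRIP bound to a statement about the eigenvalues of the Gram matrix $\G^{\mathcal{T}} = \D_{\mathcal{T}}^T \D_{\mathcal{T}}$ for supports of bounded $\ell_{0,\infty}$ norm, and then invoke Lemma \ref{lemma:GershgorinConvolutional} directly.

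First I would fix an arbitrary support $\mathcal{T}$ with $\|\mathcal{T}\|_{0,\infty} = k$ and an arbitrary vector $\Delt$ supported on $\mathcal{T}$. The key identity is that $\|\D_{\mathcal{T}}\Delt\|_2^2 = \Delt^T \G^{\mathcal{T}} \Delt$, which by the Rayleigh quotient characterization of symmetric matrices is sandwiched by the extreme eigenvalues of $\G^{\mathcal{T}}$:
\begin{equation}
\lambda_{\min}(\G^{\mathcal{T}}) \, \|\Delt\|_2^2 \ \leq \ \|\D_{\mathcal{T}}\Delt\|_2^2 \ \leq \ \lambda_{\max}(\G^{\mathcal{T}}) \, \|\Delt\|_2^2.
\end{equation}

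Next, I would invoke Lemma \ref{lemma:GershgorinConvolutional}, which bounds every eigenvalue of $\G^{\mathcal{T}}$ by $1 \pm (k-1)\mu(\D)$ whenever the support has $\Loi$ norm equal to $k$. Substituting into the Rayleigh bound gives
\begin{equation}
\bigl(1-(k-1)\mu(\D)\bigr)\|\Delt\|_2^2 \ \leq \ \|\D_{\mathcal{T}}\Delt\|_2^2 \ \leq \ \bigl(1+(k-1)\mu(\D)\bigr)\|\Delt\|_2^2.
\end{equation}
Crucially, the constant $(k-1)\mu(\D)$ depends only on $k$ and on the dictionary (through $\mu(\D)$), and not on the particular choice of $\mathcal{T}$ or $\Delt$. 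Hence the bound holds uniformly over every support of $\Loi$ norm $k$ and every vector supported on it.

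Finally, since $\delta_k$ is defined as the \emph{smallest} constant for which the SRIP inequality holds for every such support, the uniform bound above yields $\delta_k \leq (k-1)\mu(\D)$, as claimed. There is no real obstacle here beyond making sure the quantifiers line up: the Gershgorin-based Lemma \ref{lemma:GershgorinConvolutional} already does all the heavy lifting of connecting the $\Loi$ sparsity to the spectrum of $\G^{\mathcal{T}}$, and the only thing left is the elementary Rayleigh quotient step followed by taking the worst case over admissible supports.
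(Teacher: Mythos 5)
Your proposal is correct and follows essentially the same route as the paper's own proof: both apply Lemma \ref{lemma:GershgorinConvolutional} to bound the eigenvalues of the Gram matrix $\D_{\mathcal{T}}^T\D_{\mathcal{T}}$ and then use the Rayleigh quotient to sandwich $\|\D_{\mathcal{T}}\Delt\|_2^2$, concluding that the smallest admissible constant satisfies $\delta_k\leq(k-1)\mu(\D)$. Your explicit remark about the uniformity of the bound over all supports of $\Loi$ norm $k$ is a welcome clarification of a quantifier the paper leaves implicit, but it does not change the argument.
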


Assume a sparse vector $\Gama$ is multiplied by $\D$ and then contaminated by a vector $\E$, generating the signal $\Y=\D\Gama+\E$, such that $\|\Y-\D\Gama\|_2^2\leq\epsilon^2$. Suppose we solve the $\Poie$ problem and obtain a solution $\hat{\Gama}$. How close is this solution to the original $\Gama$? The following theorem provides an answer to this question.

\begin{thm}{(Stability of the solution to the $\Poie$ problem):}
Consider a sparse vector $\Gama$ such that $\|\Gama\|_{0,\infty} = k < \frac{1}{2}\left( 1 + \frac{1}{\mu(\D)} \right) $, and a convolutional dictionary $\D$ satisfying the SRIP property for $\Loi=2k$ with coefficient $\delta_{2k}$. Then, the distance between the true sparse vector $\Gama$ and the solution to the $\Poie$ problem $\hat{\Gama}$ is bounded by
\begin{equation}
\|\Gama-\hat{\Gama}\|_2^2\leq \frac{4\epsilon^2}{1-\delta_{2k}}\leq\frac{4\epsilon^2}{1-(2k-1)\mu(\D)}.	
\label{Eq:NewStability}
\end{equation}
\end{thm}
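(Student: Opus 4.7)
The plan is to follow the classical noisy-stability template for sparse coding, but with the $\Loi$ pseudo-norm and the SRIP playing the roles of the usual $\ell_0$ norm and RIP. The argument has three ingredients: a fidelity-side triangle inequality yielding a bound on $\|\D(\Gama-\hat{\Gama})\|_2$, a sparsity-side triangle inequality yielding a bound on $\|\Gama-\hat{\Gama}\|_{0,\infty}$, and an application of the SRIP (together with its coherence upper bound from the previous theorem) to convert the first into a bound on $\|\Gama-\hat{\Gama}\|_2$.

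More concretely, set $\Delt=\Gama-\hat{\Gama}$. First I would note that $\Gama$ is feasible for the $\Poie$ problem because $\|\Y-\D\Gama\|_2=\|\E\|_2\leq\epsilon$, and by definition of $\hat{\Gama}$ as a minimizer, we also have $\|\Y-\D\hat{\Gama}\|_2\leq\epsilon$. Applying the triangle inequality to $\D\Delt=(\D\Gama-\Y)+(\Y-\D\hat{\Gama})$ gives $\|\D\Delt\|_2\leq 2\epsilon$. Second, minimality of $\hat{\Gama}$ combined with feasibility of $\Gama$ yields $\|\hat{\Gama}\|_{0,\infty}\leq\|\Gama\|_{0,\infty}=k$. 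Using the triangle inequality for the $\Loi$ pseudo-norm (which the paper notes holds, even though homogeneity fails), we get
\begin{equation}
\|\Delt\|_{0,\infty}\leq\|\Gama\|_{0,\infty}+\|\hat{\Gama}\|_{0,\infty}\leq 2k.
\end{equation}

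With these two facts in hand, the SRIP with parameter $2k$ applies to $\Delt$, giving
\begin{equation}
(1-\delta_{2k})\|\Delt\|_2^2\leq\|\D\Delt\|_2^2\leq 4\epsilon^2,
\end{equation}
so $\|\Delt\|_2^2\leq 4\epsilon^2/(1-\delta_{2k})$, which is the first inequality in the claim. The second inequality follows by substituting the upper bound $\delta_{2k}\leq(2k-1)\mu(\D)$ from the preceding theorem. Finally, the hypothesis $k<\tfrac{1}{2}(1+1/\mu(\D))$ is exactly what is needed to guarantee $(2k-1)\mu(\D)<1$, so both denominators are strictly positive and the bound is non-vacuous.

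The main conceptual obstacle is not a deep one but a bookkeeping point, namely making sure the $\Loi$ triangle inequality and the SRIP are invoked consistently: the SRIP is stated for supports whose $\Loi$ norm equals $k$, so one must interpret it (or verify by an obvious padding/monotonicity argument) as providing the two-sided bound for \emph{any} vector supported on a set of $\Loi$ norm at most $k$, which is what step 2 requires. Everything else reduces to the standard ``subtract, apply triangle inequality, invoke isometry'' recipe, now executed in the local, convolutional regime.
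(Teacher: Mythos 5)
Your proposal is correct and follows essentially the same route as the paper's own proof: feasibility of $\Gama$ plus optimality of $\hat{\Gama}$ give $\|\D\Delt\|_2\leq 2\epsilon$ and $\|\Delt\|_{0,\infty}\leq 2k$ via the two triangle inequalities, after which the SRIP and its coherence upper bound yield the stated chain of inequalities. Your closing remark about reading the SRIP as applying to supports of $\Loi$ norm \emph{at most} $2k$ is a fair bookkeeping observation that the paper glosses over, but it does not alter the argument.
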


One should wonder if the new guarantee presents any advantage when compared to the bound based on the traditional RIP. Looking at the original stability claim for the global system, as discussed in Section \ref{Sec:Global2Local}, 
the reader should compare the assumptions on the sparse vector $\Gama$, as well as the obtained bounds on the distance between the estimates and the original vector. The stability claim in the $P_0^\epsilon$ problem is valid under the condition
\begin{equation}
\|\Gama\|_0<\frac{1}{2}\left(1+\frac{1}{\mu(\D)}\right).
\end{equation}
In contrast, the stability claim presented above holds whenever
\begin{equation}
\|\Gama\|_{0,\infty}<\frac{1}{2}\left(1+\frac{1}{\mu(\D)}\right).
\end{equation}
This allows for significantly more non-zeros in the global signal. Furthermore, as long as the above hold, and comparing Equations \eqref{Eq:OriginalStability} and \eqref{Eq:NewStability}, we have that
\begin{equation}
\frac{4\epsilon^2}{1-(2\|\Gama\|_{0,\infty}-1)\mu(\D)} \ll \frac{4\epsilon^2}{1-(2\|\Gama\|_0-1)\mu(\D)},
\end{equation}
since generally $\|\Gama\|_{0,\infty}\ll\|\Gama\|_0$. This inequality implies that the above developed bound is (usually much) lower than the traditional one. In other words, the bound on the distance to the true sparse vector is much tighter and far more informative under the $\Loi$ setting.

\vspace{-0.1cm}
\subsection{Stability Guarantee of OMP}
Hitherto, we have shown that the solution to the $\Poie$ problem will be close to the true sparse vector $\Gama$. However, it is also important to know whether this solution can be approximated by pursuit algorithms. In this subsection, we address such a question for the OMP, extending the analysis presented to the noisy setting.

In \cite{Donoho2006}, a claim was provided for the OMP, guaranteeing the recovery of the true support of the underlying solution if
\begin{equation}
	\|\Gama\|_0 < \frac{1}{2}\left(1+\frac{1}{\mu(\D)}\right) - \frac{1}{\mu(\D)}\cdot\frac{\epsilon}{|\Gamma_{min}|},
\end{equation}
$|\Gamma_{min}|$ being the minimal absolute value of a (non-zero) coefficients in $\Gama$. This result comes to show the importance of both the sparsity of $\Gama$ and the signal-to-noise ratio, which relates to the term ${\epsilon}/{|\Gamma_{min}|}$. In the context of our convolutional setting, this result provides a weak bound for two different reasons. First, the above bound restricts the total number of non-zeros in the representation of the signal. From Section \ref{Sec:TheoStudy}, it is natural to seek for an alternative condition for the success of this pursuit relying on the $\Loi$ norm instead. Second, notice that the rightmost term in the above bound divides the global error energy by the minimal coefficient (in absolute value) in $\Gama$. In the convolutional scenario, the energy of the error $\epsilon$ is a \textit{global} quantity, while the minimal coefficient $|\Gamma_{min}|$ is a \textit{local} one -- thus making this term enormous, and the corresponding bound nearly meaningless. As we show next, one can harness the inherent locality of the atoms in order to replace the global quantity in the numerator with a local one: $\epsilon_L$.

\begin{thm}{(Stable recovery of global OMP in the presence of noise):} \label{Theorem:StabilityOMP}
	Suppose a clean signal $\X$ has a representation $\D\Gama$, and that it is contaminated with noise $\E$ to create the signal $\Y=\X+\E$, such that $\|\Y-\X\|_2\leq\epsilon$. Denote by $\epsilon_{_L}$ the highest energy of all $n$-dimensional local patches extracted from $\E$. Assume $\Gama$ satisfies
	\begin{equation} \label{omp_hypothesis}
	\|\Gama\|_{0,\infty} < \frac{1}{2}\left( 1+\frac{1}{\mu(\D)} \right)-\frac{1}{\mu(\D)}\cdot\frac{\epsilon_{_L}}{|\Gamma_{min}|},
	\end{equation}
	where $|\Gamma_{min}|$ is the minimal entry in absolute value of the sparse vector $\Gama$.
	Denoting by $\Gama_\text{OMP}$ the solution obtained by running OMP for $\|\Gama\|_0$ iterations, we are guaranteed that
	\begin{enumerate}[ a) ]
	\item OMP will find the correct support; And,
	\item $\|\Gama_\text{OMP}-\Gama\|_2^2\leq\frac{\epsilon^2}{1-\mu(\D)(\|\Gama\|_{0,\infty}-1)}$.
	\end{enumerate}
\end{thm}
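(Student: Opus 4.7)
The plan is to adapt the classical OMP stability analysis of Donoho--Elad--Temlyakov to the convolutional setting, exploiting the stripe/overlap structure to replace the global $\ell_0$ count by $\|\Gama\|_{0,\infty}$ and the global noise energy $\epsilon$ by its local counterpart $\epsilon_{_L}$. Lemma \ref{lemma:GershgorinConvolutional} will be the main workhorse, and the bulk of the work lies in part (a); part (b) then follows almost for free.

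For part (a), I would argue by induction on the OMP iterations: assuming the first $k$ atoms chosen, forming $\mathcal{T}^k$, all lie in the true support $\mathcal{T}$, show that the $(k+1)$-th atom is also in $\mathcal{T}$. Writing $\r^k = \D\z^k + \E$ with $\z^k = \Gama - \hat{\Gama}^k$ supported on $\mathcal{T}$ (so $\|\z^k\|_{0,\infty} \leq s := \|\Gama\|_{0,\infty}$), OMP picks a correct atom iff $\max_{i\in\mathcal{T}}|\d_i^T\r^k| > \max_{j\notin\mathcal{T}}|\d_j^T\r^k|$. Three ingredients produce the needed bounds: (i) the Gershgorin-style counting behind Lemma \ref{lemma:GershgorinConvolutional} gives $\|(\G^\mathcal{T}-\mathbf{I})\z^k\|_\infty \leq (s-1)\mu(\D)\|\z^k\|_\infty$, since each row of $\G^\mathcal{T}-\mathbf{I}$ restricted to $\mathcal{T}$ has at most $s-1$ non-zero entries (the other atoms of $\mathcal{T}$ lying in the same stripe); (ii) because each atom $\d_i$ is supported on only $n$ consecutive coordinates, $\d_i^T\E$ depends only on a length-$n$ window of $\E$, so Cauchy--Schwarz yields $|\d_i^T\E|\leq \epsilon_{_L}$ --- this is the single place where convolutional locality lets us upgrade $\epsilon$ to $\epsilon_{_L}$; (iii) for any $j\notin\mathcal{T}$ the sum $\d_j^T\D_\mathcal{T}\z^k$ involves at most $s$ overlapping atoms of $\mathcal{T}$ (the non-zeros of $\z^k$ in the stripe around $j$). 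Combined, these give
\begin{align}
\max_{i\in\mathcal{T}}|\d_i^T\r^k| &\geq \|\z^k\|_\infty\bigl(1-(s-1)\mu(\D)\bigr) - \epsilon_{_L}, \\
\max_{j\notin\mathcal{T}}|\d_j^T\r^k| &\leq s\,\mu(\D)\,\|\z^k\|_\infty + \epsilon_{_L}.
\end{align}

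The induction closes through a simple but crucial observation: as long as $\mathcal{T}\setminus\mathcal{T}^k\neq\emptyset$, every $i\in\mathcal{T}\setminus\mathcal{T}^k$ has $\hat{\Gamma}^k_i=0$ and therefore $z^k_i=\Gamma_i$, so $\|\z^k\|_\infty\geq|\Gamma_{min}|$. Rearranging the hypothesis \eqref{omp_hypothesis} into $|\Gamma_{min}|\bigl(1-(2s-1)\mu(\D)\bigr) > 2\epsilon_{_L}$, the lower bound strictly exceeds the upper one, so OMP selects an index in $\mathcal{T}$ at step $k+1$. For part (b), once the full support has been recovered after $\|\Gama\|_0$ iterations, OMP returns $\Gama_{\text{OMP}} = \D_\mathcal{T}^\dagger\Y = \Gama + \D_\mathcal{T}^\dagger\E$, so $\|\Gama_{\text{OMP}}-\Gama\|_2^2 = \|\D_\mathcal{T}^\dagger\E\|_2^2 \leq \|\E\|_2^2/\lambda_{\min}(\G^\mathcal{T})$; invoking Lemma \ref{lemma:GershgorinConvolutional} to bound $\lambda_{\min}(\G^\mathcal{T})\geq 1-(s-1)\mu(\D)$ yields the advertised stability bound (note that this is the only place where the \emph{global} $\epsilon$ appears, consistent with the theorem statement).

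The main obstacle I anticipate is not any single technical step but the careful stripe bookkeeping throughout --- correctly counting the overlapping atoms of $\mathcal{T}$ inside any given stripe both for $i\in\mathcal{T}$ and $j\notin\mathcal{T}$, and rigorously justifying the local noise bound $|\d_i^T\E|\leq\epsilon_{_L}$ from the definition of $\epsilon_{_L}$ as the maximal energy of any length-$n$ patch of $\E$. Once those are in place, the remainder is a rearrangement of inequalities that mirrors the classical proof almost verbatim.
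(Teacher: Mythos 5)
Your proposal is correct and follows essentially the same route as the paper's proof: the locality bound $|\d_i^T\E|\leq\epsilon_L$ via $\d_i=\R_i^T\R_i\d_i$, the stripe-counting bounds $(s-1)\mu$ and $s\mu$ on the correlated and uncorrelated sides, the observation that the residual representation retains an entry of magnitude at least $|\Gamma_{min}|$ on the not-yet-selected support, and the least-squares plus Lemma~\ref{lemma:GershgorinConvolutional} argument for part (b). Your unified induction in terms of $\|\z^k\|_\infty$ is a slightly cleaner packaging of the paper's first-iteration-then-subsequent-iterations presentation, but the ideas are identical.
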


The proof of this theorem is presented in the Supplementary Material, and the derivations therein are based on the analysis presented in \cite{Donoho2006}, generalizing the study to the convolutional setting. Note that we have assumed that the OMP algorithm runs for $\|\Gama\|_0$ iterations. We could also propose a different approach, however, using a stopping criterion based on the norm of the residual. Under such setting, the OMP would run until the energy of the global residual is less than the energy of the noise, given by $\epsilon^2$. 

\vspace{-0.1cm}
\subsection{Stability Guarantee of Basis Pursuit Denoising via ERC}
 A theoretical motivation behind relaxing the $\Loi$ norm to the convex $\ell_1$ was already established in Section \ref{Sec:TheoStudy}, showing that if the former is low, the BP algorithm is guaranteed to succeed. When moving to the noisy regime, the BP is naturally extended to the Basis Pursuit DeNoising (BPDN) algorithm\footnote{Note that an alternative to the BPDN extension is that of the Dantzig Selector algorithm. One can envision a similar analysis to the one presented here for this algorithm as well.}, which in its Lagrangian form is defined as follows
\begin{equation} \label{eq:lagrangian_BP}
\min_{\Gama} \frac{1}{2}\|\Y - \D \Gama \|^2_2 +\lambda \| \Gama \|_1.
\end{equation}
Similar to how BP was shown to approximate the solution to the $\Poi$ problem, in what follows we will prove that the BPDN manages to approximate the solution to the $\Poie$ problem.

Assuming the ERC is met, the stability of BP was proven under various noise models and formulations in \cite{Tropp2006}. By exploiting the convolutional structure used throughout our analysis, we now show that the ERC is met given that the $\Loi$ norm is small, tying the aforementioned results to our story.

\begin{thm}{(ERC in the convolutional sparse model):}
\label{Theorem:ERC_Loi}
	For a convolutional dictionary $\D$ with mutual coherence $\mu(\D)$, the ERC condition is met for every support $\mathcal{T}$ that satisfies
	\begin{equation}
		\|\mathcal{T}\|_{0,\infty} < \frac{1}{2}\left(1+\frac{1}{\mu(\D)}\right).
	\end{equation}
\end{thm}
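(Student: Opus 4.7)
The plan is to adapt the classical Tropp-style argument for the ERC to the convolutional setting, replacing the global sparsity with the $\Loi$ norm in the two quantitative steps that appear. By definition, the ERC holds for $\mathcal{T}$ iff $\max_{i\notin\mathcal{T}}\|\D^{\dagger}_\mathcal{T}\d_i\|_1<1$. Writing $\D^{\dagger}_\mathcal{T}=(\D_\mathcal{T}^T\D_\mathcal{T})^{-1}\D_\mathcal{T}^T$ and using submultiplicativity of the induced $\ell_1\to\ell_1$ operator norm, it suffices to control the two factors
\begin{equation}
\|\D^{\dagger}_\mathcal{T}\d_i\|_1 \;\leq\; \bigl\|(\D_\mathcal{T}^T\D_\mathcal{T})^{-1}\bigr\|_{1,1}\cdot\|\D_\mathcal{T}^T\d_i\|_1.
\end{equation}
For the second factor, fix $i\notin\mathcal{T}$. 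An atom $\d_j$ with $j\in\mathcal{T}$ contributes a non-zero inner product only if its support overlaps with that of $\d_i$, i.e., if $j$ lies in the stripe centred at position $i$. The hypothesis $\|\mathcal{T}\|_{0,\infty}=k$ bounds the number of such indices by $k$, and each inner product is at most $\mu(\D)$ in absolute value, so $\|\D_\mathcal{T}^T\d_i\|_1\leq k\mu(\D)$.

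Next I would bound the first factor by writing $\G^\mathcal{T}=\D_\mathcal{T}^T\D_\mathcal{T}=I+E$, where $E$ is the off-diagonal part. Re-running the stripe-counting argument used in the proof of Lemma \ref{lemma:GershgorinConvolutional}, each row of $E$ (and, by symmetry, each column) contains at most $k-1$ non-zero entries, each bounded by $\mu(\D)$, yielding $\|E\|_{1,1}\leq(k-1)\mu(\D)$. The hypothesis $k<\tfrac{1}{2}\bigl(1+1/\mu(\D)\bigr)$ ensures $(k-1)\mu(\D)<1$, so the Neumann series $(I+E)^{-1}=\sum_{t\geq 0}(-E)^t$ converges in the induced $\ell_1$ norm and delivers $\|(\G^\mathcal{T})^{-1}\|_{1,1}\leq 1/\bigl(1-(k-1)\mu(\D)\bigr)$.

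Combining the two estimates gives $\|\D^{\dagger}_\mathcal{T}\d_i\|_1 \leq k\mu(\D)/\bigl(1-(k-1)\mu(\D)\bigr)$, and an elementary rearrangement shows that the right-hand side is strictly less than one precisely when $k<\tfrac{1}{2}\bigl(1+1/\mu(\D)\bigr)$, which establishes $\theta>0$. The main subtlety, and the only real departure from Tropp's original argument, is ensuring that the counting in both bounding steps genuinely uses the local hypothesis $\|\mathcal{T}\|_{0,\infty}\leq k$ rather than the much larger global cardinality $|\mathcal{T}|$; once the stripe-locality of the convolutional dictionary is exploited (as in Lemma \ref{lemma:GershgorinConvolutional}), everything else is a routine transposition of the classical proof.
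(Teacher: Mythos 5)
Your proof is correct and follows essentially the same route as the paper's: the same submultiplicative splitting of $\|\D_\mathcal{T}^{\dagger}\d_i\|_1$, the same stripe-locality count giving $\|\D_\mathcal{T}^T\d_i\|_1\leq k\mu(\D)$, and the same bound $1/\bigl(1-(k-1)\mu(\D)\bigr)$ on the inverse Gram, followed by the identical rearrangement. The only cosmetic difference is that you control the inverse Gram via a Neumann series in the induced $\ell_1$ norm, whereas the paper passes to the induced $\ell_\infty$ norm by symmetry and invokes the Ahlberg--Nilson--Varah bound for diagonally dominant matrices; both yield exactly the same estimate.
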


Based on this and the analysis presented in \cite{Tropp2006}, we present a stability claim for the Lagrangian formulation of the BP problem as stated in Equation \eqref{eq:lagrangian_BP}.
\begin{thm}{(Stable recovery of global Basis Pursuit in the presence of noise):} \label{Theorem:StabilityBP}
	Suppose a clean signal $\X$ has a representation $\D\Gama$, and that it is contaminated with noise $\E$ to create the signal $\Y=\X+\E$. Denote by $\epsilon_{_L}$ the highest energy of all $n$-dimensional local patches extracted from $\E$. Assume $\Gama$ satisfies
	\begin{equation} \label{eq:BP_assumption}
	\|\Gama\|_{0,\infty}\leq\frac{1}{3} \left( 1 + \frac{1}{\mu(\D)} \right).
	\end{equation}
	Denoting by $\Gama_{\text{BP}}$ the solution to the Lagrangian BP formulation with parameter $\lambda=4\epsilon_L$, we are guaranteed that
	\begin{enumerate}
	\item The support of $\Gama_{\text{BP}}$ is contained in that of $\Gama$.
	\item $\|\Gama_{\text{BP}}-\Gama\|_\infty<\frac{15}{2}\epsilon_L$.
	\item In particular, the support of $\Gama_{\text{BP}}$ contains every index $i$ for which $|\Gamma_i|>\frac{15}{2}\epsilon_L$.
	\item The minimizer of the problem, $\Gama_{\text{BP}}$, is unique.
	\end{enumerate}
\end{thm}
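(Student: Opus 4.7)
The plan is to adapt the ERC-based stability analysis of \cite{Tropp2006} to the convolutional setting, exploiting two key ingredients: Theorem \ref{Theorem:ERC_Loi}, which converts the $\Loi$ hypothesis on $\Gama$ into an Exact Recovery Condition, and the compact support of convolutional atoms, which localizes the effect of the noise. First, I would invoke Theorem \ref{Theorem:ERC_Loi} on the support $\mathcal{T}$ of $\Gama$: since $\|\Gama\|_{0,\infty}\leq\frac{1}{3}(1+1/\mu(\D))<\frac{1}{2}(1+1/\mu(\D))$, the ERC holds with $\theta=1-\max_{i\notin\mathcal{T}}\|\D_\mathcal{T}^\dagger\d_i\|_1>0$. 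The strictly tighter factor $1/3$ in our hypothesis, combined with the Gerschgorin-type reasoning behind Lemma \ref{lemma:GershgorinConvolutional}, pushes $\theta$ comfortably above $1/2$, placing us in the strong-ERC regime required by \cite{Tropp2006}.

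Second, I would establish the central localization estimate $\|\D^T\E\|_\infty\leq\epsilon_L$. Every atom $\d_i$ is supported in a window of length $n$, so writing $\R_{j(i)}$ for the operator that extracts the $n$-patch containing $\mathrm{supp}(\d_i)$, one has
\begin{equation}
|\langle\d_i,\E\rangle|=|\langle\d_i,\R_{j(i)}\E\rangle|\leq\|\d_i\|_2\,\|\R_{j(i)}\E\|_2\leq\epsilon_L.
\end{equation}
This is the step that converts the \emph{local} noise energy $\epsilon_L$ into a bound on the \emph{global} correlation vector $\D^T\E$, and is what makes the final stability bound meaningful when $\epsilon_L\ll\epsilon$.

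Third, I would feed these two pieces into the standard dual-certificate argument. Consider the oracle minimizer $\tilde{\Gama}$ of the restricted problem $\min_{\Gama'}\tfrac{1}{2}\|\Y-\D_\mathcal{T}\Gama'_\mathcal{T}\|_2^2+\lambda\|\Gama'_\mathcal{T}\|_1$, which is unique since $\D_\mathcal{T}$ is full column rank (an easy consequence of the Stripe-Spark bound implied by our hypothesis). By the KKT conditions for the full Lagrangian problem, $\tilde{\Gama}$ is a global minimizer of \eqref{eq:lagrangian_BP} precisely when $\max_{j\notin\mathcal{T}}|\langle\d_j,\Y-\D\tilde{\Gama}\rangle|<\lambda$. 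Expanding $\Y-\D\tilde{\Gama}$, applying the ERC to the on-support component, and controlling the noise component by $\|\D^T\E\|_\infty\leq\epsilon_L$, the choice $\lambda=4\epsilon_L$ leaves enough margin (because $\theta\geq 1/2$) to certify this inequality strictly. This simultaneously proves support containment (conclusion (a)) and uniqueness (conclusion (d)).

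Finally, for the $\ell_\infty$ bound in (b), the normal equations for $\Gama_{\text{BP}}$ on $\mathcal{T}$ give $\Gama_{\text{BP}}-\Gama=(\D_\mathcal{T}^T\D_\mathcal{T})^{-1}\bigl(\D_\mathcal{T}^T\E-\lambda\,\mathrm{sgn}(\Gama_{\text{BP},\mathcal{T}})\bigr)$, whence
\begin{equation}
\|\Gama_{\text{BP}}-\Gama\|_\infty\leq\|(\D_\mathcal{T}^T\D_\mathcal{T})^{-1}\|_{\infty,\infty}\bigl(\|\D_\mathcal{T}^T\E\|_\infty+\lambda\bigr).
\end{equation}
Lemma \ref{lemma:GershgorinConvolutional} gives $\|(\D_\mathcal{T}^T\D_\mathcal{T})^{-1}\|_{\infty,\infty}\leq 1/(1-(k-1)\mu(\D))$ with $k=\|\Gama\|_{0,\infty}$, and the hypothesis forces $(k-1)\mu(\D)\leq 1/3$, yielding the bound $5\epsilon_L/(2/3)=\tfrac{15}{2}\epsilon_L$. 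Conclusion (c) is then immediate, since any index $i\notin\mathrm{supp}(\Gama_{\text{BP}})$ with $|\Gamma_i|>\tfrac{15}{2}\epsilon_L$ would yield $|\Gamma_{\text{BP},i}-\Gamma_i|=|\Gamma_i|>\tfrac{15}{2}\epsilon_L$, contradicting (b). The main obstacle is the careful bookkeeping inside the certificate inequality: showing that $\lambda=4\epsilon_L$ is simultaneously large enough to dominate the correlated off-support perturbation (via the ERC) and small enough to keep the on-support shrinkage controlled, and tracking how the factor $1/3$ in the hypothesis is exactly what makes the constants $4$ and $\tfrac{15}{2}$ close up.
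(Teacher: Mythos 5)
Your proposal follows essentially the same route as the paper: Theorem \ref{Theorem:ERC_Loi} to get the ERC with $\theta\geq\frac{1}{2}$ from the $\frac{1}{3}$-hypothesis, the locality of the atoms to replace the global noise energy by $\epsilon_L$, and a Tropp-style dual-certificate argument (which the paper invokes as a cited black box and you re-derive inline); your single normal-equations estimate for conclusion (2) is algebraically the same as the paper's split $\|\Gama_{\text{BP}}-\Gama_{\text{LS}}\|_\infty+\|\Gama_{\text{LS}}-\Gama\|_\infty<6\epsilon_L+\frac{3}{2}\epsilon_L$.

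The one place where your bookkeeping is incomplete is the off-support certificate. The residual whose correlations you must control is not $\E$ but $\Y-\X_{\text{LS}}=(\mathbf{I}-\D_\mathcal{T}(\D_\mathcal{T}^T\D_\mathcal{T})^{-1}\D_\mathcal{T}^T)\E$, and your localization estimate $\|\D^T\E\|_\infty\leq\epsilon_L$ covers only the first of its two pieces. The projected piece contributes an additional term
\begin{equation}
\bigl\| \DTB^T\DT \bigl( \DT^T \DT \bigr)^{-1} \DT^T\E \bigr\|_\infty \leq \epsilon_L,
\end{equation}
whose control again requires the ERC-type bound $\|\d_i^T\DT\|_1\leq k\mu(\D)$ together with the Varah bound on $\|(\DT^T\DT)^{-1}\|_\infty$; this is exactly the content of the paper's Lemma 2, which yields $\|\D^T(\Y-\X_{\text{LS}})\|_\infty\leq 2\epsilon_L$ rather than $\epsilon_L$. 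The certificate still closes because $\lambda\theta=4\epsilon_L\cdot\frac{1}{2}=2\epsilon_L$ exactly matches this bound, but as written your margin claim overstates the slack by a factor of two, and a reader cannot verify the choice $\lambda=4\epsilon_L$ without that missing term. A second, minor point: the induced-$\ell_\infty$ bound $\|(\D_\mathcal{T}^T\D_\mathcal{T})^{-1}\|_{\infty}\leq 1/(1-(k-1)\mu(\D))$ does not follow from the Gershgorin eigenvalue estimate of Lemma \ref{lemma:GershgorinConvolutional} (which controls the spectral norm); it needs the diagonal-dominance (Ahlberg--Nilson--Varah) bound, as the paper notes. With those two repairs your argument is complete and yields the stated constants.
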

\noindent
The proof for both of the above, inspired by the derivations in \cite{Elad_Book} and \cite{Tropp2006}, are presented in the Supplementary Material.

The benefit of this over traditional claims is, once again, the replacement of the $\ell_0$ with the $\Loi$ norm. Moreover, this result bounds the difference between the entries in $\Gama_{\text{BP}}$ and $\Gama$ in terms of a local quantity -- the local noise level $\epsilon_L$. As a consequence, all atoms with coefficients above this local measure are guaranteed to be recovered.

The implications of the above theorem are far-reaching as it provides a sound theoretical back-bone for all works that have addressed the convolutional BP problem in its Lagrangian form \cite{Bristow2013,Wohlberg2016,Bristow2014,Heide2015,Kong2014}. In Section \ref{Sec:FromGlobal2LocalProcessing} we will propose two additional algorithms for solving the global BP efficiently by working locally, and these methods would benefit from this theoretical result as well. As a last comment, a different and perhaps more appropriate convex relaxation for the $\Loi$ norm could be suggested, such as the $\ell_{1,\infty}$ norm. This, however, remains one of our future work challenges. 

\subsection{Experiments}

Following the above analysis, we now provide a numerical experiment demonstrating the above obtained bounds. The global dictionary employed here is the same as the one used for the noiseless experiments in Section \ref{Sec:TheoStudy}, with mutual coherence $\mu(\D)=0.09$, local atoms of length $n=64$ and global ones of size $N = 640$. We sample random sparse vectors with cardinality between $1$ and $500$, with entries drawn from a uniform distribution with range $\left[-a,a\right]$, for varying values of $a$. Given these vectors, we construct global signals and contaminate them with noise. The noise is sampled from a zero-mean unit-variance white Gaussian distribution, and then normalized such that $\| \E \|_2 = 0.1$.

\begin{figure}[t]
\centering
\includegraphics[trim = 50 0 30 20, width = 0.35\textwidth]{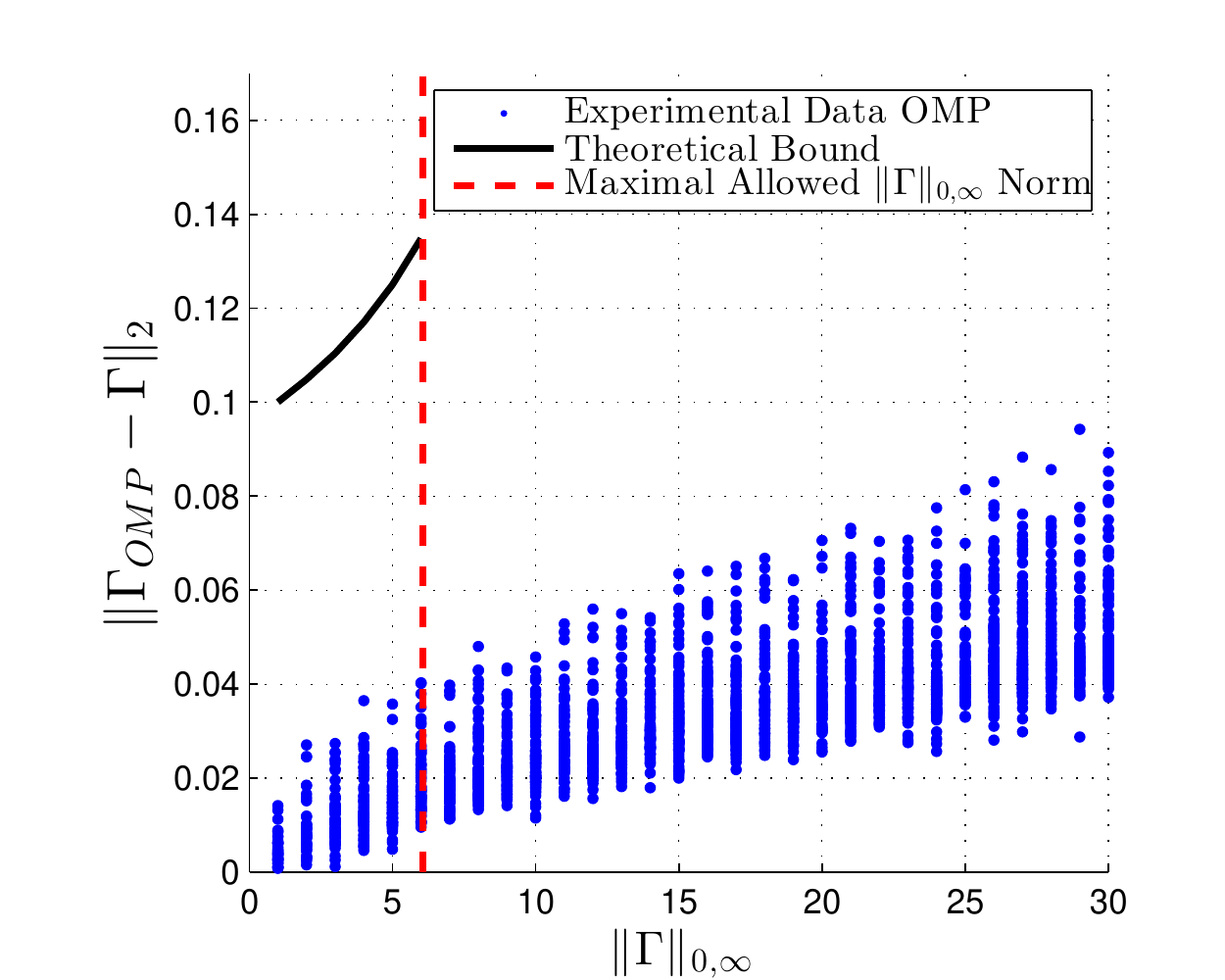}
\caption{The distance $\|\Gama_{\text{OMP}} - \Gama\|_2$ as a function of the $\Loi$ norm, and the corresponding theoretical bound.}
\label{fig:L2Dist_OMP_Noisy}
\vspace{-0.2cm}
\end{figure}

In what follows, we will first center our attention on the bounds obtained for the OMP algorithm, and then proceed to the ones corresponding to the BP. Given the noisy signals, we run OMP with a sparsity constraint, obtaining $\Gama_{\text{OMP}}$. For each realization of the global signal, we compute the minimal entry (in absolute value) of the global sparse vector, $|\Gamma_{min}|$, and its $\Loi$ norm. In addition, we compute the maximal local energy of the noise, $\epsilon_L$, corresponding to the highest energy of a $n$-dimensional patch of $\E$.

Recall that the theorem in the previous subsection poses two claims: 1) the stability of the result in terms of $\|\Gama_{\text{OMP}} - \Gama\|_2$; and 2) the success in recovering the correct support. In Figure \ref{fig:L2Dist_OMP_Noisy} we investigate the first of these points, presenting the distance between the estimated and the true sparse codes as a function of the $\Loi$ norm of the original vector. As it is clear from the graph, the empirical distances are below the theoretical bound depicted in black, given by $\frac{\epsilon^2}{1-\mu(\D)(\|\Gama\|_{0,\infty}-1)}$. According to the theorem's assumption, the sparse vector should satisfy $\|\Gama\|_{0,\infty} < \frac{1}{2}\left(1 + \frac{1}{\mu(\D)} \right) - \frac{1}{\mu(\D)}\cdot\frac{\epsilon_{_L}}{|\Gamma_{\min}|}$. The red dashed line delimits the area where this is met, with the exception that we omit the second term in the previous expression, as done previously in \cite{Donoho2006}. This disregards the condition on the $|\Gamma_{\min}|$ and $\epsilon_{_L}$ (which depends on the realization). Yet, the empirical results remain stable. 

In order to address the successful recovery of the support, we compute the ratio $\frac{\epsilon_{_L}}{|\Gamma_{\min}|}$ for each realization in the experiment. In Figure \ref{fig:PhaseTransition_OMP_Noisy}, for each sample we denote by $\bullet$ or $\times$ the success or failure in recovering the support, respectively. Each point is plotted as a function of its $\Loi$ norm and its corresponding ratio. The theoretical condition for the success of the OMP can be rewritten as $\frac{\epsilon_{_L}}{|\Gamma_{min}|} < \frac{\mu(\D)}{2}\left( 1+\frac{1}{\mu(\D)} \right) - \mu(\D)\|\Gama\|_{0,\infty}$, presenting a bound on the ratio $\frac{\epsilon_{_L}}{|\Gamma_{\min}|}$ as a function of the $\Loi$ norm. This bound is depicted with a blue line, indicating that the empirical results agree with the theoretical claims.

\begin{figure}[t]
	\centering
	\begin{subfigure}[t]{.5\textwidth}
		\centering
		\includegraphics[trim = 15 0 15 0,width = 1\textwidth]{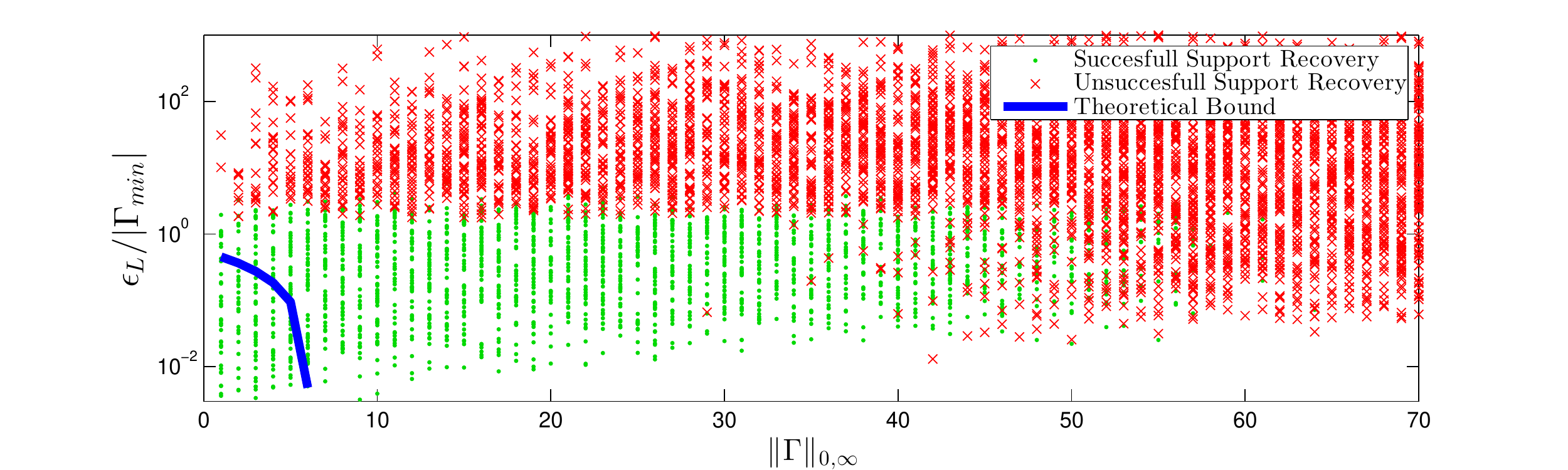}
		\caption{Orthogonal Matching Pursuit.}
		\label{fig:PhaseTransition_OMP_Noisy}
	\end{subfigure} 
	\\[.15cm]
	\begin{subfigure}[t]{.5\textwidth}
		\centering
		\includegraphics[trim = 15 0 15 0,width = 1\textwidth]{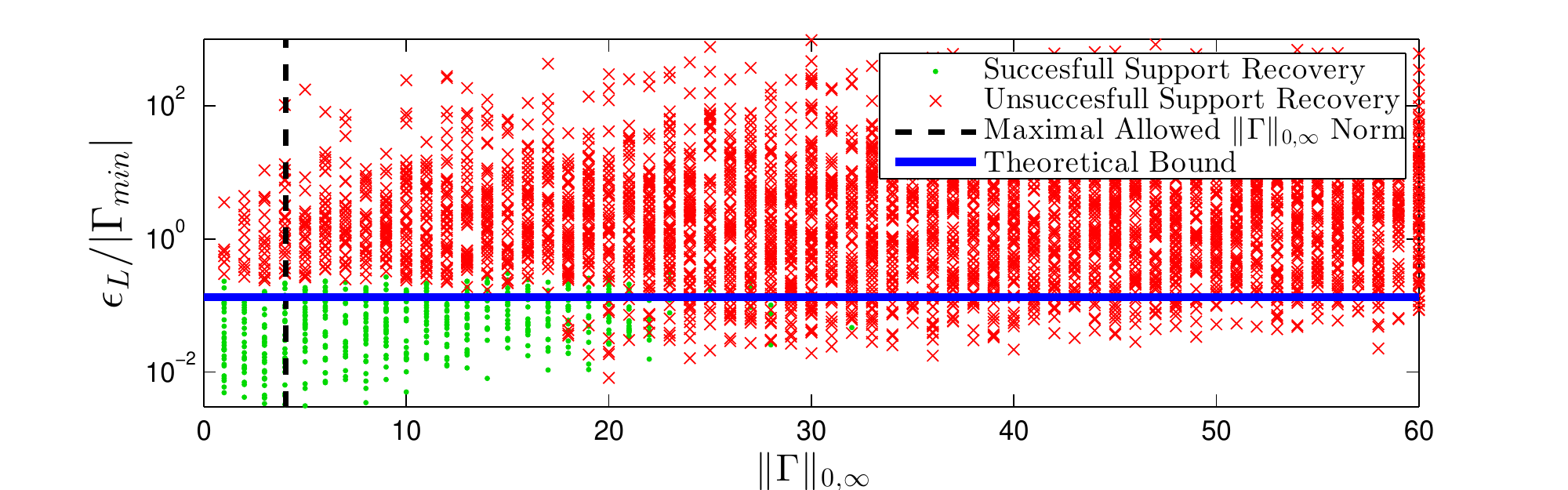}
		\caption{Basis Pursuit.}
		\label{fig:PhaseTransition_BP_Noisy}
	\end{subfigure}
	\caption{The ratio $\epsilon_{_L}/|\Gamma_{\min}|$ as a function of the $\Loi$ norm, and the theoretical bound for the successful recovery of the support, for both the OMP and BP algorithms.}
	\vspace{-0.3cm}
\end{figure}

One can also observe two distinct phase transitions in Figure \ref{fig:PhaseTransition_OMP_Noisy}. On the one hand, noting that the $y$ axis can be interpreted as the inverse of the noise-to-signal ratio (in some sense), we see that once the noise level is too high, OMP fails in recovering the support\footnote{Note that the abrupt change in this phase-transition area is due to the log scale of the $y$ axis.}. On the other hand, similar to what was presented in the noiseless case, once the $\Loi$ norm becomes too large, the algorithm is prone to fail in recovering the support.

We now shift to the empirical verification of the guarantees obtained for the BP. We employ the same dictionary as in the experiment above, and the signals are constructed in the same manner. We use the implementation of the LARS algorithm within the SPAMS package\footnote{Freely available from http://spams-devel.gforge.inria.fr/.} in its Lagrangian formulation with the theoretically justified parameter $\lambda=4\epsilon_L$, obtaining $\Gama_{\text{BP}}$. Once again, we compute the quantities: $|\Gamma_{min}|$, $\|\Gama\|_{0,\infty}$ and $\epsilon_L$.

Theorem \ref{Theorem:StabilityBP} states that the $\ell_\infty$ distance between the BP solution and the true sparse vector is below $\frac{15}{2}\epsilon_L$. In Figure \ref{fig:LInfDist_BP_Noisy} we depict the ratio $\frac{\|\Gama_{\text{BP}}-\Gama\|_\infty}{\epsilon_L}$ for each realization, verifying it is indeed below $\frac{15}{2}$ as long as the $\Loi$ norm is below $\frac{1}{3}\left(1+\frac{1}{\mu(\D)}\right) \approx 4$. Next, we would like to corroborate the assertions regarding the recovery of the true support. To this end, note that the theorem guarantees that all entries satisfying $|\Gamma_i|>\frac{15}{2}\epsilon_L$ shall be recovered by the BP algorithm. Alternatively, one can state that the complete support must be recovered as long as $\frac{\epsilon_L}{|\Gamma_{\min}|}<\frac{2}{15}$. To verify this claim, we plot this ratio for each realization as function of the $\Loi$ norm in Figure \ref{fig:PhaseTransition_BP_Noisy}, marking every point according to the success or failure of BP (in recovering the complete support).
As evidenced in \cite{Elad_Book}, OMP seems to be far more accurate than the BP in recovering the true support. As one can see by comparing Figure \ref{fig:PhaseTransition_OMP_Noisy} and \ref{fig:PhaseTransition_BP_Noisy}, BP fails once the $\Loi$ norm goes beyond $20$, while OMP succeeds all the way until $\|\Gama\|_{0,\infty}=40$. 

\section{From Global Pursuit to Local Processing}
\label{Sec:FromGlobal2LocalProcessing}

We now turn to analyze the practical aspects of solving the $\Poie$ problem given the relationship $\Y = \D\Gama + \E$. Motivated by the theoretical guarantees of success derived in the previous sections, the first na\"ive approach would be to employ global pursuit methods such as OMP and BP. However, these are computationally demanding as the dimensions of the convolutional dictionary are prohibitive for high values of $N$, the signal length.  

As an alternative, one could attempt to solve the $\Poie$ problem using a patch-based processing scheme. In this case, for example, one could suggest to solve a local and relatively cheaper pursuit for every patch in the signal (including overlaps) using the local dictionary $\D_L$. It is clear, however, that this approach will not work well under the convolutional model, because atoms used in overlapping patches are simply not present in $\D_L$. On the other hand, one could turn to employ $\O$ as the \emph{local} dictionary, but this is prone to fail in recovering the correct support of the atoms. To see this more clearly, note that there is no way to distinguish between any of the atoms having only one entry different than zero; i.e., those appearing on the extremes of $\O$ in Figure \ref{PartialStripe}.

As we can see, neither the na\"ive global approach, nor the simple patch-based processing, provide an effective strategy. Several questions arise from this discussion: Can we solve the global pursuit problem using local patch-based processing? Can the proposed algorithm rely merely on the low dimensional dictionaries $\D_L$ or $\O$ while still fully solving the global problem? If so, in what form should the local patches communicate in order to achieve a global consensus? In what follows, we address these issues and provide practical and globally optimal answers.

\begin{figure}[t]
	\centering
	\includegraphics[trim = 50 0 30 20, width = 0.35\textwidth]{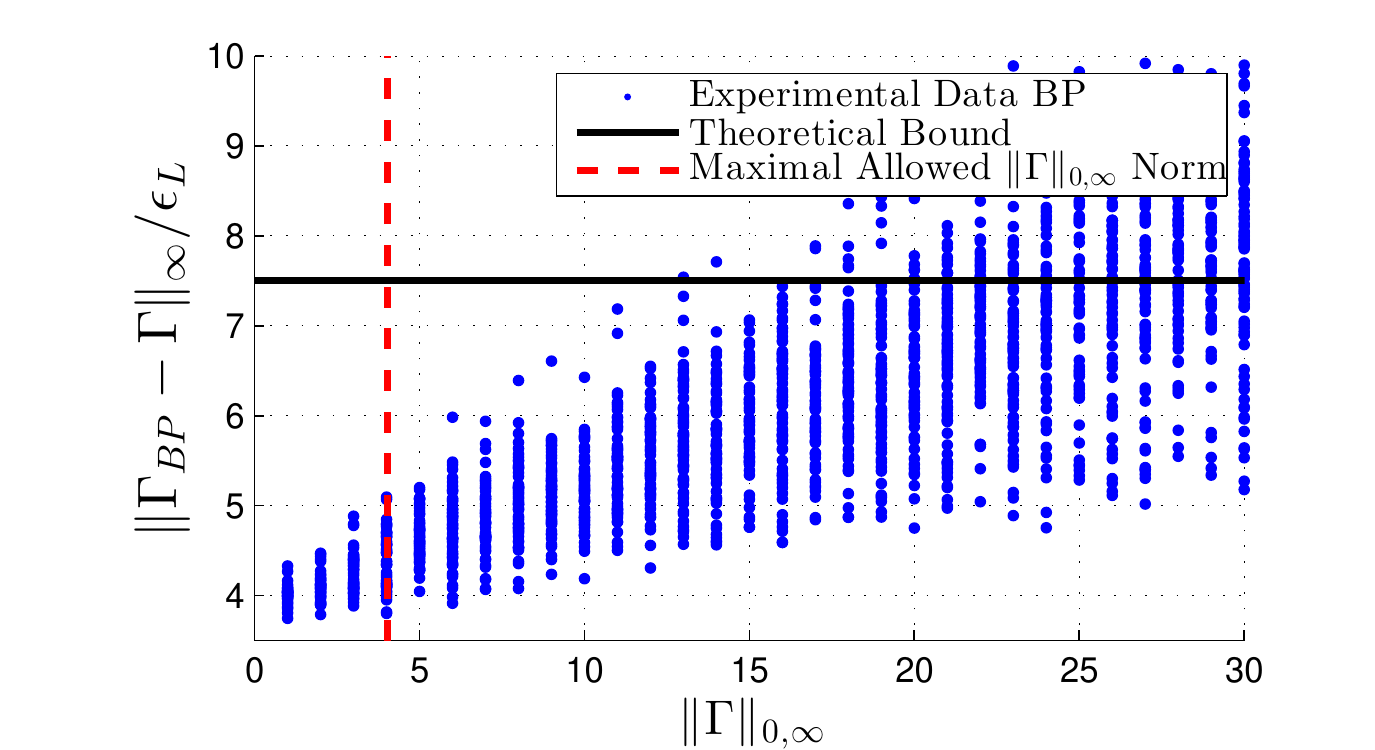}
	\caption{The distance $\|\Gama_{\text{BP}} - \Gama\|_\infty/ \epsilon_L$ as a function of the $\Loi$ norm, and the corresponding theoretical bound.}
	\label{fig:LInfDist_BP_Noisy}
	\vspace{-0.3cm}
\end{figure}

\vspace{-0.1cm}
\subsection{Global to Local Through Bi-Level Consensus}

When dealing with global problems which can be solved locally, a popular tool of choice is the Alternating Direction Method of Multipliers (ADMM) \cite{Boyd2011} in its consensus formulation. In this framework, a global objective can be decomposed into a set of local and distributed problems which attempt to reach a global agreement. We will show that this scheme can be effectively applied in the convolutional sparse coding context, providing an algorithm with a bi-level consensus interpretation.

The ADMM has been extensively used throughout the literature in convolutional sparse coding. However, as mentioned in the introduction, it has been usually applied in the Fourier domain. As a result, the sense of locality is lost in these approaches and the connection to traditional (local) sparse coding is non-existent. On the contrary, the pursuit method we propose here is carried out in a localized fashion in the original domain, while still benefiting from the advantages of ADMM.

Recall the $\ell_1$ relaxation of the global pursuit, given in Eq. \eqref{eq:lagrangian_BP}.  Note that the noiseless model is contained in this formulation as a particular case when $\lambda$ tends to zero.
Using the separability of the $\ell_1$ norm, $\| \Gama \|_1 = \sum_i \|\alfa_i\|_1$, where $\alfa_i$ are $m-$dimensional local sparse vectors, as previously defined. In addition, using the fact that $\R_i\D\Gama = \O \gama_i$, we apply a local decomposition on the first term as well. This results in 
\begin{equation}
\min_{\{\alfa_i\},\{\gama_i\}} \quad \frac{1}{2n} \sum_i \| \R_i \Y - \O \gama_i \|^2_2 +\lambda \sum_i \|\alfa_i\|_1,
\end{equation}
where we have divided the first sum by the number of contributions per entry in the global signal, which is equal to the patch size $n$.
Note that the above minimization is not equivalent to the original problem in Equation \eqref{eq:lagrangian_BP} since no explicit consensus is enforced between the local variables. Recall that the different $\gama_i$ overlap, and so we must enforce them to agree. In addition, $\alfa_i$ should be constrained to be equal to the center of the corresponding $\gama_i$. Based on these observations, we modify the above problem by adding the appropriate constraints, obtaining
\begin{align}
\min_{\{\alfa_i\},\{\gama_i\},\Gama} \quad \frac{1}{2n} \sum_i \| \R_i \Y  &- \O \gama_i \|^2_2 +\lambda \sum_i \|\alfa_i\|_1 \\
& \text{ s.t. } \begin{cases} \mathbf{Q} \gama_i = \alfa_i \\ \S_i \Gama = \gama_i \end{cases} \forall i,
\end{align}
where $\mathbf{Q}$ extracts the center $m$ coefficients corresponding to $\alfa_i$ from $\gama_i$, and $\S_i$ extracts the $i^{th}$ stripe $\gama_i$ from $\Gama$.

Defining $f_i(\gama_i) = \frac{1}{2n} \| \R_i \Y-  \O \gama_i \|^2_2 $ and $g(\alfa_i) = \lambda \| \alfa_i \|_1 $, the above problem can be minimized by employing the ADMM algorithm, as depicted in Algorithm \ref{alg:ADMM_algo}.
This is a two-level local-global consensus formulation:  each $m$ dimensional vector $\alfa_i$ is enforced to agree with the center of its corresponding $(2n-1)m$ dimensional $\gama_i$, and in addition, all $\gama_i$ are required to agree with each other as to create a global $\Gama$. 
The above can be shown to be equivalent to the standard two-block ADMM formulation \cite{Boyd2011}. 
Each iteration of this method can be divided into four steps:
\begin{enumerate}
\item Local sparse coding that updates $\alfa_i$ (for all $i$), which amounts to a simple soft thresholding operation.
\item Solution of a linear system of equations for updating $\gama_i$ (for all $i$), which boils down to a simple multiplication by a constant matrix.
\item Update of the global sparse vector $\Gama$, which aggregates the $\gama_i$ by averaging.
\item Update of the dual variables.
\end{enumerate}
As can be seen, the ADMM provides a simple way of breaking the global pursuit into local operations. Moreover, the local coding step is just a projection problem onto the $\ell_1$ ball, which can be solved through simple soft thresholding, implying that there is no complex pursuit involved.

\begin{algorithm}[t]
 \While{not converged}{
 
 \vspace{0.3cm}
 Local Thresholding: $\alfa_i \leftarrow \underset{\alfa}{\min}\ \lambda \| \alfa \|_1 + \frac{\rho}{2} \|  \mathbf{Q}\gama_i - \alfa + \u_i  \|^2_2$ \;
  
  \vspace{0.3cm}
  Stripe Projection:
  \begin{align}
  \hspace{-0.7321cm} \gama_i \leftarrow \M^{-1} \left( \frac{1}{n}\O^T\R_i\Y \right. & + {\rho}( \S_i\Gama + \uh_i) \\
  & + \rho \mathbf{Q}^T ( \alfa_i-\u_i ) \Big),
  \end{align}
  where $\M = \rho \mathbf{Q}^T\mathbf{Q} + \frac{1}{n}\O^T\O + {\rho} \mathbf{I}$\;
  
  \vspace{0.3cm}
  Global Update:\newline
  $\Gama \leftarrow \left( \sum_i \S_i^T \S_i \right)^{-1} \sum_i \S_i^T (\gama_i - \uh_i) $ \;
 
  \vspace{0.3cm}
  Dual Variables Update:\newline
  $\u_i \leftarrow \u_i + (\mathbf{Q} \gama_i - \alfa_i) $ \; 
  $\uh_i \leftarrow \uh_i + (\S_i \Gama - \gama_i) $ \;
 }
 \caption{Locally operating global pursuit via ADMM.}
\label{alg:ADMM_algo}
\end{algorithm}

Since we are in the $\ell_1$ case, the function $g$ is convex, as are the functions $f_i$. Therefore, the above is guaranteed to converge to the minimizer of the global BP problem. As a result, we benefit from the theoretical guarantees derived in previous sections. One could attempt, in addition, to enforce an $\ell_0$ penalty instead of the $\ell_1$ norm on the global sparse vector. Despite the fact that no convergence guarantees could be claimed under such formulation, the derivation of the algorithm remains practically the same, with the only exception that the soft thresholding is replaced by a hard one.

\vspace{-0.3cm}
\subsection{An Iterative Soft Thresholding Approach}
While the above algorithm suggests a way to tackle the global problem in a local fashion, the matrix involved in the stripe projection stage, $\Z^{-1}$, is relatively large when compared to the dimensions of $\D_L$. As a consequence, the bi-level consensus introduces an extra layer of complexity to the algorithm. In what follows, we propose an alternative method based on the Iterative Soft Thresholding (IST) algorithm that relies solely on multiplications by $\D_L$ and features a simple intuitive interpretation and implementation. A similar approach for solving the convolutional sparse coding problem was suggested in \cite{Chalasani2013}. Our main concern here is to provide insights into local alternatives for the global sparse coding problem and their guarantees, whereas the work in \cite{Chalasani2013} focused on the optimizations aspects of this pursuit from an entirely global perspective.

\begin{algorithm}[t]
	$\forall i \quad \r_i^0 = \R_i \Y, \quad \alfa_i^0 = \mathbf{0}$\;
	k = 1\;
	\While{not converged}{
		\vspace{0.3cm}
		Local Coding:\newline $\forall i \quad \alfa_i^k = \mathcal{S}_{\lambda/c}\left( \alfa_i^{k-1} + \frac{1}{c} \ \D_L^T \ \r_i^{k-1} \right)$ \;
		
		\vspace{0.3cm}
		Computation of the Patch Averaging Aggregation:\newline
		$ \widehat{\X}^k = \sum_i \R_i^T\D_L\alfa_i^k $ \; 
				
		\vspace{0.3cm}
		Update of the Residuals:\newline
		$\forall i \quad \r_i^k = \R_i \left( \Y - \widehat{\X}^k \right)$ \; 
		
		\vspace{0.3cm}
		$k=k+1$\;
	}
	\caption{Global pursuit using local processing via iterative soft thresholding.}
	\label{alg:IST_algo}
\end{algorithm}

Let us consider the IST algorithm \cite{Daubechies2004} which minimizes the global objective in Equation \eqref{eq:lagrangian_BP}, by iterating the following updates
\begin{equation}
\Gama^k = \mathcal{S}_{\lambda/c}\left( \Gama^{k-1} + \frac{1}{c} \D^T(\Y-\D\Gama^{k-1}) \right),
\end{equation}
where $\mathcal{S}$ applies an entry-wise soft thresholding operation with threshold $\lambda/c$. Interpreting the above as a projected gradient descent, the coefficient $c$ relates to the gradient step size and should be set according to the maximal singular value of the matrix $\D$ in order to guarantee convergence \cite{Daubechies2004}.

The above algorithm might at first seem undesirable due to the multiplications of the residual $\Y-\D\Gama^{k-1}$ with the global dictionary $\D$. Yet, as we show in the Supplementary Material, such a multiplication does not need to be carried out explicitly due to the convolutional structure imposed on our dictionary. In fact, the above is mathematical equivalent to an algorithm that performs local updates given by 
\begin{equation}
\alfa_i^k = \mathcal{S}_{\lambda/c}\left( \alfa_i^{k-1} + \frac{1}{c} \ \D_L^T \ \r_i^{k-1} \right),
\end{equation}
where $\r_i^k=\R_i (\Y-\D\Gama^{k-1})$ is a patch from the global residual. This scheme is depicted in Algorithm \ref{alg:IST_algo}.

\begin{figure}[!htb]
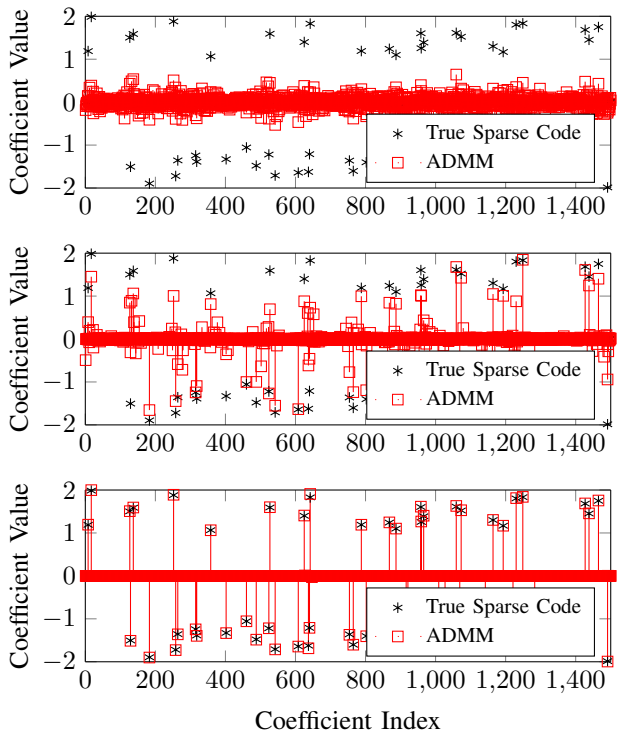

	\centering
	\begin{subfigure}{.45\textwidth}
		\setlength\figureheight{.28\textwidth}
		\setlength\figurewidth{.9\textwidth}
		\input{graphics/ADMM_L1_noiseless_codes_1.tikz}
	\end{subfigure}
	
	\centering
	\vspace{0.1cm}
	\begin{subfigure}{.45\textwidth}
		\setlength\figureheight{.28\textwidth}
		\setlength\figurewidth{.9\textwidth}
		\input{graphics/ADMM_L1_noiseless_codes_2.tikz}
	\end{subfigure}
	
	\centering
	\vspace{0.1cm}
	\begin{subfigure}{.45\textwidth}
		\setlength\figureheight{.28\textwidth}
		\setlength\figurewidth{.9\textwidth}
		\input{graphics/ADMM_L1_noiseless_codes_3.tikz}
	\end{subfigure}
	\caption{The sparse vector $\Gama$ after the global update stage in the ADMM algorithm at iterations $20$ (top), $200$ (middle) and $1000$ (bottom). An $\ell_1$ norm formulation was used for this experiment, in a noiseless setting.}
	\label{fig:ADMM_L1_noiseless_codes}
	
	\end{figure}

From an optimization point of view, one can interpret each iteration of the above as a scatter and gather process: local residuals are first extracted and scattered to different nodes where they undergo shrinkage operations, and the results are then gathered for the re-computation of the global residual. From an image processing point of view, this algorithm decomposes a signal into overlapping patches, {\sl restores} these separately and then aggregates the result for the next iteration. Notably, this is very reminiscent of the patch averaging scheme, as described in the introduction, and it shows for the first time the relation between patch averaging and the convolutional sparse model. While the former processes every patch once and independently, the above algorithm indicates that one must iterate this process if one is to reach global consensus.

Assuming the step size is chosen appropriately, the above algorithm is also guaranteed to converge to the solution of the global BP. As such, our theoretical analysis holds in this case as well. Alternatively, one could attempt to employ an $\ell_0$ approach, using a global iterative hard thresholding algorithm. In this case, however, there are no theoretical guarantees in terms of the $\Loi$ norm. Still, we believe that a similar analysis to the one taken throughout this work could lead to such claims.

\vspace{-0.3cm}
\subsection{Experiments}
Next, we proceed to provide empirical results for the above described methods. To this end, we take an undercomplete DCT dictionary of size $25\times 5$, and use it as $\D_L$ in order to construct the global convolutional dictionary $\D$ for a signal of length $N = 300$. We then generate a random global sparse vector $\Gama$ with $50$ non-zeros, with entries distributed uniformally in the range $[-2,-1]\ \cup\ [1,2]$, creating the signal $\X = \D\Gama$.

\begin{figure}[t]
	\centering
	\includegraphics[trim = 30 10 10 10, width = 0.48\textwidth]{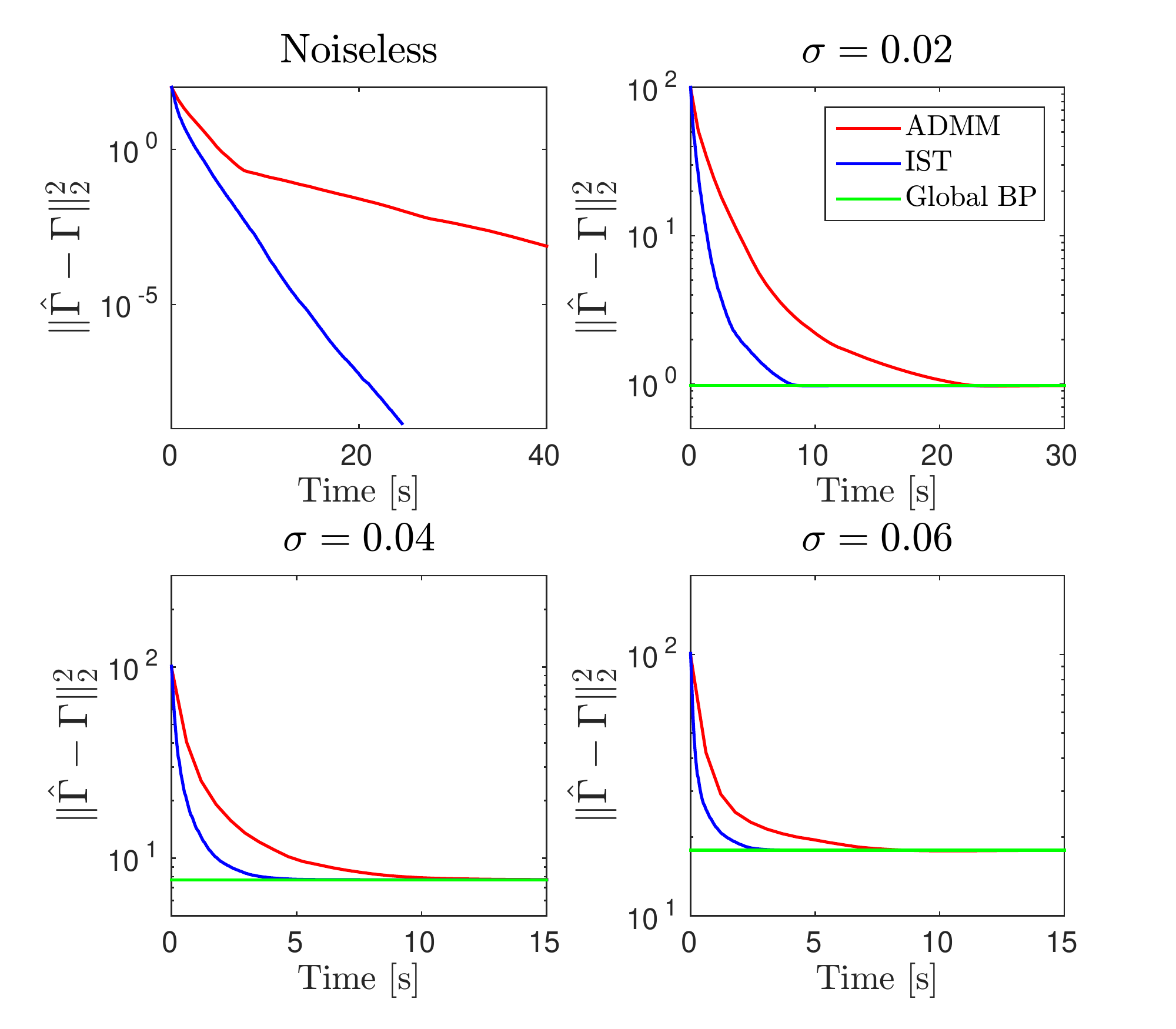}
	\caption{Distance between the estimate $\hat{\Gama}$ and the underlying solution $\Gama$ as a function of time for the IST and the ADMM algorithms compared to the solution obtained by solving the global BP.}
	\label{fig:ADMM_IT}
	\vspace{-0.3cm}
\end{figure}

We first employ the ADMM and IST algorithms in a noiseless scenario in order to minimize the global BP and find the underlying sparse vector. Since there is no noise added in this case, we decrease the penalty parameter $\lambda$ progressively throughout the iterations, making this value tend to zero as suggested in the previous subsection. In Figure \ref{fig:ADMM_L1_noiseless_codes} we present the evolution of the estimated $\hat{\Gama}$ for the ADMM solver throughout the iterations, after the global update stage. Note how the algorithm progressively increases the consensus and eventually recovers the true sparse vector. Equivalent plots are obtained for the IST method, and these are therefore omitted.

To extend the experiment to the noisy case, we contaminate the previous signal with additive white Gaussian noise of different standard deviations: $\sigma=0.02,0.04,0.06$. We then employ both local algorithms to solve the corresponding BPDN problems, and analyze the $\ell_2$ distance between their estimated sparse vector and the true one, as a function of time. These results are depicted in Figure \ref{fig:ADMM_IT}, where we include for completion the distance of the solution achieved by the global BP in the noisy cases. A few observations can be drawn from these results. Note that both algorithms converge to the solution of the global BP in all cases. In particular, the IST converges significantly faster than the ADMM method. Interestingly, despite the later requiring a smaller number of iterations to converge, these are relatively more expensive than those of the IST, which employs only multiplications by the small $\D_L$.

%

\vspace{-0.25cm}
\section{Conclusion and Future Work}
\label{Sec:Conclusions}

In this work we have presented a formal analysis of the convolutional sparse representation model. In doing so, we have reformulated the objective of the global pursuit, introducing the $\Loi$ norm and the corresponding $\Poi$ problem, and proven the uniqueness of its solution. By migrating from the $P_0$ to the $\Poi$ problem, we were able to provide meaningful guarantees for the success of popular algorithms in the noiseless case, improving on traditional bounds that were shown to be very pessimistic under the convolutional case. In order to achieve such results, we have generalized a series of concepts such as Spark and the mutual coherence to their counterparts in the convolutional setting. 

Striding on the foundations paved in the first part of this work, we moved on to present a series of stability results for the convolutional sparse model in the presence of noise, providing guarantees for corresponding pursuit algorithms. These were possible due to our migration from the $\ell_0$ to the $\Loi$ norm, together with the generalization and utilization of concepts such as RIP and ERC. Seeking for a connection between traditional patch-based processing and the convolutional sparse model, we finally proposed two efficient methods that solve the global pursuit while working locally. 

We envision many possible directions of future work, and here we outline some of them:
\begin{itemize}
	\item We could extend our study, which considers only worst-case scenarios, to an average-performance analysis. By assuming more information about the model, it might be possible to quantify the probability of success of pursuit methods in the convolutional case. Such results would close the gap between current bounds and empirical results.
	
	\item From an application point of view, we envision that interesting algorithms could be proposed to tackle real problems in signal and image processing while using the convolutional model. We note that while convolutional sparse coding has been applied to various problems, simple inverse problems such as denoising have not yet been properly addressed. We believe that the analysis presented in this work could facilitate the development of such algorithms by showing how to leverage on the subtleties of this model.	

	\item Interestingly, even though we have declared the $\Poi$ problem as our goal, at no point have we actually attempted to tackle it directly. What we have shown instead is that popular algorithms succeed in finding its solution. One could perhaps propose an algorithm specifically tailored for solving this problem -- or its convex relaxation ($\ell_{1,\infty}$). Such a method might be beneficial from both a theoretical and a practical aspect.
\end{itemize}
All these points, and more, are matter of current research.

\vspace{-0.25cm}
\section{Acknowledgements}
The research leading to these results has received funding from the European Research Council under European Union’s Seventh Framework Programme, ERC Grant agreement no. 320649. The authors would like to thank Dmitry Batenkov, Yaniv Romano and Raja Giryes for the prolific conversations and most useful advice which helped shape this work.

\bibliographystyle{ieeetr}
\bibliography{MyBib}

\appendix

\section{On the $\Loi$ Norm} 
\label{sect:TraingIneqLoi}
\begin{thm}
The triangle inequality holds for the $\Loi$ norm.
\end{thm}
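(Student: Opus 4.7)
The plan is to reduce the $\Loi$ triangle inequality to two facts that are already known or elementary: the linearity of the stripe-extraction operator $\S_i$, and the (standard) triangle inequality for the $\ell_0$ pseudo-norm on individual vectors. Since $\|\Gama\|_{0,\infty} = \max_i \|\S_i \Gama\|_0$ by definition, the whole claim boils down to controlling the $\ell_0$ norm of each stripe of a sum.

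First I would fix two global vectors $\Gama$ and $\Delt$ and write $\gama_i = \S_i \Gama$ and $\delt_i = \S_i \Delt$ for their $i^{\text{th}}$ stripes. Because $\S_i$ is just a coordinate-selection (linear) operator, the $i^{\text{th}}$ stripe of $\Gama+\Delt$ is exactly $\gama_i + \delt_i$. Next I would invoke the well-known fact that for any two vectors $\u,\v$ of the same length, $\|\u+\v\|_0 \leq \|\u\|_0 + \|\v\|_0$, since $\mathrm{supp}(\u+\v) \subseteq \mathrm{supp}(\u) \cup \mathrm{supp}(\v)$. Applied stripe-by-stripe, this yields
\begin{equation}
\|\gama_i + \delt_i\|_0 \leq \|\gama_i\|_0 + \|\delt_i\|_0 \quad \forall\, i.
\end{equation}

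The final step is to take the maximum over $i$ on both sides and use the sub-additivity of $\max$: for any nonneg sequences $\{a_i\},\{b_i\}$, $\max_i (a_i+b_i) \leq \max_i a_i + \max_i b_i$. Applying this gives
\begin{equation}
\|\Gama+\Delt\|_{0,\infty} = \max_i \|\gama_i+\delt_i\|_0 \leq \max_i \|\gama_i\|_0 + \max_i \|\delt_i\|_0 = \|\Gama\|_{0,\infty} + \|\Delt\|_{0,\infty},
\end{equation}
which is the desired inequality.

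There is essentially no hard step here; the proof is a two-line chain once one notes that $\S_i$ is linear and that both the $\ell_0$ triangle inequality and the sub-additivity of $\max$ are elementary. The only mild subtlety worth commenting on is that, unlike in a true norm, the bound relies on support-union rather than any cancellation argument, and homogeneity still fails for $\Loi$ (as noted in the paper), so what we are really proving is that $\Loi$ is a pseudo-norm satisfying non-negativity and sub-additivity.
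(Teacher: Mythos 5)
Your proof is correct and follows essentially the same route as the paper's: extract stripes via the linear operator $\S_i$, apply the $\ell_0$ triangle inequality stripe-by-stripe, and conclude with the sub-additivity of the maximum. The only difference is that you make the linearity of $\S_i$ and the support-union argument explicit, which the paper leaves implicit.
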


\begin{proof}
Let $\Gama^1$ and $\Gama^2$ be two global sparse vectors. Denote the $i^{th}$ stripe extracted from each as $\gama^1_i$ and $\gama^2_i$, respectively. Notice that
\begin{align*}
\| \Gama^1+\Gama^2 \|_{0,\infty} =\max_i\|\gama^1_i+\gama^2_i\|_0 & \leq\max_i \left(\|\gama^1_i\|_0+\|\gama^2_i\|_0\right)\\
\leq \max_i\|\gama^1_i\|_0 +\max_i\|\gama^2_i\|_0 & =\|\Gama^1\|_{0,\infty}+\|\Gama^2\|_{0,\infty}.
\end{align*}
In the first inequality we have used the triangle inequality of the $\ell_0$ norm.
\end{proof}


\section{Theoretical Analysis of Ideal Signals}

\begin{customthm}{5}{(Uncertainty and uniqueness using Stripe-Spark):}
	Let $\D$ be a convolutional dictionary with Stripe-Spark $\sigma_\infty$. If a solution $\Gama$ obeys \mbox{$\|\Gama\|_{0,\infty}<\frac{1}{2}\sigma_\infty$}, then this is necessarily the global optimum for the $\Poi$ problem for the signal $\D\Gama$.
\end{customthm}

\begin{proof}
	Let $\hat{\Gama}\neq\Gama$ be an alternative solution. Then $\D\left(\Gama-\hat{\Gama}\right)=0$. By definition of the Stripe-Spark
	\begin{equation}
	\|\Gama-\hat{\Gama}\|_{0,\infty}\geq\sigma_\infty.
	\end{equation}
	Using the triangle inequality of the $\Loi$ norm,
	\begin{equation}
	\|\Gama\|_{0,\infty}+\|\hat{\Gama}\|_{0,\infty}\geq\|\Gama-\hat{\Gama}\|_{0,\infty}\geq\sigma_\infty.
	\end{equation}
	This result poses an uncertainty principle for $\Loi$ sparse solutions of the system $\X = \D\Gama$, suggesting that if a very sparse solution is found, all alternative solutions must be much denser. Since $\|\Gama\|_{0,\infty}<\frac{1}{2}\sigma_\infty$, we must have that $\|\hat{\Gama}\|_{0,\infty}>\frac{1}{2}\sigma_\infty$, or in other words, every solution other than $\Gama$ has higher $\Loi$ norm, thus making $\Gama$ the global solution for the $\Poi$ problem.
\end{proof}

\begin{customthm}{6}{(Lower bounding the Stripe-Spark via the mutual coherence):}
	For a convolutional dictionary $\D$ with mutual coherence $\mu(\D)$, the Stripe-Spark can be lower-bounded by
	\begin{equation}
	\sigma_\infty(\D)\geq 1+\frac{1}{\mu(\D)}.
	\end{equation}
\end{customthm}

\begin{proof}
	Let $\Delt$ be a vector such that $\Delt\neq\mathbf{0}$ and $\D\Delt=\mathbf{0}$.
	Note that we can write
	\begin{equation} \label{eq:relation}
	\D_\mathcal{T} \Delt_\mathcal{T} = \mathbf{0},
	\end{equation}
	where $\Delt_\mathcal{T}$ is the vector $\Delt$ restricted to its support $\mathcal{T}$, and $\D_\mathcal{T}$ is the dictionary composed of the corresponding atoms.
	Consider now the Gram matrix, $\G^{\mathcal{T}} = \D_\mathcal{T}^T \D_\mathcal{T}$, which corresponds to a portion extracted from the global Gram matrix $\D^T\D$. The relation in Equation \eqref{eq:relation} suggests that $\D_\mathcal{T}$ has a nullspace, which implies that its Gram matrix must have at least one eigenvalue equal to zero. Using Lemma 1, the lower bound on the eigenvalues of $\G^{\mathcal{T}}$ is given by $1-(k-1)\mu(\D)$, where $k$ is the $\Loi$ norm of $\Delt$. Therefore, we must have that $1-(k-1)\mu(\D) \leq 0$, or equally $k \geq 1 + \frac{1}{\mu(\D)}$. We conclude that a vector $\Delt$, which is in the null-space of $\D$, must always have an $\Loi$ norm of at least $1+\frac{1}{\mu(\D)}$, and so the Stripe-Spark $\sigma_\infty$ is also bounded by this number.
\end{proof}

\begin{customthm}{8}{(Global OMP recovery guarantee using $\Loi$ norm):}
	Given the system of linear equations $\X = \D\Gama$, if a solution $\Gama$ exists satisfying
	\begin{equation}
	\|\Gama\|_{0,\infty} < \ \frac{1}{2}\left(1+\frac{1}{\mu(\D)}\right),
	\end{equation}
	then OMP is guaranteed to recover it.
\end{customthm}

\begin{proof}
	Denoting by $\mathcal{T}$ the support of the solution $\Gama$, we can write
	\begin{equation}
	\X = \D\Gama = \sum_{t\in \mathcal{T}} \Gamma_t \d_t.
	\label{GlobalExpression3}
	\end{equation}
	Suppose, without loss of generality, that the sparsest solution has its largest coefficient (in absolute value) in $\Gamma_i$. 
	For the first step of the OMP to choose one of the atoms in the support, we require
	\begin{equation}
	|\d_i^T \X | > \max_{j\notin\mathcal{T}} | \d_j^T \X |.
	\end{equation}
	Substituting Equation \eqref{GlobalExpression3} in this requirement we obtain
	\begin{equation} \label{eq:inequality3}
	\left| \sum_{t\in \mathcal{T}}\Gamma_t\d_t^T\d_i \right| > \max_{j\notin\mathcal{T}} \left| \sum_{t\in \mathcal{T}}\Gamma_t\d_t^T\d_j \right|.
	\end{equation}
	Using the reverse triangle inequality, the assumption that the atoms are normalized, and that $|\Gamma_i|\geq|\Gamma_t|$, we construct a lower bound for the left hand side:
	\begin{align}
	\left| \sum_{t\in \mathcal{T}}\Gamma_t\d_t^T\d_i \right|
	\geq& |\Gamma_i| - \sum_{t \in \mathcal{T},t\neq i}|\Gamma_t|\cdot|\d_t^T\d_i | \\
	\geq& |\Gamma_i| - |\Gamma_i|\sum_{t \in \mathcal{T},t\neq i}|\d_t^T\d_i |.
	\end{align}
	Consider the stripe which completely contains the $i^{th}$ atom as shown in Figure \ref{Fig:details}. Notice that $\d_t^T\d_i$ is zero for every atom too far from $\d_i$ because the atoms do not overlap. Denoting the stripe which fully contains the $i^{th}$ atom as $p(i)$ and its support as $\mathcal{T}_{p(i)}$, we can restrict the summation as:
	\begin{equation} \label{Eq:EquationFromOMPproof}
	\left| \sum_{t\in \mathcal{T}}\Gamma_t\d_t^T\d_i \right| \geq |\Gamma_i| - |\Gamma_i|\sum_{t \in \mathcal{T}_{p(i)},t\neq i}|\d_t^T\d_i |.
	\end{equation}
	We can bound the right side by using the number of non-zeros in the support $\mathcal{T}_{p(i)}$, denoted by $n_{p(i)}$, together with the definition of the mutual coherence, obtaining:
	\begin{figure} 
		\includegraphics[trim = -80 80 80 0, width=.4\textwidth]{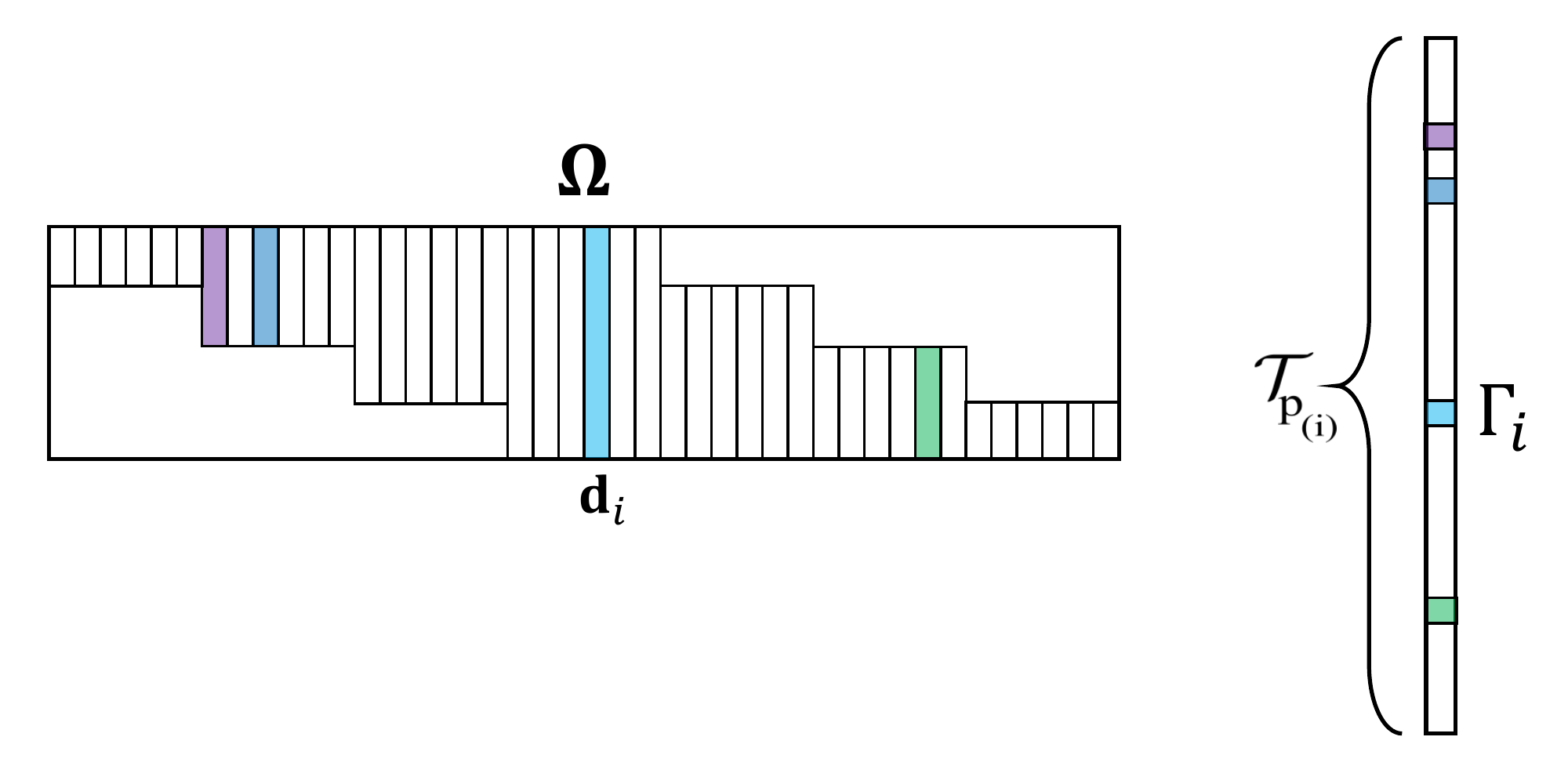}
		\caption{The $p_{(i)}$ stripe of atom $\d_i$.}
		\label{Fig:details}
		\vspace{-0.3cm}
	\end{figure}
	\begin{equation}
	\left| \sum_{t\in \mathcal{T}}\Gamma_t\d_t^T\d_i \right| \geq |\Gamma_i| -  |\Gamma_i| \cdot (n_{p(i)}-1)\cdot\mu(\D).
	\end{equation}
	Using the definition of the $\Loi$ norm, we obtain
	\begin{equation}
	\left| \sum_{t\in \mathcal{T}}\Gamma_t\d_t^T\d_i \right| \geq |\Gamma_i| -  |\Gamma_i| \cdot (\|\Gama\|_{0,\infty}-1)\cdot\mu(\D).
	\end{equation}
	Now, we construct an upper bound for the right hand side of Equation \eqref{eq:inequality3}, using the triangle inequality and the fact that $|\Gamma_i|$ is the maximal value in the sparse vector:
	\begin{align} \label{eq:SkipSteps2}
	\max_{j\notin\mathcal{T}}\left| \sum_{t\in \mathcal{T}}\Gamma_t\d_t^T\d_j \right| &\leq \max_{j\notin\mathcal{T}}\sum_{t \in \mathcal{T}} |\Gamma_t|\cdot|\d_t^T\d_j |\\
	&\leq |\Gamma_i|\max_{j\notin\mathcal{T}}\sum_{t \in \mathcal{T}}|\d_t^T\d_j |.
	\end{align}
	Relying on the same rational as above, we obtain:
	\begin{align}
	&& \max_{j\notin\mathcal{T}}\left| \sum_{t\in \mathcal{T}}\Gamma_t\d_t^T\d_j \right| & \leq |\Gamma_i|\max_{j\notin\mathcal{T}}\sum_{t \in \mathcal{T}_{p(j)}}|\d_t^T\d_j |\\
	&&\leq |\Gamma_i|\max_{j\notin\mathcal{T}}\ n_{p(j)}\cdot\mu(\D) & \leq |\Gamma_i|\cdot\|\Gama\|_{0,\infty}\cdot\mu(\D).
	\end{align}
	Using both bounds, we get
	\begin{align*}
	\left| \sum_{t\in \mathcal{T}}\Gamma_t\d_t^T\d_i \right|
	&\geq |\Gamma_i| -  |\Gamma_i| \cdot (\|\Gama\|_{0,\infty}-1)\cdot\mu(\D)\\
	&>|\Gamma_i|\cdot\|\Gama\|_{0,\infty}\mu(\D)
	\geq\max_{j\notin\mathcal{T}}\left| \sum_{t\in \mathcal{T}}\Gamma_t\d_t^T\d_j \right|.
	\end{align*}
	Thus,
	\begin{equation}
	1-(\|\Gama\|_{0,\infty}-1)\cdot\mu(\D) > \|\Gama\|_{0,\infty}\cdot\mu(\D).
	\end{equation}
	From this we obtain the requirement stated in the theorem. Thus, this condition guarantees the success of the first OMP step, implying it will choose an atom inside the true support.
	
	The next step in the OMP algorithm is an update of the residual. This is done by decreasing a term proportional to the chosen atom (or atoms within the correct support in subsequent iterations) from the signal. Thus, this residual is also a linear combination of the same atoms as the original signal. As a result, the $\Loi$ norm of the residual's representation is less or equal than the one of the true sparse code $\Gama$. Using the same set of steps we obtain that the condition on the $\Loi$ norm \eqref{eq:Loi_condition} guarantees that the algorithm chooses again an atom from the true support of the solution. Furthermore, the orthogonality enforced by the least-squares step guarantees that the same atom is never chosen twice. As a result, after $\|\Gama\|_0$ iterations the OMP will find all the atoms in the correct support, reaching a residual equal to zero.
\end{proof}

\begin{customthm}{9}{(Global Basis Pursuit recovery guarantee using the $\Loi$ norm):}
	For the system of linear equations $\D\Gama=\X$, if a solution $\Gama$ exists obeying
	\begin{equation}
		\|\Gama\|_{0,\infty}<\frac{1}{2}\left(1+\frac{1}{\mu(\D)}\right),
	\end{equation}
	then Basis Pursuit is guaranteed to recover it.
\end{customthm}

\begin{proof}
	Define the following set
	\begin{equation}
	\C=\set*{ \hat{\Gama} \given
		\begin{gathered}
		\begin{split}
		&&\hat{\Gama}&\neq\Gama,& \quad \D(\hat{\Gama}-\Gama)&=\mathbf{0}\\
		&&\|\hat{\Gama}\|_1&\leq\|\Gama\|_1,& \quad \|\hat{\Gama}\|_{0,\infty}&>\|\Gama\|_{0,\infty}
		\end{split}
		\end{gathered}
	}.
	\end{equation}
	This set contains all alternative solutions which have lower or equal $\ell_1$ norm and higher $\|\cdot\|_{0,\infty}$ norm. If this set is non-empty, the solution of the basis pursuit is different from $\Gama$, implying failure. In view of our uniqueness result, and the condition posed in this theorem on the $\Loi$ cardinality of $\Gama$, every solution $\hat{\Gama}$ which is not equal to $\Gama$ must have a higher $\|\cdot\|_{0,\infty}$ norm. Thus, we can omit the requirement $\|\hat{\Gama}\|_{0,\infty}>\|\Gama\|_{0,\infty}$ from $\C$.
	
	By defining $\Delt=\hat{\Gama}-\Gama$, we obtain a shifted version of the set,
	\begin{equation}
	\C_s=\set*{ \Delt \given
		\begin{gathered}
		\Delt\neq\mathbf{0}, \quad \D\Delt=\mathbf{0}\\
		\mathbf{0}\geq\|\Delt+\Gama\|_1-\|\Gama\|_1
		\end{gathered}
	}.
	\end{equation}
	In what follows, we will enlarge the set $\C_s$ and prove that it remains empty even after this expansion.
	Since $\D\Delt=\mathbf{0}$, then $	\D^T\D\Delt=\mathbf{0}$. By subtracting $\Delt$ from both sides, we obtain
	\begin{equation}
	-\Delt=(\D^T\D-\mathbf{I})\Delt.
	\label{Eq:equality_constraint_L1proof}
	\end{equation}
	Taking an entry-wise absolute value on both sides, we obtain
	\begin{equation}
	|\Delt|=|(\D^T\D-\mathbf{I})\Delt|\leq|\D^T\D-\mathbf{I}|\cdot|\Delt|,
	\end{equation}
	where we have applied the triangle inequality to the multiplication of the $i^{th}$ row of $(\D^T\D-\mathbf{I})$ by the vector $\Delt$. Note that in the convolutional case $\D^T\D$ is zero for inner products of atoms which do not overlap. Furthermore, the $i^{th}$ row of $\D^T\D$ is non-zero only in the indices which correspond to the stripe that fully contains the $i^{th}$ atom, and these non-zero entries can be bounded by $\mu(\D)$. Thus, extracting the $i^{th}$ row from the above equation gives
	\begin{equation*}
	|\Delta_i|\leq\mu(\D)\left(\|\delt_{p(i)}\|_1-|\Delta_i|\right), \label{Eq:Delta_i_abs}
	\end{equation*}
	where $p(i)$ is the stripe centered around the $i^{th}$ atom and $\delt_{p(i)}$ is the corresponding sparse vector of length $(2n-1)m$ extracted from $\Delt$, as can be seen in Figure \ref{proofL1}.
	\begin{figure}[t]
		\centering
		\includegraphics[trim =10 0 10 0 ,width=.33\textwidth]{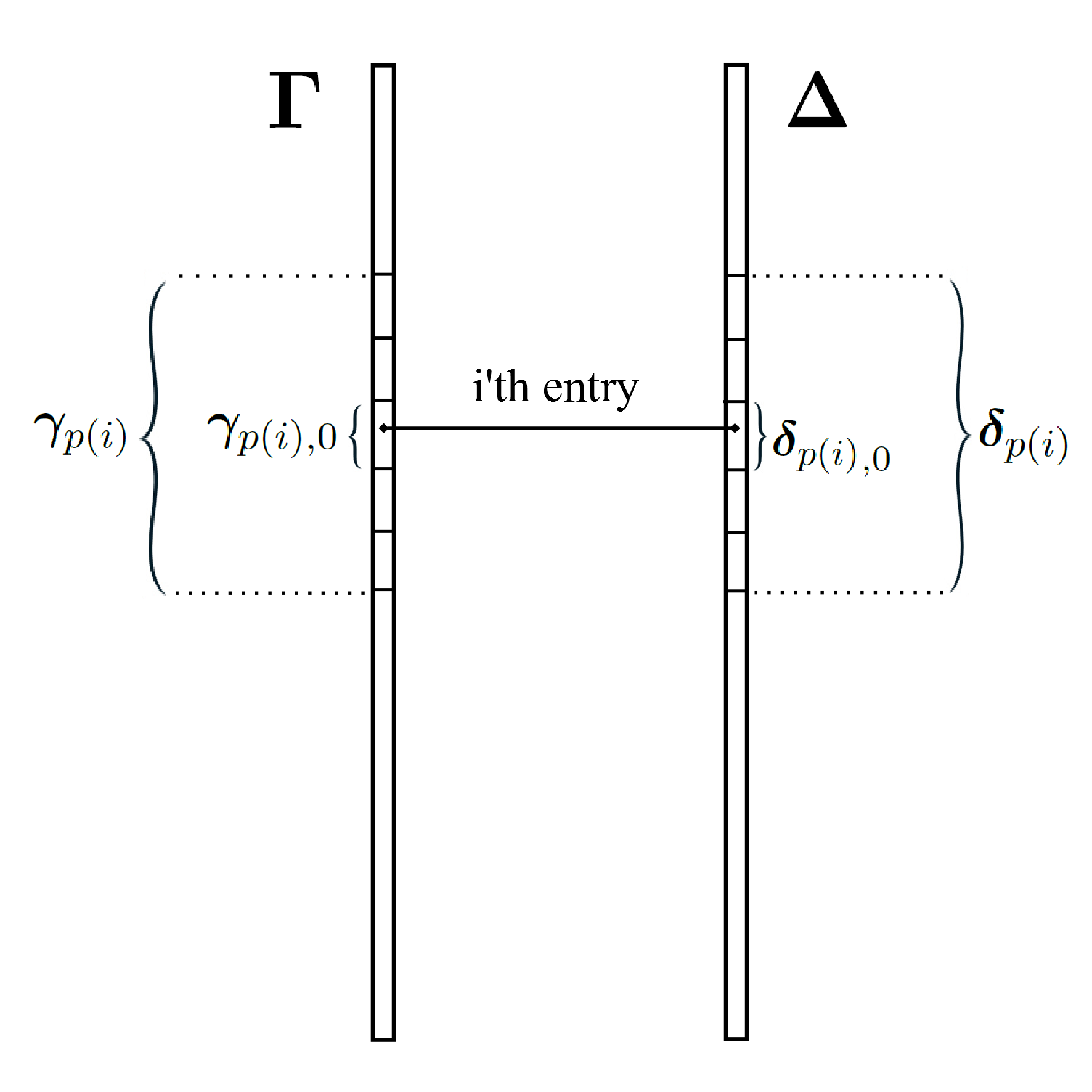}
		\caption{On the left we have the global sparse vector $\Gama$, a stripe  $\gama_{p(i)}$ (centered around the $i^{th}$ atom) extracted from it, and the center of this stripe $\gama_{p(i),0}$. The length of the stripe $\gama_{p(i)}$ is $(2n-1)m$ and the length of $\gama_{p(i),0}$ is $m$. On the right we have the corresponding global vector $\Delt$. Notice that if we were to consider the $i+1$ entry instead of the $i^{th}$, the vector corresponding to $\delt_{p(i)}$ would not change because the atoms $i$ and $i+1$ are fully overlapping.}
		\label{proofL1}
		\vspace{-0.2cm}
	\end{figure}
	This can be written as
	\begin{equation}
	|\Delta_i|\leq\frac{\mu(\D)}{\mu(\D)+1}\|\delt_{p(i)}\|_1.
	\end{equation}
	The above expression is a relaxation of the equality in Equation \eqref{Eq:equality_constraint_L1proof}, since each entry $\Delta_i$ is no longer constrained to a specific value, but rather bounded from below and above. Therefore, by putting the above into $\C_s$, we obtain a larger set $\C_s^1$:
	\begin{equation} 
	\C_s\subseteq\C_s^1=\set*{ \Delt \given
		\begin{gathered}
		\Delt\neq\mathbf{0}, \quad 	\mathbf{0}\geq\|\Delt+\Gama\|_1-\|\Gama\|_1 \\
		|\Delta_i|\leq\frac{\mu(\D)}{\mu(\D)+1}\|\delt_{p(i)}\|_1, \quad\forall i
		\end{gathered}
	}.
	\end{equation}
	Next, let us examine the second requirement
	\begin{align}
	\mathbf{0}\geq & \|\Delt+\Gama\|_1-\|\Gama\|_1 \\ \label{Eq:equality_constraint_L1proof_2}
	= & \sum_{i\in \mathcal{T}(\Gama)} \left(|\Delta_i+\Gamma_i|-|\Gamma_i|\right) + \sum_{i\notin \mathcal{T}(\Gama)} |\Delta_i|,
	\end{align}
	where, as before, $\mathcal{T}(\Gama)$ denotes the support of $\Gama$.
	Using the reverse triangle inequality, $|a+b|-|b|\geq-|a|$, we obtain
	\begin{align}\label{eq:inequality_bp_proof}
	\mathbf{0}&\geq\sum_{i\in \mathcal{T}(\Gama)} \left(|\Delta_i+\Gamma_i|-|\Gamma_i|\right) + \sum_{i\notin \mathcal{T}(\Gama)} |\Delta_i|\\ \nonumber
	&\geq\sum_{i\in \mathcal{T}(\Gama)} -|\Delta_i| + \sum_{i\notin \mathcal{T}(\Gama)} |\Delta_i| =\|\Delt\|_1-2\mathds{1}^T_{\mathcal{T}(\Gama)}|\Delt|, 
	\end{align}
	where the vector $\mathds{1}_{\mathcal{T}(\Gama)}$ contains ones in the entries corresponding to the support of $\Gama$ and zeros elsewhere. 
	Note that every vector satisfying Equation \eqref{Eq:equality_constraint_L1proof_2} will necessarily satisfy Equation \eqref{eq:inequality_bp_proof}. Therefore, by relaxing this constraint in $\C_s^1$, we obtain a larger set $\C_s^2$
	\begin{equation} \label{eq:CS2}
	\C_s^1\subseteq\C_s^2=\set*{ \Delt \given
		\begin{gathered}
		\Delt\neq\mathbf{0}, \quad 	\mathbf{0}\geq\|\Delt\|_1-2\mathds{1}^T_{\mathcal{T}(\Gama)}|\Delt|\\
		|\Delta_i|\leq\frac{\mu(\D)}{\mu(\D)+1}\|\delt_{p(i)}\|_1, \quad\forall i
		\end{gathered}
	}.
	\end{equation}
	Next, we will show the above defined set is empty for a small-enough support. We begin by summing the inequalities \mbox{$|\Delta_i|\leq\frac{\mu(\D)}{\mu(\D)+1}\|\delt_{p(i)}\|_1$} over the support of $\gama_{{p(i)},0}$. Recall that $\gama_{p(i)}$ is defined to be a stripe of length $(2n-1)m$ extracted from the global representation vector and $\gama_{p(i),0}$ corresponds to the central $m$ coefficients in the $p(i)$ stripe. Also, note that $\delt_{p(i)}$ is equal for all the entries inside the support of $\gama_{p(i),0}$. Since all the atoms inside the support of $\gama_{p(i),0}$ are fully overlapping, $\delt_{p(i)}$ does not change, as explained in Figure \ref{proofL1}.
	Thus, we obtain
	\begin{equation*}
	\mathds{1}^T_{\mathcal{T}(\gama_{p(i),0})}|\Delt|\leq\frac{\mu(\D)}{\mu(\D)+1}\cdot\|\gama_{p(i),0}\|_0\cdot\|\delt_{p(i)}\|_1.
	\end{equation*}
	Summing over all different $p(i)$ we obtain
	\begin{equation} \label{eq:goingback}
	\mathds{1}^T_{\mathcal{T}(\Gama)}|\Delt|\leq\frac{\mu(\D)}{\mu(\D)+1}\sum_k\|\gama_{k,0}\|_0\cdot\|\delt_k\|_1.
	\end{equation}
	Notice that in the sum above we multiply the $\ell_0$-norm of the \emph{local sparse vector} $\gama_{k,0}$ by the $\ell_1$ norm of the \emph{stripe} $\delt_k$. In what follows, we will show that, instead, we could multiply the $\ell_0$-norm of the \emph{stripe} $\gama_k$ by the $\ell_1$ norm of the \emph{local sparse vector} $\delt_{k,0}$, thus changing the order between the two. As a result, we will obtain the following inequality:
	\begin{equation}
	\mathds{1}^T_{\mathcal{T}(\Gama)}|\Delt|\leq\frac{\mu(\D)}{\mu(\D)+1}\sum_k\|\gama_{k}\|_0\cdot\|\delt_{k,0}\|_1.
	\end{equation}	
	Returning to Equation \eqref{eq:goingback}, we begin by decomposing the $\ell_1$ norm of the stripe $\delt_k$ into all possible shifts ($m-$dimensional chunks) and pushing the sum outside, obtaining:
	\begin{align}
	\mathds{1}^T_{\mathcal{T}(\Gama)}|\Delt|&\leq\frac{\mu(\D)}{\mu(\D)+1}\sum_k\|\gama_{k,0}\|_0\cdot\|\delt_k\|_1\\
	&=\frac{\mu(\D)}{\mu(\D)+1}\sum_k\left(\|\gama_{k,0}\|_0\sum_{j=k-n+1}^{k+n-1}\|\delt_{j,0}\|_1\right)\\
	&=\frac{\mu(\D)}{\mu(\D)+1}\sum_k\sum_{j=k-n+1}^{k+n-1}\|\gama_{k,0}\|_0\|\delt_{j,0}\|_1. \label{eq:matrix_sum}
	\end{align}
	Define a banded matrix $\A$ (with a band of width $2n-1$) such that $\A_{k,j}=\|\gama_{k,0}\|_0\cdot\|\delt_{j,0}\|_1$,	where $k-n+1\leq j\leq k+n-1$. Notice that the summation in \eqref{eq:matrix_sum} is equal to the sum of all entries in this matrix, where the first sum considers all its rows $k$ while the second sum considers all its columns $j$ (the second sum is restricted to the non-zero band). Instead, this interpretation suggests that we could first sum over all the columns $j$, and only then sum over all the rows $k$ which are inside the band. As a result, we obtain that
	\begin{align}
	\mathds{1}^T_{\mathcal{T}(\Gama)}|\Delt|&\leq\frac{\mu(\D)}{\mu(\D)+1}\sum_k\sum_{j=k-n+1}^{k+n-1}\|\gama_{k,0}\|_0\cdot\|\delt_{j,0}\|_1\\
	&=\frac{\mu(\D)}{\mu(\D)+1}\sum_j\sum_{k=j-n+1}^{j+n-1}\|\gama_{k,0}\|_0\cdot\|\delt_{j,0}\|_1\\
	&=\frac{\mu(\D)}{\mu(\D)+1}\sum_j\left(\|\delt_{j,0}\|_1\sum_{k=j-n+1}^{j+n-1}\|\gama_{k,0}\|_0\right).
	\end{align}
	Summing over all possible shifts we obtain the $\ell_0$-norm of the stripe $\gama_j$; i.e.,
	\begin{equation}
	\mathds{1}^T_{\mathcal{T}(\Gama)}|\Delt| \leq \frac{\mu(\D)}{\mu(\D)+1}\sum_j\|\delt_{j,0}\|_1\cdot\|\gama_{j}\|_0.
	\end{equation}
	Using the definition of $\|\cdot\|_{0,\infty}$
	\begin{align}
	\mathds{1}^T_{\mathcal{T}(\Gama)}|\Delt|&\leq\frac{\mu(\D)}{\mu(\D)+1}\sum_j\|\delt_{j,0}\|_1\cdot\|\gama_{j}\|_0\\
	&\leq\frac{\mu(\D)}{\mu(\D)+1}\sum_j\|\delt_{j,0}\|_1\cdot\|\Gama\|_{0,\infty}\\
	&\leq\frac{\mu(\D)}{\mu(\D)+1}\cdot\|\Delt\|_1\cdot\|\Gama\|_{0,\infty}. \label{eq:second_inequality}
	\end{align}
	For the set $\C_s^2$ to be non-empty, there must exist a $\Delt$ which satisfies
	\begin{align}
	\mathbf{0}\geq & \|\Delt\|_1-2\mathds{1}^T_{\mathcal{T}(\Gama)}|\Delt| \\ \geq & \|\Delt\|_1-2\frac{\mu(\D)}{\mu(\D)+1}\cdot\|\Delt\|_1\cdot\|\Gama\|_{0,\infty},
	\end{align}
	where the first and second inequalities are given in \eqref{eq:inequality_bp_proof} and \eqref{eq:second_inequality}, respectively.
	Rearranging the above we obtain $\|\Gama\|_{0,\infty}\geq\frac{1}{2}\left(1+\frac{1}{\mu(\D)}\right)$. However, we have assumed that $\|\Gama\|_{0,\infty}<\frac{1}{2}\left(1+\frac{1}{\mu(\D)}\right)$ and thus the previous inequality is not satisfied. As a result, the set we have defined is empty, implying that BP leads to the desired solution.
\end{proof}


\section{On the Shifted Mutual Coherence and Stripe Coherence} 
\label{App:MoreOnShiftedMu}

\begin{customdef}{10}
Define the shifted mutual coherence $\mu_s$ by
\begin{equation}
\mu_s = \underset{i,j}{\max} \quad |\langle \d^0_i,\d^s_j  \rangle|,
\end{equation}
where $\d^0_i$ is a column extracted from $\O_0$, $\d^s_j$ is extracted from $\O_s$, and we require\footnote{The condition $i\neq j$ if $s=0$ is necessary so as to avoid the inner product of an atom by itself.} that $i\neq j$ if $s=0$.
\end{customdef}

\noindent
The shifted mutual coherence exhibits some interesting properties: 
\begin{enumerate}[a)]
\item $\mu_s$ is symmetric with respect to the shift $s$, i.e. $\mu_s=\mu_{-s}$.
\item Its maximum over all shifts equals the global mutual coherence of the convolutional dictionary: $\mu(\D) = \underset{s}{\max}\ \mu_s$.
\item The mutual coherence of the local dictionary is bounded by that of the global one: $\mu(\D_L) = \mu_0 \leq\underset{s}{\max}\ \mu_s=\mu(\D)$.
\end{enumerate}
We now briefly remind the definition of the maximal stripe coherence, as we will make use of it throughout the rest of this section.
Given a vector $\Gama$, recall that the stripe coherence is defined as $\zeta(\gama_i) = \sum_{s=-n+1}^{n-1} n_{i,s}\ \mu_s$, where $n_{i,s}$ is the number of non-zeros in the $s^{th}$ shift of $\gama_i$, taken from $\Gama$. The reader might ponder how the maximal stripe coherence might be computed.
Let us now define the vector $\mathbf{v}$ which contains in its $i^{th}$ entry the number $n_{i,0}$. Using this definition, the coherence of every stripe can be calculated efficiently by convolving the vector $\mathbf{v}$ with the vector of the shifted mutual coherences $[\mu_{-n+1},\dots,\mu_{-1},\mu_0,\mu_1,\dots,\mu_{n-1}]$.

Next, we provide an experiment in order to illustrate the shifted mutual coherence. To this end, we generate a random local dictionary with $m=8$ atoms of length $n=64$ and afterwards normalize its columns. We then construct a convolutional dictionary which contains global atoms of length $N = 640$. We exhibit the shifted mutual coherences for this dictionary in Figure \ref{fig:Mus}. 

\begin{figure}[t]
\begin{center}
	\begin{subfigure}[t]{.24\textwidth}
		\includegraphics[trim = 0 0 50 20, width=\textwidth]{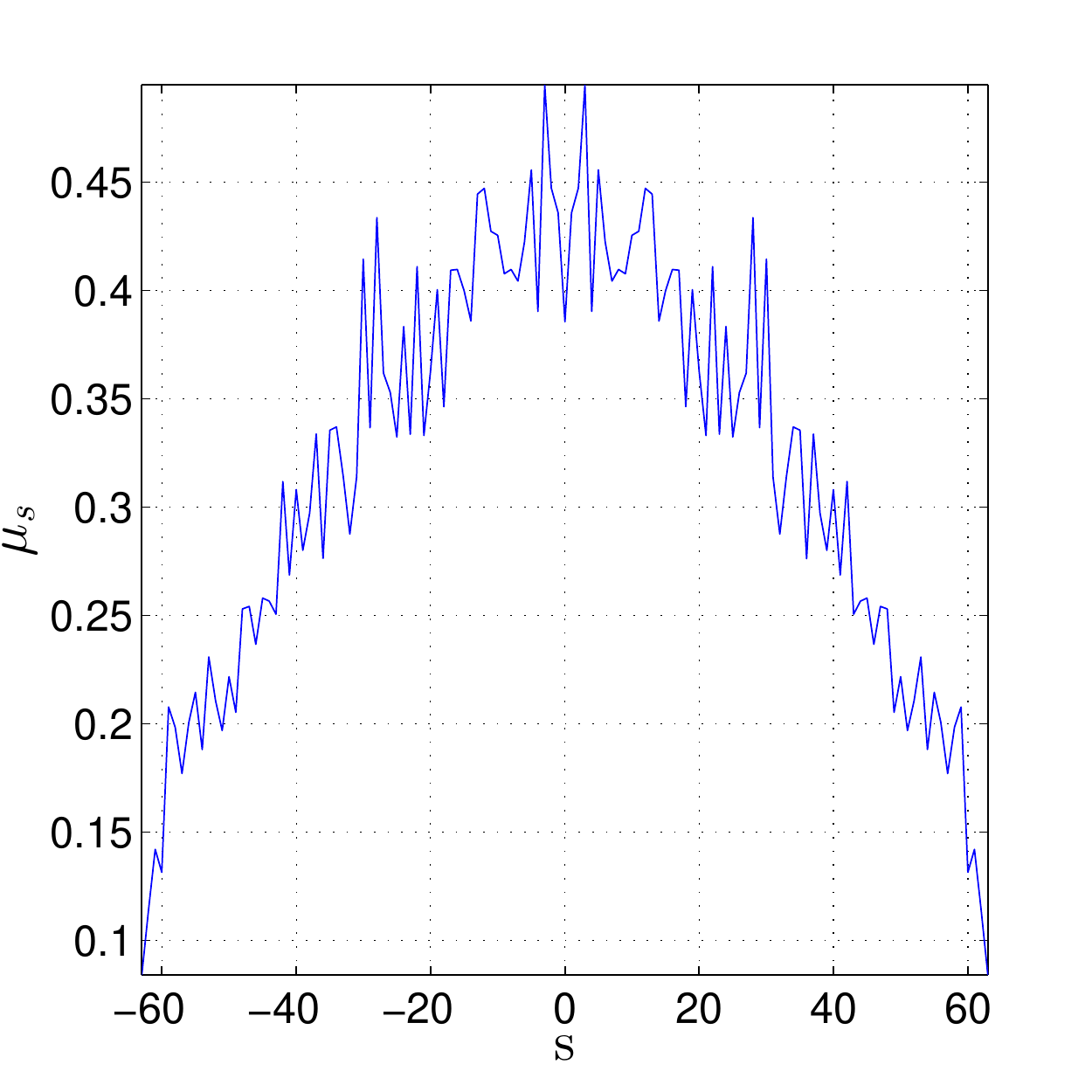}
		\caption{}
		\label{fig:Mus}
	\end{subfigure}
	\begin{subfigure}[t]{.24\textwidth}
		\includegraphics[trim = 50 0 0 50, width=\textwidth]{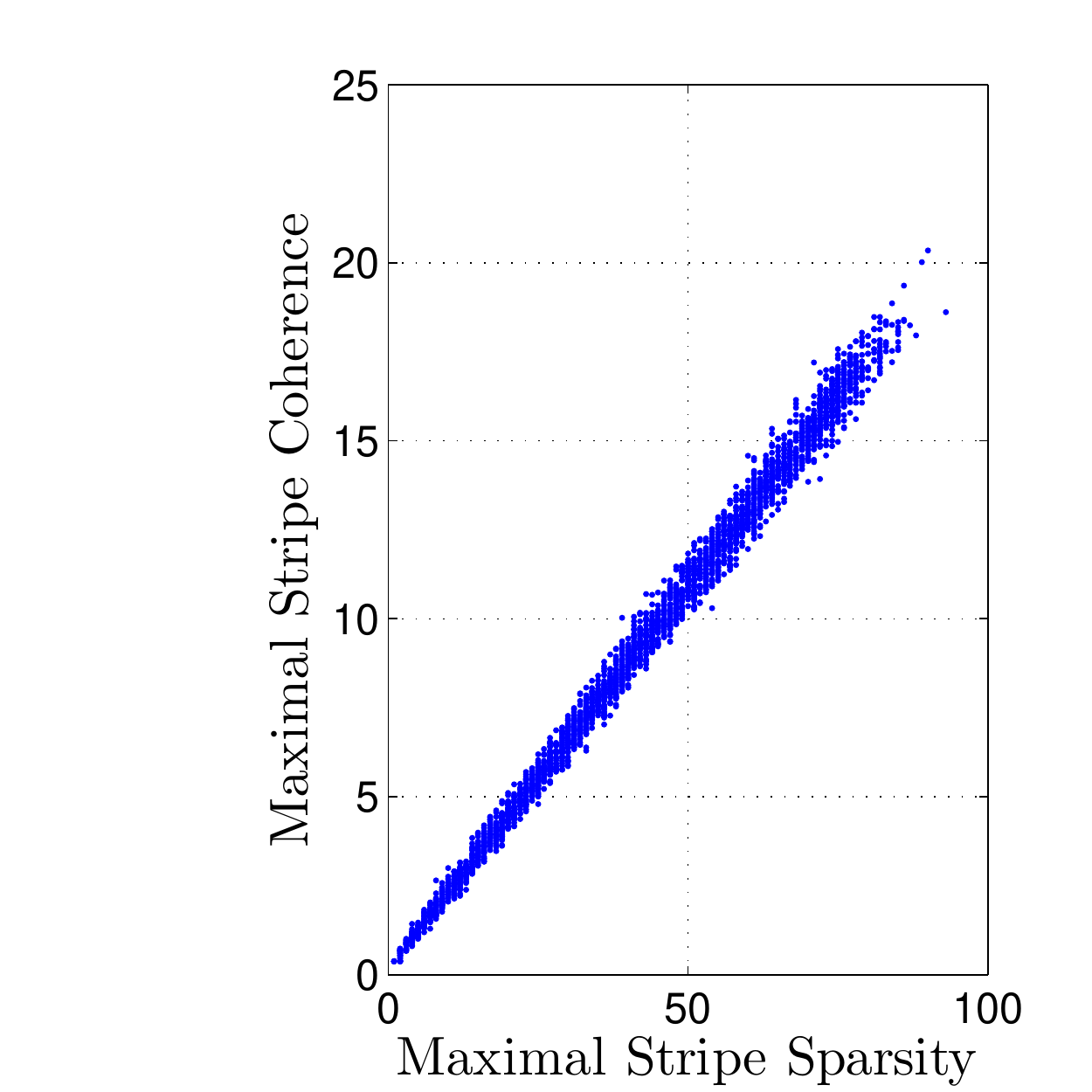}
		\caption{}
		\label{fig:L0inf_sc}
	\end{subfigure}
	\caption{Left: the shifted mutual coherence as function of the shift. The larger the shift between the atoms, the lower $\mu_s$ is expected to be. Right: the maximal stripe coherence as a function of the $\Loi$ norm, for random realizations of global sparse vectors.}
	\vspace{-0.5cm}
\end{center}
\end{figure}

Given this dictionary, we generate sparse vectors with random supports of cardinalities in the range $\left[1,300\right]$. For each sparse vector we compute its $\Loi$ norm by searching for the densest stripe, and its maximal stripe coherence using the convolution mentioned above. In Figure \ref{fig:L0inf_sc} we illustrate the connection between the $\Loi$ norm and the maximal stripe coherence for this set of sparse vectors. As expected, the $\Loi$ norm and the maximal stripe coherence are highly correlated. Although the theorems based on the stripe coherence are sharper, they are harder to comprehend. In this experiment we attempted to alleviate this by showing an intuitive connection between the two.

We now present a theorem relating the stripe coherences of related sparse vectors.
\begin{thm} \label{Thm:max_stripe_coherence_contained}
Let $\Gama_1$ and $\Gama_2$ be two global sparse vectors such that the support of $\Gama_1$ is contained in the support of $\Gama_2$. Then the maximal stripe coherence of $\Gama_1$ is less or equal than that of $\Gama_2$.
\end{thm}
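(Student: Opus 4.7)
The plan is to exploit the fact that $\mu_s \geq 0$ for all shifts $s$, so that the stripe coherence $\zeta(\gama_i) = \sum_{s=-n+1}^{n-1} n_{i,s}\,\mu_s$ is a monotone functional of the non-negative counting vector $(n_{i,s})_s$. Containment of supports will transfer directly into pointwise domination of these counts, and the bound will follow by summing and then taking a maximum.

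First I would denote by $\mathcal{T}_1$ and $\mathcal{T}_2$ the supports of $\Gama_1$ and $\Gama_2$ respectively, with $\mathcal{T}_1\subseteq\mathcal{T}_2$. For a fixed stripe index $i$ and a fixed shift $s\in\{-n+1,\dots,n-1\}$, the quantity $n_{i,s}^{(k)}$ (for $k=1,2$) is the number of indices $j$ lying in the $s^{\text{th}}$ block of the $i^{\text{th}}$ stripe that belong to $\mathcal{T}_k$. Since every index counted by $n_{i,s}^{(1)}$ belongs to $\mathcal{T}_1\subseteq\mathcal{T}_2$, it is also counted by $n_{i,s}^{(2)}$; hence
\begin{equation}
n_{i,s}^{(1)} \;\leq\; n_{i,s}^{(2)}, \qquad \forall\, i,\, \forall\, s.
\end{equation}

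Next I would multiply each inequality by $\mu_s\geq 0$ (which preserves the direction of the inequality) and sum over $s$, obtaining
\begin{equation}
\zeta(\gama_i^{(1)}) \;=\; \sum_{s=-n+1}^{n-1} n_{i,s}^{(1)}\mu_s \;\leq\; \sum_{s=-n+1}^{n-1} n_{i,s}^{(2)}\mu_s \;=\; \zeta(\gama_i^{(2)})
\end{equation}
for every $i$. Taking the maximum over $i$ on both sides yields
\begin{equation}
\max_i\, \zeta(\gama_i^{(1)}) \;\leq\; \max_i\, \zeta(\gama_i^{(2)}),
\end{equation}
which is exactly the claim.

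There is essentially no obstacle here: the shifted mutual coherences $\mu_s$ are fixed properties of the dictionary $\D$ and are independent of the sparse vector, while the only per-vector quantities are the non-negative counts $n_{i,s}$, which are manifestly monotone under support inclusion. The only point worth being careful about is the bookkeeping — making sure that $n_{i,s}^{(k)}$ is interpreted consistently as the number of non-zeros of $\Gama_k$ falling into the $s^{\text{th}}$ sub-block of the $i^{\text{th}}$ stripe, so that the block indexed by $(i,s)$ is identical for both vectors. Once this is made explicit, the argument reduces to monotonicity of a non-negative linear combination.
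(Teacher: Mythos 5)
Your proof is correct and follows essentially the same route as the paper's: support containment gives the pointwise bound $n_{i,s}^{1}\leq n_{i,s}^{2}$ for every stripe $i$ and shift $s$, and the claim follows by weighting with the non-negative $\mu_s$, summing, and taking the maximum over $i$. Your explicit remark that $\mu_s\geq 0$ is needed to preserve the inequality is a small point the paper leaves implicit, but otherwise the arguments coincide.
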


\begin{proof}
Denote by $\gama_i^1$ and $\gama_i^2$ the $i^{th}$ stripe extracted from $\Gama_1$ and $\Gama_2$, respectively. Also, denote by $n_{i,s}^{1}$ and $n_{i,s}^{2}$ the number of non-zeros in the $s^{th}$ shift of $\gama_i^1$ and $\gama_i^2$, respectively. Since the support of $\Gama_1$ is contained in the support of $\Gama_2$, we have that $\forall i, s \quad n_{i,s}^{1}\leq n_{i,s}^{2}$. As a result, we have that
\begin{equation}
\max_i\sum_{s=-n+1}^{n-1} n_{i,s}^{1} \mu_s\leq
\max_i\sum_{s=-n+1}^{n-1} n_{i,s}^{2} \mu_s.
\end{equation}
The left-hand side of the above inequality is the maximal stripe coherence of $\Gama_1$, while the right-hand side is the corresponding one for $\Gama_2$, proving the hypothesis.
\end{proof}

\begin{customthm}{12}{(Global OMP recovery guarantee using the stripe coherence):}
	Given the system of linear equations $\X = \D\Gama$, if a solution $\Gama$ exists satisfying \begin{equation}
	\max_i\ \zeta_i = \max_i\sum_{s=-n+1}^{n-1} n_{i,s} \mu_s < \ \frac{1}{2}\left(1+\mu_0\right),
	\end{equation}
	then OMP is guaranteed to recover it.
\end{customthm}
\begin{proof}
	
	The first steps of this proof are exactly those derived in proving Theorem \ref{thm:OMPSuccess}, and are thus omitted for the sake brevity. Recall that in order for the first step of OMP to succeed, we require
	\begin{equation} \label{eq:inequality2}
	\left| \sum_{t\in \mathcal{T}}\Gamma_t\d_t^T\d_i \right| > \max_{j\notin\mathcal{T}} \left| \sum_{t\in \mathcal{T}}\Gamma_t\d_t^T\d_j \right|.
	\end{equation}
	Lower bounding the left hand side of the above inequality, we can write
	\begin{equation}
	\left| \sum_{t\in \mathcal{T}}\Gamma_t\d_t^T\d_i \right| \geq |\Gamma_i| - |\Gamma_i|\sum_{t \in \mathcal{T}_{p(i)},t\neq i}|\d_t^T\d_i |,
	\end{equation}
	as stated previously in Equation \eqref{Eq:EquationFromOMPproof}.
	Instead of summing over the support $\mathcal{T}_{p(i)}$, we can sum over all the supports $\mathcal{T}_{{p(i)},s}$, which correspond to all possible shifts. We can then write
	\begin{equation}
	\left| \sum_{t\in \mathcal{T}}\Gamma_t\d_t^T\d_i \right| \geq |\Gamma_i| -  |\Gamma_i|  \sum_{s = -n+1}^{n-1} \ \sum_{\substack{t \in \mathcal{T}_{{p(i)},s} \\ t\neq i}} |\d_t^T\d_i |.
	\end{equation}
	We can bound the right term by using the number of non-zeros in each sub-support $\mathcal{T}_{{p(i)},s}$, denoted by $n_{{p(i)},s}$, together with the corresponding shifted mutual coherence $\mu_s$. Also, we can disregard the constraint $t\neq i$ in the above summation by subtracting an extra $\mu_0$ term, obtaining:
	\begin{equation}
	\left| \sum_{t\in \mathcal{T}}\Gamma_t\d_t^T\d_i \right| \geq |\Gamma_i| -  |\Gamma_i| \left( \sum_{s = -n+1}^{n-1}\mu_s n_{p(i),s}-\mu_0 \right).
	\end{equation}
	Bounding the above by the maximal stripe coherence, we obtain
	\begin{equation}
	\left| \sum_{t\in \mathcal{T}}\Gamma_t\d_t^T\d_i \right| \geq |\Gamma_i| -  |\Gamma_i|  \left( \max_k\sum_{s = -n+1}^{n-1}\mu_s n_{k,s}-\mu_0 \right).
	\end{equation}
	In order to upper bound the right hand side of Equation \eqref{eq:inequality2} we follow the steps leading to Equation \eqref{eq:SkipSteps2}, resulting in
	\begin{equation}
	\max_{j\notin\mathcal{T}}\left| \sum_{t\in \mathcal{T}}\Gamma_t\d_t^T\d_j \right| \leq |\Gamma_i|\max_{j\notin\mathcal{T}}\sum_{t \in \mathcal{T}_{p(j)}}|\d_t^T\d_j |.
	\end{equation}
	Using a similar decomposition of the support and the definition of the shifted mutual coherence, we have
	\begin{align}
	\max_{j\notin\mathcal{T}}\left| \sum_{t\in \mathcal{T}}\Gamma_t\d_t^T\d_j \right| &\leq |\Gamma_i|\max_{j\notin\mathcal{T}} \sum_{s=-n+1}^{n-1} \ \sum_{t\in\mathcal{T}_{p(j),s}} |\d_t^T\d_j |\\
	& \leq |\Gamma_i|\max_{j\notin\mathcal{T}} \sum_{s=-n+1}^{n-1} \mu_s n_{p(j),s}.
	\end{align}
	Once again bounding this expression by the maximal stripe coherence, we obtain
	\begin{equation}
	\max_{j\notin\mathcal{T}}\left| \sum_{t\in \mathcal{T}}\Gamma_t\d_t^T\d_j \right| \leq |\Gamma_i|\cdot\max_k\sum_{s=-n+1}^{n-1} \mu_s n_{k,s}.
	\end{equation}
	Using both bounds, we have that
	\begin{align*}
	\left| \sum_{t\in \mathcal{T}}\Gama_t \d_t^T\d_i \right|
	& \geq|\Gamma_i| -  |\Gamma_i| \left( \max_k\sum_{s=-n+1}^{n-1} \mu_s n_{k,s}-\mu_0 \right) \\
	& >|\Gamma_i|\cdot \max_k\sum_{s=-n+1}^{n-1} \mu_s n_{k,s} \\
	& \geq\max_{j\notin\mathcal{T}}\left| \sum_{t\in \mathcal{T}}\Gamma_t\d_t^T\d_j \right|.
	\end{align*}
	Thus,
	\begin{equation}
	1-\max_k\sum_{s=-n+1}^{n-1} \mu_s n_{k,s}+\mu_0 \ > \ \max_k\sum_{s=-n+1}^{n-1} \mu_s n_{k,s}.
	\end{equation}
	Finally, we obtain
	\begin{equation}
	\max_k\ \zeta_k=\max_k\sum_{s=-n+1}^{n-1} \mu_s n_{k,s}\ < \ \frac{1}{2}\left(1+\mu_0\right),
	\end{equation}
	which is the requirement stated in the theorem. Thus, this condition guarantees the success of the first OMP step, implying it will choose an atom inside \mbox{the true support $\mathcal{T}$.}
		
	The next step in the OMP algorithm is an update of the residual. This is done by decreasing a term proportional to the chosen atom (or atoms within the correct support in subsequent iterations) from the signal. Thus, the support of this residual is contained within the support of the true signal. As a result, according to the previous theorem, the maximal stripe coherence corresponding to the residual is less or equal to the one of the true sparse code $\Gama$. Using the same set of steps we obtain that the condition on the maximal stripe coherence guarantees that the algorithm chooses again an atom from the true support of the solution. Furthermore, the orthogonality enforced by the least-squares step guarantees that the same atom is never chosen twice. As a result, after $\|\Gama\|_0$ iterations the OMP will find all the atoms in the correct support, reaching a residual equal to zero.
\end{proof}

We have provided two theorems for the success of the OMP algorithm. Before concluding, we aim to show that assuming $\mu(\D)=\mu_0$, the guarantee based on the stripe coherence is at least as strong as the one based on the $\Loi$ norm. Assume the recovery condition using the $\Loi$ norm is met and as such $\|\Gama\|_{0,\infty} = \max_i \ n_i < \ \frac{1}{2}\left(1+\frac{1}{\mu(\D)}\right)$, where $n_i$ is equal to $\|\gama_i\|_0$. Multiplying both sides by $\mu(\D)$ we obtain \mbox{$\max_i \ n_i\cdot\mu(\D) < \ \frac{1}{2}\left(1+\mu(\D)\right)$}. Using the above inequality and the properties:
\begin{equation}
1) \sum\limits_{s=-n+1}^{n-1} n_{i,s}=n_i, \qquad 2)\  \forall s \quad \mu_s\leq\mu(\D),
\end{equation}
we have that
\begin{align}
\max_i\sum_{s=-n+1}^{n-1} n_{i,s} \mu_s&\leq\max_i \sum_{s=-n+1}^{n-1} n_{i,s}\mu(\D)\\
&=\max_i \ n_i\cdot\mu(\D) < \ \frac{1}{2}\left(1+\mu(\D)\right).
\end{align}
Thus, we obtain that
\begin{equation}
\max_i\sum_{s=-n+1}^{n-1} n_{i,s} \mu_s < \ \frac{1}{2}\left(1+\mu(\D)\right)=\frac{1}{2}\left(1+\mu_0\right),
\end{equation}
where we have used our assumption that $\mu(\D)=\mu_0$. We conclude that if the recovery condition based on the $\Loi$ norm is met, then so is the one based on the stripe coherence. As a result, the condition based on the stripe coherence is at least as strong as the one based on the $\Loi$ norm.

As a final note, we mention that assuming $\mu(\D)= \underset{s}{\max}\ \mu_s = \mu_0$ is in fact a reasonable assumption. Recall that in order to compute $\mu_s$ we evaluate inner products between atoms which are $s$ indexes shifted from each other. As a result, the higher the shift $s$ is, the less overlap the atoms have, and the less $\mu_s$ is expected to be. Thus, we expect the value $\mu_0$ to be the largest or close to it in most cases.


\section{Theoretical Analysis of Corrupted Signals}

\begin{customthm}{15}{(Upper bounding the SRIP via the mutual coherence):}
	For a convolutional dictionary $\D$ with global mutual coherence $\mu(\D)$, the SRIP can be upper-bounded by
	\begin{equation}
	\delta_k\leq(k-1)\mu(\D).
	\end{equation}
\end{customthm}

\begin{proof}
Consider the sub-dictionary $\D_{\mathcal{T}}$, obtained by restricting the columns of $\D$ to a support $\mathcal{T}$ with $\Loi$ norm equal to $k$. Lemma 1 states that the eigenvalues of the Gram matrix $\D_{\mathcal{T}}^T\D_{\mathcal{T}}$ are bounded by
\begin{equation} \label{Gersh}
1-(k-1)\mu(\D)\leq\lambda_i(\D_{\mathcal{T}}^T\D_{\mathcal{T}})\leq 1+(k-1)\mu(\D).
\end{equation}
Now, for every $\Delt$ we have that
\begin{align*}
(1-(k-1)\mu(\D))\|\Delt\|_2^2 \leq & \lambda_{min}(\D_{\mathcal{T}}^T\D_{\mathcal{T}})\|\Delt\|_2^2 \\
\leq & \|\D_{\mathcal{T}} \Delt\|_2^2 \leq \lambda_{max}(\D_{\mathcal{T}}^T\D_{\mathcal{T}})\|\Delt\|_2^2 \\
\leq & (1+(k-1)\mu(\D))\|\Delt\|_2^2,
\end{align*}
where $\lambda_{max}$ and $\lambda_{min}$ are the maximal and minimal eigenvalues, respectively.
As a result, we obtain that $\delta_k\leq(k-1)\mu(\D)$.
\end{proof}

\begin{customthm}{16}{(Stability of the solution to the $\Poie$ problem):}
Consider a sparse vector $\Gama$ such that $\|\Gama\|_{0,\infty} = k < \frac{1}{2}\left( 1 + \frac{1}{\mu(\D)} \right) $, and a convolutional dictionary $\D$ satisfying the SRIP property for $\Loi=2k$ with coefficient $\delta_{2k}$. Then, the distance between the true sparse vector $\Gama$ and the solution to the $\Poie$ problem $\hat{\Gama}$ is bounded by
\begin{equation}
\|\Gama-\hat{\Gama}\|_2^2\leq \frac{4\epsilon^2}{1-\delta_{2k}}\leq\frac{4\epsilon^2}{1-(2k-1)\mu(\D)}.	
\end{equation}
\end{customthm}

\begin{proof}
The solution to the $\Poie$ problem satisfies $\|\Y-\D\hat{\Gama}\|_2^2\leq\epsilon^2$, and it must also satisfy $\|\hat{\Gama}\|_{0,\infty}\leq\|\Gama\|_{0,\infty}$ (since $\hat{\Gama}$ is the solution with the minimal $\Loi$ norm).
Defining $\Delt=\Gama-\hat{\Gama}$, using the triangle inequality, we have that $\|\D\Delt\|_2^2=\|\D\Gama-\Y+\Y-\D\hat{\Gama}\|_2^2\leq4\epsilon^2$. Furthermore, since the $\Loi$ norm satisfies the triangle inequality as well, we have that $\|\Delt\|_{0,\infty}=\|\Gama-\hat{\Gama}\|_{0,\infty}\leq\|\Gama\|_{0,\infty}+\|\hat{\Gama}\|_{0,\infty} \leq 2k$. 
Using the SRIP of $\D$, we have that
\begin{equation}
(1-\delta_{2k})\|\Delt\|_2^2\leq\|\D\Delt\|_2^2\leq 4\epsilon^2,
\end{equation}
where in the first inequality we have used the lower bound provided by the definition of the SRIP. Finally, we obtain the following stability claim:
\begin{equation}
\|\Delt\|_2^2=\|\Gama-\hat{\Gama}\|_2^2\leq \frac{4\epsilon^2}{1-\delta_{2k}}.
\end{equation}
Using our bound of the SRIP in terms of the mutual coherence, we obtain that
\begin{equation}
\|\Delt\|_2^2=\|\Gama-\hat{\Gama}\|_2^2\leq \frac{4\epsilon^2}{1-\delta_{2k}}\leq\frac{4\epsilon^2}{1-(2k-1)\mu(\D)}.
\end{equation}
For the last inequality to hold, we have assumed $k = \|\Gama\|_{0,\infty}<\frac{1}{2}(1+\frac{1}{\mu(\D)})$.

\end{proof}

\begin{customthm}{17}{(Stable recovery of global OMP in the presence of noise):}
	Suppose a clean signal $\X$ has a representation $\D\Gama$, and that it is contaminated with noise $\E$ to create the signal $\Y=\X+\E$, such that $\|\Y-\X\|_2\leq\epsilon$. Denote by $\epsilon_{_L}$ the highest energy of all $n$-dimensional local patches extracted from $\E$. Assume $\Gama$ satisfies
	\begin{equation}
	\|\Gama\|_{0,\infty} < \frac{1}{2}\left( 1+\frac{1}{\mu(\D)} \right)-\frac{1}{\mu(\D)}\cdot\frac{\epsilon_{_L}}{|\Gamma_{min}|},
	\end{equation}
	where $|\Gamma_{min}|$ is the minimal entry in absolute value of the sparse vector $\Gama$.
	Denoting by $\Gama_\text{OMP}$ the solution obtained by running OMP for $\|\Gama\|_0$ iterations, we are guaranteed that
	\begin{enumerate}[ a) ]
	\item OMP will find the correct support; And,
	\item $\|\Gama_\text{OMP}-\Gama\|_2^2\leq\frac{\epsilon^2}{1-\mu(\|\Gama\|_{0,\infty}-1)}$.
	\end{enumerate}
\end{customthm}

\begin{proof}
	We shall first prove that the first step of OMP succeeds in recovering an element from the correct support.
	Denoting by $\mathcal{T}$ the support of $\Gama$, we can write
	\begin{equation}
	\Y = \D\Gama + \E = \sum_{t\in \mathcal{T}} \Gamma_t \d_t + \E.
	\label{GlobalExpression}
	\end{equation}
	Suppose that $\Gama$ has its largest coefficient in absolute value in $\Gamma_i$. For the first step of OMP to choose one of the atoms in the support, we require
	\begin{equation}
	|\d_i^T \Y | > \max_{j\notin\mathcal{T}} | \d_j^T \Y |.
	\end{equation}
	Substituting Equation \eqref{GlobalExpression} in this requirement we obtain
	\begin{equation} \label{eq:inequality}
	\left| \sum_{t\in \mathcal{T}}\Gamma_t\d_t^T\d_i + \E^T\d_i\right| > \max_{j\notin\mathcal{T}} \left| \sum_{t\in \mathcal{T}}\Gamma_t\d_t^T\d_j + \E^T\d_j\right|.
	\end{equation}
	Using the reverse triangle inequality we can construct a lower bound for the left hand side:
	\begin{align}
	\left| \sum_{t\in \mathcal{T}}\Gamma_t\d_t^T\d_i + \E^T\d_i\right|
	&\geq\left| \sum_{t\in \mathcal{T}}\Gamma_t\d_t^T\d_i\right| - \left|\E^T\d_i\right|.
	\end{align}
	Our next step is to bound the absolute value of the inner product of the noise and the atom $\d_i$. A na\"ive approach, based on the Cauchy-Schwarz inequality and the normalization of the atoms, would be to bound the inner product as $|\E^T\d_i| \leq \|\E\|_2\cdot\|\d_i\|_2 \leq \epsilon$. However, such bound would disregard the local nature of the atoms. Due to their limited support we have that $\d_i=\R_i^T\R_i\d_i$ where, as previously defined, $\R_i$ extracts a $n$-dimensional patch from a $N$-dimensional signal. Based on this observation, we have that
	\begin{equation}
	|\E^T\d_i| = |\E^T\R_i^T\R_i\d_i| \leq \|\R_i\E\|_2\cdot\|\d_i\|_2  \leq \epsilon_{_L},
	\end{equation}		
	where we have used the fact that \mbox{$\|\R_i \E\|_2 \leq \epsilon_{_L} \ \forall\ i$}. By exploiting the locality of the atom, together with the assumption regarding the maximal local energy of the noise, we are able to obtain a much tighter bound, because $\epsilon_{_L} \ll \epsilon$ in general. As a result, we obtain
	\begin{equation}
	\left| \sum_{t\in \mathcal{T}}\Gamma_t\d_t^T\d_i + \E^T\d_i\right| \geq\left| \sum_{t\in \mathcal{T}}\Gamma_t\d_t^T\d_i\right| - \epsilon_{_L}.
	\end{equation}
	Using the reverse triangle inequality, the normalization of the atoms and the fact that $|\Gamma_i|\geq|\Gamma_t|$, we obtain
	\begin{align}
	\left| \sum_{t\in \mathcal{T}}\Gamma_t\d_t^T\d_i + \E^T\d_i\right|
	&\geq |\Gamma_i| - \sum_{t \in \mathcal{T},t\neq i}|\Gamma_t|\cdot|\d_t^T\d_i |-\epsilon_{_L} \\
	&\geq |\Gamma_i| - |\Gamma_i|\sum_{t \in \mathcal{T},t\neq i}|\d_t^T\d_i |-\epsilon_{_L}.
	\end{align}
	Notice that $\d_t^T\d_i$ is zero for every atom too far from $\d_i$ because the atoms do not overlap. Denoting the stripe which fully contains the $i^{th}$ atom as $p(i)$ and its support as $\mathcal{T}_{p(i)}$, we can restrict the summation as:
	\begin{equation}
	\left| \sum_{t\in \mathcal{T}}\Gamma_t\d_t^T\d_i + \E^T\d_i\right| \geq |\Gamma_i| - |\Gamma_i|\sum_{\substack{t \in \mathcal{T}_{p(i)},\\ t\neq i}}|\d_t^T\d_i |-\epsilon_{_L}.
	\end{equation}
	Denoting by $n_{p(i)}$ the number of non-zeros in the support $\mathcal{T}_{p(i)}$ and using the definition of the mutual coherence we obtain:
	\begin{align*}
	\left| \sum_{t\in \mathcal{T}}\Gamma_t\d_t^T\d_i + \E^T\d_i\right| &	\geq |\Gamma_i| - |\Gamma_i| (n_{p(i)}-1)\mu(\D) - \epsilon_{_L}\\
	& \geq |\Gamma_i| - |\Gamma_i| (\|\Gama\|_{0,\infty}-1)\mu(\D) - \epsilon_{_L}.
	\end{align*}
	In the last inequality we have used the definition of the $\Loi$ norm.
	
	Now, we construct an upper bound for the right hand side of equation \eqref{eq:inequality}, once again using the triangle inequality and the fact that $|\E^T\d_j|\leq\epsilon_{_L}$:
	\begin{align}
	\max_{j\notin\mathcal{T}}\left| \sum_{t\in \mathcal{T}}\Gamma_t\d_t^T\d_j + \E^T\d_j \right|
	& \leq \max_{j\notin\mathcal{T}}\left| \sum_{t\in \mathcal{T}}\Gamma_t\d_t^T\d_j\right| + \epsilon_{_L}.
	\end{align}
	Using the same rationale as before we get
	\begin{align}
	\max_{j\notin\mathcal{T}}\left| \sum_{t\in \mathcal{T}}\Gamma_t\d_t^T\d_j + \E^T\d_j \right| 
	&\leq |\Gamma_i|\max_{j\notin\mathcal{T}} \sum_{t \in \mathcal{T}}|\d_t^T\d_j |+\epsilon_{_L}\\
	&\leq |\Gamma_i|\max_{j\notin\mathcal{T}} \sum_{t \in \mathcal{T}_{p(j)}}|\d_t^T\d_j |+\epsilon_{_L}\\
	&\leq |\Gamma_i|\cdot\|\Gama\|_{0,\infty}\cdot\mu(\D)+\epsilon_{_L}.
	\end{align}
	Using both bounds, we obtain
	\begin{align*}
	& \left| \sum_{t\in \mathcal{T}}\Gamma_t\d_t^T\d_i + \E^T\d_i\right|
	\geq |\Gamma_i| - |\Gamma_i| (\|\Gama\|_{0,\infty}-1)\mu(\D) - \epsilon_{_L} \\
	\geq & |\Gamma_i|\cdot \|\Gama\|_{0,\infty}\mu(\D)+\epsilon_{_L}
	\geq\max_{j\notin\mathcal{T}}\left| \sum_{t\in \mathcal{T}}\Gamma_t\d_t^T\d_j + \E^T\d_j \right|.
	\end{align*}
	From this, it follows that
	\begin{align}
	\|\Gama\|_{0,\infty}&\leq\frac{1}{2}\left(1+\frac{1}{\mu(\D)}\right)-\frac{1}{\mu(\D)}\cdot\frac{\epsilon_{_L}}{|\Gamma_i|}.
	\label{Eq:ThisIneq}
	\end{align}
	Note that the theorem's hypothesis assumes that the above holds for $|\Gamma_{\text{min}}|$ instead of $|\Gamma_i|$. However, because $|\Gamma_i| \geq |\Gamma_{min}|$, this condition holds for every $i$. Therefore, Equation \eqref{Eq:ThisIneq} holds and we conclude that the first step of OMP succeeds.  
	
	Next, we address the success of subsequent iterations of the OMP.
	Define the sparse vector obtained after $k<\|\Gama\|_0$ iterations as $\Lamda^k$, and denote its support by $\mathcal{T}^k$. Assuming that the algorithm identified correct atoms (i.e., has so far succeeded), $\mathcal{T}^k = \text{supp}\{\Lamda_k\}\subset\text{supp}\{\Gama\}$. The next step in the algorithm is the update of the residual. This is done by decreasing a term proportional to the chosen atoms from the signal; i.e., 
	\begin{equation}
	\Y^k = \Y - \sum_{i\in \mathcal{T}^k} \d_i \Lambda_i^k.
	\end{equation}
	Moreover, $\Y^k$ can be seen as containing a clean signal $\X^k$ and the noise component $\E$, where 
	\begin{equation}
	\X^k = \X - \sum_{i\in \mathcal{T}^k} \d_i \Lambda_i^k = \D\Gama^k.
	\end{equation}
	The objective is then to recover the support of the sparse vector corresponding to $\X^k$, $\Gama^k$, defined as\footnote{Note that if $k=0$, $\X_0 = \X$, $\Y_0 = \Y$, and $\Gama_0 = \Gama$.}
	\begin{equation}  \label{eq:defGama_k}
	\Gamma_i^k =
	\left\{
	\begin{array}{ll}
	\Gamma_i - \Lambda_i^k & \mbox{if } \ i\in \mathcal{T}^k \\
	\Gamma_i  & \mbox{if } \ i\notin \mathcal{T}^k.
	\end{array}
	\right.
	\end{equation}	
	Note that $\text{supp}\{\Gama^k\} \subseteq \text{supp}\{\Gama\}$ and so
	\begin{equation} \label{eq:omp_first_item}
	\|\Gama^k\|_{0,\infty}\leq \|\Gama\|_{0,\infty}.
	\end{equation}
	In words, the $\Loi$ norm of the underlying solution of $\X^k$ does not increase as the iterations proceed. Note that this representation is also unique in light of the uniqueness theorem presented in part I.
	From the above definitions, we have that
	\begin{align}
	\Y^k - \X^k &= \Y - \sum_{i\in \mathcal{T}^k} \d_i \Lambda_i^k - \X + \sum_{i\in \mathcal{T}^k} \d_i \Lamda_i^k \\
	&= \Y - \X = \E.
	\end{align}
	Hence, the noise level is preserved, both locally and globally; both $\epsilon$ and $\epsilon_L$ remain the same. 
	
	Note that $\Gama^k$ differs from $\Gama$ in at most $k$ places, following Equation \eqref{eq:defGama_k} and that $|\mathcal{T}^k|=k$. As such, $\|\Gama^k\|_{\infty}$ is greater than the $(k+1)^{th}$ largest element in absolute value in $\Gama$. This implies that $\|\Gama^k\|_{\infty}\geq |\Gamma_{\min}|$.
	Finally, we obtain that
	\begin{align}
	\|\Gama^k\|_{0,\infty} \leq \|\Gama\|_{0,\infty}
	& < \frac{1}{2}\left( 1+\frac{1}{\mu(\D)} \right)-\frac{1}{\mu(\D)}\cdot\frac{\epsilon_{_L}}{|\Gamma_{min}|} \\
	& \leq \frac{1}{2}\left( 1+\frac{1}{\mu(\D)} \right)-\frac{1}{\mu(\D)}\cdot\frac{\epsilon_{_L}}{\|\Gama^k\|_\infty}.
	\end{align}
	The first inequality is due to \eqref{eq:omp_first_item}, the second is the assumption in \eqref{omp_hypothesis} and the third was just obtained above. Thus,
	\begin{equation}
	\|\Gama^k\|_{0,\infty} < \frac{1}{2}\left( 1+\frac{1}{\mu(\D)} \right)-\frac{1}{\mu(\D)}\cdot\frac{\epsilon_{_L}}{\|\Gama^k\|_\infty}.
	\end{equation}		
	Similar to the first iteration, the above inequality together with the fact that the noise level is preserved, guarantees the success of the next iteration of the OMP algorithm. From this follows that the algorithm is guaranteed to recover the true support after $\|\Gama\|_0$ iterations.
	
	Finally, we move to prove the second claim. In its last iteration OMP solves the following problem:
	\begin{equation}
	\Gama_{OMP}=\arg\min_\Delt\|\D_{\mathcal{T}}\Delt-\Y\|_2^2,
	\end{equation}
	where $\D_{\mathcal{T}}$ is the convolutional dictionary restricted to the support $\mathcal{T}$ of the true sparse code $\Gama$. Denoting $\Gama_\mathcal{T}$ the (dense) vector corresponding to those atoms, the solution to the above problem is simply given by
	\begin{align}
	\Gama_{OMP}
	&=\D_\mathcal{T}^{\dagger}\Y
	=\D_\mathcal{T}^{\dagger}\left( \D\Gama + \E \right) \\
	&=\D_\mathcal{T}^{\dagger}\left( \D_\mathcal{T}\Gama_\mathcal{T} + \E \right)
	=\Gama_\mathcal{T} + \D_\mathcal{T}^{\dagger}\E,
	\end{align}
	where we have denoted by $\D_\mathcal{T}^{\dagger}$ the Moore-Penrose pseudoinverse of the sub-dictionary $\D_\mathcal{T}$.
	Thus,
	\begin{align}
	\|\Gama_{OMP}-\Gama_\mathcal{T}\|_2^2 = \|\D_\mathcal{T}^{\dagger}\E\|_2^2
	&\leq \|\D_\mathcal{T}^{\dagger}\|^2_2\cdot\|\E\|^2_2 \\
	= \frac{1}{\lambda_{\min}\left(\D_\mathcal{T}^T\D_\mathcal{T} \right)}\|\E\|^2_2
	&\leq\frac{\epsilon^2}{1-\mu(\D)(\|\Gama\|_{0,\infty}-1)}.
	\end{align}
	In the last inequality we have used the bound on the eigenvalues of $\D_\mathcal{T}^T\D_\mathcal{T}$ derived in Lemma 1.
	
\end{proof}

\begin{customthm}{18}{(ERC in the convolutional sparse model):}
	For a convolutional dictionary $\D$ with mutual coherence $\mu(\D)$, the ERC condition is met for every support $\mathcal{T}$ that satisfies
	\begin{equation}
		\|\mathcal{T}\|_{0,\infty} < \frac{1}{2}\left(1+\frac{1}{\mu(\D)}\right).
	\end{equation}
\end{customthm}
\begin{proof}
	For the ERC to be satisfied, we must require that, for every $i \notin \mathcal{T}$, 
	\begin{equation}
	\| \D^{\dagger}_\mathcal{T} \d_i \|_1  =  \left\| \left( \D^T_\mathcal{T} \D_\mathcal{T} \right)^{-1} \D^T_\mathcal{T} \d_i \right\|_1 < 1.
	\end{equation}
	Using properties of induced norms, we have that
	\begin{equation} \label{eq:ERC_multiplicative}
	\left\| \left( \D^T_\mathcal{T} \D_\mathcal{T} \right)^{-1} \D^T_\mathcal{T} \d_i \right\|_1
	\leq \left\| \left( \D^T_\mathcal{T} \D_\mathcal{T} \right)^{-1} \right\|_1 \left\| \D^T_\mathcal{T} \d_i \right\|_1.
	\end{equation}
	Using the definition of the mutual coherence, it is easy to see that the absolute value of the entries in the vector $\D^T_\mathcal{T} \d_i$ are bounded by $\mu(\D)$. Moreover, due to the locality of the atoms, the number of non-zero inner products with the atom $\d_i$ is equal to the number of atoms in $\mathcal{T}$ that overlap with it. This number can, in turn, be bounded by the maximal number of non-zeros in a stripe from $\mathcal{T}$, i.e., its $\Loi$ norm, denoted by $k$. Therefore, $\left\| \D^T_\mathcal{T} \d_i \right\|_1 \leq k \mu(\D)$. 
	
	Addressing now the first term in Equation \eqref{eq:ERC_multiplicative}, note that
	\begin{equation} \label{eq:ERC_gram_l1_bound}
	\left\| \left( \D^T_\mathcal{T} \D_\mathcal{T} \right)^{-1} \right\|_1 = \left\| \left( \D^T_\mathcal{T} \D_\mathcal{T} \right)^{-1} \right\|_\infty,
	\end{equation}
	since the induced $\ell_1$ and $\ell_\infty$ norms are equal for symmetric matrices. Next, using the Ahlberg-Nilson-Varah bound and similar steps to those presented in Lemma 1, we have that
	\begin{equation} \label{eq:ERC_gram_infinity_bound}
	\left\| \left( \D^T_\mathcal{T} \D_\mathcal{T} \right)^{-1} \right\|_\infty \leq \frac{1}{1 - (k-1) \mu(\D) }.
	\end{equation}
	In order for this to hold, we must require the Gram $\D^T_\mathcal{T} \D_\mathcal{T}$ to be diagonally dominant, which is satisfied if $1-(k-1)\mu(\D) > 0$. This is indeed the case, as follows from the assumption on the $\Loi$ norm of $\mathcal{T}$. Plugging the above into Equation \eqref{eq:ERC_multiplicative}, we obtain
	\begin{align} \label{eq:ERC_theta_bound}
	\left\| \left( \D^T_\mathcal{T} \D_\mathcal{T} \right)^{-1} \D^T_\mathcal{T} \d_i \right\|_1
	& \leq \left\| \left( \D^T_\mathcal{T} \D_\mathcal{T} \right)^{-1} \right\|_1 \left\| \D^T_\mathcal{T} \d_i \right\|_1 \\
	& \leq \frac{k \mu(\D)}{1 - (k-1) \mu(\D) }.
	\end{align}
	Our assumption that $k < \frac{1}{2}\left(1+\frac{1}{\mu(\D)}\right)$ implies that the above term is less than one, thus showing the ERC is satisfied for all supports $\mathcal{T}$ that satisfy $\|\mathcal{T}\|_{0,\infty} < \frac{1}{2}\left(1+\frac{1}{\mu(\D)}\right)$.
\end{proof}

\begin{customthm}{19}{(Stable recovery of global Basis Pursuit in the presence of noise):}
	Suppose a clean signal $\X$ has a representation $\D\Gama$, and that it is contaminated with noise $\E$ to create the signal $\Y=\X+\E$. Denote by $\epsilon_{_L}$ the highest energy of all $n$-dimensional local patches extracted from $\E$. Assume $\Gama$ satisfies
	\begin{equation}
	\|\Gama\|_{0,\infty} \leq \frac{1}{3} \left( 1 + \frac{1}{\mu(\D)} \right).
	\end{equation}
	Denoting by $\Gama_{\text{BP}}$ the solution to the Lagrangian BP formulation with parameter $\lambda=4\epsilon_L$, we are guaranteed that
	\begin{enumerate}
	\item The support of $\Gama_{\text{BP}}$ is contained in that of $\Gama$.
	\item $\|\Gama_{\text{BP}}-\Gama\|_\infty < \frac{15}{2}\epsilon_L$.
	\item In particular, the support of $\Gama_{\text{BP}}$ contains every index $i$ for which $|\Gamma_i|>\frac{15}{2}\epsilon_L$.
	\item The minimizer of the problem, $\Gama_{\text{BP}}$, is unique.
	\end{enumerate}
\end{customthm}

We first state and prove a Lemma that will become of use while proving the stability result of BP.

\begin{customlemma}{2}{}
\label{lemma:noise_correlation}
Suppose a clean signal $\X$ has a representation $\D\Gama$, and that it is contaminated with noise $\E$ to create the signal $\Y=\X+\E$. Denote by $\epsilon_{_L}$ the highest energy of all $n$-dimensional local patches extracted from $\E$. Assume that
\begin{equation} \label{eq:lemma_assumption}
\|\Gama\|_{0,\infty} \leq \frac{1}{2} \left( 1 + \frac{1}{\mu(\D)} \right).
\end{equation}
Denoting by $\X_{\text{LS}}$ the best $\ell_2$ approximation of $\Y$ over the support $\mathcal{T}$, we have that\footnote{We suspect that, perhaps under further assumptions, the constant in this bound can be improved from 2 to 1. This is motivated by the fact that the bound in \cite{Tropp2006}, for the traditional sparse model, is $1\cdot \epsilon$ -- where $\epsilon$ is the global noise level.}
\begin{equation} \label{eq:DT_res}
\| \D^T(\Y-\X_{\text{LS}}) \|_\infty \leq 2\epsilon_L.
\end{equation}
\end{customlemma}

\begin{proof}
Using the expression for the least squares solution (and assuming that $\DT$ has full-column rank), we have that
\begin{align*}
\DT^T(\Y-\X_{\text{LS}})
& = \DT^T \left( \Y-\DT \left( \DT^T \DT \right)^{-1} \DT^T \Y \right) \\
& = \left( \DT^T - \DT^T \DT \left( \DT^T \DT \right)^{-1} \DT^T \right) \Y = \mathbf{0}.
\end{align*}
This shows that all inner products between atoms inside $\mathcal{T}$ and the vector $\Y-\X_{\text{LS}}$ are zero, and thus $\| \DTB^T(\Y-\X_{\text{LS}}) \|_\infty = \| \D^T(\Y-\X_{\text{LS}}) \|_\infty$. We have denoted by $\overline{\mathcal{T}}$ the complement to the support, containing all atoms not found in $\mathcal{T}$, and by $\DTB$ the corresponding dictionary.
Denoting by $\Gama_\mathcal{T}$ the vector $\Gama$ restricted to its support, and expressing $\X_{\text{LS}}$ and  $\Y$ conveniently, we obtain
\begin{align}
& \| \DTB^T(\Y-\X_{\text{LS}}) \|_\infty \\
= &\| \DTB^T \left( \mathbf{I} - \DT \left( \DT^T \DT \right)^{-1} \DT^T \right) \Y \|_\infty \\
= &\| \DTB^T \left( \mathbf{I} - \DT \left( \DT^T \DT \right)^{-1} \DT^T \right) (\DT\Gama_\mathcal{T} + \E) \|_\infty. \label{eq:DTB_res}
\end{align}
It is easy to verify that 
\begin{align}
\phantom{=} \left( \mathbf{I} - \DT \left( \DT^T \DT \right)^{-1} \DT^T \right) \DT\Gama_\mathcal{T} = \mathbf{0}.
\end{align}
Plugging this into the above, we have that 
\begin{align*}
 \| \DTB^T(\Y-\X_{\text{LS}}) \|_\infty  =  \left\| \DTB^T \left( \mathbf{I} - \DT \left( \DT^T \DT \right)^{-1} \DT^T \right) \E \right\|_\infty.
\end{align*}
Using the triangle inequality for the $\ell_\infty$ norm, we obtain
\begin{align}
& \| \DTB^T(\Y-\X_{\text{LS}}) \|_\infty \\
= & \left\| \DTB^T\E - \DTB^T\DT \left( \DT^T \DT \right)^{-1} \DT^T\E \right\|_\infty \\
\leq & \left\| \DTB^T\E \right\|_\infty + \left\| \DTB^T\DT \left( \DT^T \DT \right)^{-1} \DT^T\E \right\|_\infty. \label{eq:triangle_inequality_step}
\end{align}
In what follows, we will bound both terms in the above expression with $\epsilon_L$. First, due to the limited support of the atoms, $\d_i=\R_i^T\R_i\d_i$, where $\R_i$ extracts the $i^{th}$ local patch from the global signal, as previously defined. Thus,
\begin{align} \label{eq:local_noise}
\left\| \DTB^T\E \right\|_\infty
 = \max_{i\in\overline{\mathcal{T}}} | \d_i^T\E | & = \max_{i\in\overline{\mathcal{T}}} | \d_i^T\R_i^T\R_i\E | \\
& \leq \max_{i\in\overline{\mathcal{T}}} \|\R_i\d_i\|_2 \cdot \|\R_i\E\|_2 \leq \epsilon_L,
\end{align}
where we have used the Cauchy-Schwarz inequality, the normalization of the atoms and the fact that $\|\R_i\E\|_2\leq\epsilon_L$ $\forall \ i$. Next, we move to the second term in Equation \eqref{eq:triangle_inequality_step}. Using the definition of the induced $\ell_\infty$ norm, and the bound $\| \DT^T\E \|_\infty \leq \epsilon_L$, we have that
\begin{align*}
\left\| \DTB^T\DT \left( \DT^T \DT \right)^{-1} \DT^T\E \right\|_\infty \leq &\left\| \DTB^T\DT \left( \DT^T \DT \right)^{-1} \right\|_\infty \epsilon_L.
\end{align*}
Recall that the induced infinity norm of a matrix is equal to the maximal $\ell_1$ norm of its rows. Notice that a row in the above matrix can be written as $\d_i^T\DT \left( \DT^T \DT \right)^{-1}$, where $i\in \overline{\mathcal{T}}$. Then,
\begin{align}
&\left\| \DTB^T\DT \left( \DT^T \DT \right)^{-1} \DT^T\E \right\|_\infty \\
\leq &\max_{i\in\overline{\mathcal{T}}} \left\| \d_i^T\DT \left( \DT^T \DT \right)^{-1} \right\|_1 \cdot \epsilon_L.
\end{align}
Using the definition of induced $\ell_1$ norm and Equation \eqref{eq:ERC_gram_l1_bound} and \eqref{eq:ERC_gram_infinity_bound}, we obtain that
\begin{align}
&\left\| \DTB^T\DT \left( \DT^T \DT \right)^{-1} \DT^T\E \right\|_\infty \\
\leq &\max_{i\in\overline{\mathcal{T}}} \left\| \d_i^T\DT \right\|_1 \cdot \left\| \left( \DT^T \DT \right)^{-1} \right\|_1 \cdot \epsilon_L \\
\leq &\max_{i\in\overline{\mathcal{T}}} \left\| \d_i^T\DT \right\|_1 \cdot \frac{1}{1-(k-1)\mu(\D)} \cdot \epsilon_L,
\end{align}
where we have denoted by $k$ the $\Loi$ norm of $\mathcal{T}$. Notice that due to the limited support of the atoms, the vector $\d_i^T\DT$ has at most $k$ non-zeros entries. Additionally, each of these is bounded in absolute value by the mutual coherence of the dictionary. Therefore, $\|\d_i^T\DT\|_1 \leq k\mu(\D)$ (note that $i\notin\mathcal{T}$). Plugging this into the above equation, we obtain
\begin{align}
\left\| \DTB^T\DT \left( \DT^T \DT \right)^{-1} \DT^T\E \right\|_\infty \leq \frac{k\mu(\D)}{1-(k-1)\mu(\D)} \cdot \epsilon_L.
\end{align}
Rearranging our assumption in Equation \eqref{eq:lemma_assumption}, we get $\frac{k\mu(\D)}{1-(k-1)\mu(\D)}\leq 1$. Therefore, the above becomes
\begin{align} \label{eq:using_assumption}
\left\| \DTB^T\DT \left( \DT^T \DT \right)^{-1} \DT^T\E \right\|_\infty \leq \epsilon_L.
\end{align}
Finally, plugging Equation \eqref{eq:local_noise} and \eqref{eq:using_assumption} into Equation \eqref{eq:triangle_inequality_step}, we conclude that
\begin{align}
& \| \DTB^T(\Y-\X_{\text{LS}}) \|_\infty \\
\leq & \left\| \DTB^T\E \right\|_\infty + \left\| \DTB^T\DT \left( \DT^T \DT \right)^{-1} \DT^T\E \right\|_\infty \\
\leq & \ \epsilon_L + \epsilon_L = 2\epsilon_L.
\end{align}
\vspace{-0.2cm}
\end{proof}

For completeness, and before moving to the proof of the stability of BP, we now reproduce Theorem 8 from \cite{Tropp2006}.
\begin{thm}{(Tropp):} \label{Theorem:Tropp}
	Suppose a clean signal $\X$ has a representation $\D\Gama$, and that it is contaminated with noise $\E$ to create the signal $\Y=\X+\E$. Assume further that $\Y$ is a signal whose best $\ell_2$ approximation over the support of $\Gama$, denoted by $\mathcal{T}$, is given by $\X_{\text{LS}}$, and that $\X_{\text{LS}} = \D \Gama_{\text{LS}}$. Moreover, consider $\Gama_{\text{BP}}$ to be the solution to the Lagrangian BP formulation with parameter $\lambda$. If the following conditions are satisfied:
	\begin{enumerate}[ a) ]
		\item \label{eq:ERC_assumption} The ERC is met with constant $\theta \geq 0$ for the support $\mathcal{T}$; And
		\item \label{eq:Corr_assumption} $\| \D^T(\Y-\X_{\text{LS}}) \|_\infty \leq \lambda\theta$,
	\end{enumerate}
	then the following hold:
	\begin{enumerate}
		\item The support of $\Gama_{\text{BP}}$ is contained in that of $\Gama$.
		\item \label{thesis_2} $\|\Gama_{\text{BP}}-\Gama_{\text{LS}} \|_\infty \leq \lambda \left\| \left( \D_\mathcal{T}^T \D_\mathcal{T} \right)^{-1} \right\|_\infty$.
		\item \label{thesis_3} In particular, the support of $\Gama_{\text{BP}}$ contains every index $i$ for which $|{{\Gama_{\text{LS}}}_{ }}_i| > \lambda \left\| \left( \D_\mathcal{T}^T \D_\mathcal{T} \right)^{-1} \right\|_\infty$.
		\item The minimizer of the problem, $\Gama_{\text{BP}}$, is unique.
	\end{enumerate}
\end{thm}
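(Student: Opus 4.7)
The plan is to invoke Tropp's generic stability result (Theorem~\ref{Theorem:Tropp}) by verifying its two hypotheses, and then convert the resulting bound on $\|\Gama_{\text{BP}}-\Gama_{\text{LS}}\|_\infty$ into a bound on $\|\Gama_{\text{BP}}-\Gama\|_\infty$ with one additional triangle inequality that controls $\|\Gama_{\text{LS}}-\Gama\|_\infty$. Items~1 and 4 of the conclusion will then drop out directly from Tropp's conclusions~1 and 4.

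First I would verify hypothesis~(a). Since the assumption $\|\Gama\|_{0,\infty}\leq\tfrac{1}{3}(1+1/\mu(\D))$ is strictly stronger than the $\tfrac{1}{2}(1+1/\mu(\D))$ bound required by Theorem~\ref{Theorem:ERC_Loi}, ERC holds on the support $\mathcal{T}$ of $\Gama$. To extract the quantitative ERC constant $\theta$, I would revisit \eqref{eq:ERC_theta_bound} inside that theorem's proof and plug in $k\mu(\D)\leq\tfrac{1}{3}(1+\mu(\D))$ to obtain
\[
\theta\ \geq\ 1-\frac{k\mu(\D)}{1-(k-1)\mu(\D)}\ \geq\ 1-\frac{\tfrac{1}{3}(1+\mu(\D))}{\tfrac{2}{3}(1+\mu(\D))}\ =\ \tfrac{1}{2}.
\]
With the prescribed choice $\lambda=4\epsilon_L$ this gives $\lambda\theta=2\epsilon_L$, which matches exactly the bound delivered by Lemma~\ref{lemma:noise_correlation} (whose own hypothesis is weaker than the one assumed here). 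Hypothesis~(b) is therefore satisfied as well, and Tropp's theorem applies.

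Next I would turn Tropp's conclusions into the three quantitative statements. Using the bound $\|(\D_\mathcal{T}^T\D_\mathcal{T})^{-1}\|_\infty\leq (1-(k-1)\mu(\D))^{-1}\leq\tfrac{3}{2}$ (the same Ahlberg-Nilson-Varah estimate used in the proof of Theorem~\ref{Theorem:ERC_Loi}, evaluated under our assumption), conclusion~\ref{thesis_2} of Theorem~\ref{Theorem:Tropp} yields $\|\Gama_{\text{BP}}-\Gama_{\text{LS}}\|_\infty\leq 4\epsilon_L\cdot\tfrac{3}{2}=6\epsilon_L$. For the auxiliary step I would write $\Gama_{\text{LS}}-\Gama_\mathcal{T}=(\D_\mathcal{T}^T\D_\mathcal{T})^{-1}\D_\mathcal{T}^T\E$ and bound its $\ell_\infty$ norm by $\|(\D_\mathcal{T}^T\D_\mathcal{T})^{-1}\|_\infty\cdot\|\D_\mathcal{T}^T\E\|_\infty\leq\tfrac{3}{2}\cdot\epsilon_L$, where the second factor relies on the locality trick $\d_i=\R_i^T\R_i\d_i$ already used in the proof of Lemma~\ref{lemma:noise_correlation}. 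A triangle inequality closes item~2:
\[
\|\Gama_{\text{BP}}-\Gama\|_\infty\ \leq\ \|\Gama_{\text{BP}}-\Gama_{\text{LS}}\|_\infty + \|\Gama_{\text{LS}}-\Gama\|_\infty\ \leq\ 6\epsilon_L+\tfrac{3}{2}\epsilon_L\ =\ \tfrac{15}{2}\epsilon_L.
\]
Item~3 follows by combining conclusion~\ref{thesis_3} of Tropp's theorem (whose threshold is $\lambda\|(\D_\mathcal{T}^T\D_\mathcal{T})^{-1}\|_\infty\leq 6\epsilon_L$) with the same $\tfrac{3}{2}\epsilon_L$ gap: if $|\Gamma_i|>\tfrac{15}{2}\epsilon_L$ then $|(\Gama_{\text{LS}})_i|\geq|\Gamma_i|-\tfrac{3}{2}\epsilon_L>6\epsilon_L$, placing $i$ in $\mathrm{supp}(\Gama_{\text{BP}})$.

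The heavy lifting really lives in the constants. The factor $1/3$ in the hypothesis is tuned precisely so that two independent estimates --- $\theta\geq \tfrac{1}{2}$, making Lemma~\ref{lemma:noise_correlation}'s $2\epsilon_L$ meet Tropp's $\lambda\theta$ requirement, and $\|(\D_\mathcal{T}^T\D_\mathcal{T})^{-1}\|_\infty\leq\tfrac{3}{2}$, controlling both the Tropp bound and the LS-vs-truth gap --- hold simultaneously, and the sum $6+\tfrac{3}{2}$ collapses to the claimed $\tfrac{15}{2}$. Once this bookkeeping is arranged, every other ingredient has already been established in Lemmas~1--2 and Theorems~\ref{Theorem:ERC_Loi} and \ref{Theorem:Tropp}, and items~1 and 4 require no further work.
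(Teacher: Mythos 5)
You have not proved the statement in question. The theorem you were asked to establish is the generic Tropp result itself (Theorem~\ref{Theorem:Tropp}): a claim about the Lagrangian Basis Pursuit minimizer under the two abstract hypotheses (a) ERC with constant $\theta$ and (b) the correlation bound $\| \D^T(\Y-\X_{\text{LS}}) \|_\infty \leq \lambda\theta$, with no convolutional structure, no $\Loi$ norm, and no $\epsilon_L$ anywhere in it. Your proposal instead \emph{invokes} Theorem~\ref{Theorem:Tropp} as a black box in order to derive Theorem~\ref{Theorem:StabilityBP} (the $\frac{15}{2}\epsilon_L$ bound under $\|\Gama\|_{0,\infty}\leq\frac{1}{3}(1+1/\mu(\D))$ and $\lambda=4\epsilon_L$). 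That is circular with respect to the assigned statement: you assume exactly what you were asked to prove. For the record, the paper itself does not prove Theorem~\ref{Theorem:Tropp} either --- it reproduces it verbatim from \cite{Tropp2006} --- and what you wrote is essentially the paper's proof of Theorem~\ref{Theorem:StabilityBP} (same verification of ERC via Theorem~\ref{Theorem:ERC_Loi}, same $\theta\geq\frac{1}{2}$, same $\|(\D_\mathcal{T}^T\D_\mathcal{T})^{-1}\|_\infty\leq\frac{3}{2}$, same $6\epsilon_L+\frac{3}{2}\epsilon_L$ split). But that is a different theorem.

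An actual proof of the stated result must engage with the optimality conditions of $\min_\Gama \frac{1}{2}\|\Y-\D\Gama\|_2^2+\lambda\|\Gama\|_1$. The standard route is: restrict the problem to the columns indexed by $\mathcal{T}$ and let $\Gama_{\text{BP}}$ denote the minimizer of the restricted problem; the subgradient condition gives the closed form $\Gama_{\text{BP}}=\Gama_{\text{LS}}-\lambda\left(\D_\mathcal{T}^T\D_\mathcal{T}\right)^{-1}\g$ with $\|\g\|_\infty\leq 1$, which immediately yields conclusion~2 and hence conclusion~3; one then shows this restricted minimizer is a global minimizer by verifying $|\d_i^T(\Y-\D\Gama_{\text{BP}})|\leq\lambda$ for every $i\notin\mathcal{T}$, and this is precisely where hypothesis~(b) and the ERC enter, since the off-support correlation splits into $\d_i^T(\Y-\X_{\text{LS}})$ plus a term bounded by $\lambda\|\D_\mathcal{T}^{\dagger}\d_i\|_1\leq\lambda(1-\theta)$, so the total stays below $\lambda\theta+\lambda(1-\theta)=\lambda$; support containment and uniqueness then follow from the full column rank of $\D_\mathcal{T}$ (implied by ERC) together with convexity. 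None of these steps appears in your proposal, so as a proof of the assigned statement it is missing all of the substance; as a proof of Theorem~\ref{Theorem:StabilityBP} it would be fine, but that was not the task.
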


Armed with these, we now proceed to proving Theorem \ref{Theorem:StabilityBP}.

\begin{proof}
In this proof we shall show that Theorem \ref{Theorem:Tropp} can be reformulated in terms of the $\Loi$ norm and the mutual coherence of $\D$, thus adapting it to the convolutional setting. Our strategy will be first to restrict its conditions \eqref{eq:ERC_assumption} and \eqref{eq:Corr_assumption}, and then to derive from its theses the desired claims.

To this end, we begin by converting the assumption on the ERC into another one relying on the $\Loi$ norm. This can be readily done using Theorem \ref{Theorem:ERC_Loi}, which states that the ERC is met assuming the $\Loi$ norm of the support is less than $\frac{1}{2} \left( 1 + \frac{1}{\mu(\D)} \right)$ -- a condition that is indeed satisfied due to our assumption in Equation \eqref{eq:BP_assumption}.
Next, we move to assumption \eqref{eq:Corr_assumption} in Theorem \ref{Theorem:Tropp}. We can lower bound the ERC constant $\theta$ by employing the inequality in \eqref{eq:ERC_theta_bound}, thus obtaining
\begin{equation} \label{eq:theta_inequality}
\theta =  1 - \underset{i \notin \mathcal{T}}{\max} \|\D^{\dagger}_{\mathcal{T}} \d_i \|_1 \geq 1 - \frac{\|\Gama\|_{0,\infty} \mu(\D)}{1 - (\|\Gama\|_{0,\infty}-1) \mu(\D) }.
\end{equation}
Using the assumption that $\|\Gama\|_{0,\infty}\leq\frac{1}{3} \left( 1 + \frac{1}{\mu(\D)} \right)$, as stated in Equation \eqref{eq:BP_assumption}, the above can be simplified into
\begin{equation} \label{eq:lower_bound_theta}
\theta =  1 - \underset{i \notin \mathcal{T}}{\max} \|\D^{\dagger}_{\mathcal{T}} \d_i \|_1 \geq \frac{1}{2}.
\end{equation}
Bringing now the fact that $\lambda = 4\epsilon_L$, as assumed in our Theorem, and using the just obtained inequality \eqref{eq:lower_bound_theta}, condition \eqref{eq:Corr_assumption} must hold since
\begin{equation}
\| \D^T(\Y-\X_{\text{LS}}) \|_\infty \leq 2 \epsilon_L \leq \theta  \lambda.
\end{equation}
Note that the leftmost inequality is Lemma \eqref{lemma:noise_correlation}, and the implication here is that $\lambda \geq 4 \epsilon_L$.

Thus far, we have addressed the conditions in Theorem \ref{Theorem:Tropp}, showing that they hold in our convolutional setting. In the remainder of this proof we shall expand on its results, in particular point 2 and 3. We can upper bound the term $\left\| \left( \D_\mathcal{T}^T \D_\mathcal{T} \right)^{-1} \right\|_\infty$ using Equation \eqref{eq:ERC_gram_infinity_bound}, obtaining
\begin{equation}
\left\| \left( \D_\mathcal{T}^T \D_\mathcal{T} \right)^{-1} \right\|_\infty \leq \frac{1}{1 - ( \|\Gama\|_{0,\infty} -1) \mu(\D) }.
\label{eq:inequationFinal}
\end{equation}
Using once again the assumption that $\|\Gama\|_{0,\infty}\leq\frac{1}{3}\left(1+\frac{1}{\mu(\D)}\right)$, we have that $\|\Gama\|_{0,\infty}<\frac{1}{3}\left(3+\frac{1}{\mu(\D)}\right)$. From this last inequality, we get $\left( \|\Gama\|_{0,\infty} - 1 \right) \mu(\D) < \frac{1}{3}$. Thus, it follows that
\begin{equation}
\frac{1}{1 - ( \|\Gama\|_{0,\infty} -1) \mu(\D) } < \frac{3}{2}.
\end{equation}
Based on the above inequality, and Equation \eqref{eq:inequationFinal}, we get
\begin{equation}\label{eq:InverseGramBound}
\left\| \left( \D_\mathcal{T}^T \D_\mathcal{T} \right)^{-1} \right\|_\infty < \frac{3}{2}.
\end{equation}
Plugging this into the second result in Tropp's theorem, together with the above fixed $\lambda$, we obtain that
\begin{equation} \label{eq:BoundGamaBPLS}
\|\Gama_{\text{BP}}-\Gama_{\text{LS}} \|_\infty \leq \lambda \left\| \left( \D_\mathcal{T}^T \D_\mathcal{T} \right)^{-1} \right\|_\infty < 4\epsilon_L \cdot \frac{3}{2} = 6\epsilon_L.
\end{equation}
On the other hand, looking at the distance to the real $\Gama$,
\begin{align} \label{eq:BoundLsLast}
	\|\Gama_{\text{LS}}-\Gama\|_{\infty} & = \| \left( \D_\mathcal{T}^T \D_\mathcal{T} \right)^{-1} \D_\mathcal{T}^T\left( \Y - \X \right)\|_{\infty}\\ 
									    & \leq  \| \left( \D_\mathcal{T}^T \D_\mathcal{T} \right)^{-1} \|_{\infty}\cdot \| \D_\mathcal{T}^T \E \|_{\infty} < \frac{3}{2} \epsilon_L.
\end{align}
For the first inequality we have used the definition of the induced $\ell_\infty$ norm, and the second one follows from \eqref{eq:InverseGramBound} and a similar derivation to that in \eqref{eq:local_noise}. Finally, using triangle inequality and Equations \eqref{eq:BoundLsLast} and \eqref{eq:BoundGamaBPLS} we obtain
\begin{equation}
\| \Gama_{\text{BP}} - \Gama \|_{\infty} \leq \| \Gama_{\text{BP}} - \Gama_{\text{LS}} \|_{\infty} + \| \Gama_{\text{LS}} - \Gama \|_{\infty} < \frac{15}{2} \epsilon_L.
\end{equation}
The third result in the theorem follows immediately from the above.

\end{proof}

\section{Global Pursuit Through Local Processing}

Let us consider the Iterative Soft Thresholding algorithm which minimizes the global BP problem by iterating the following updates
\begin{equation}
\Gama^k = \mathcal{S}_{\lambda/c}\left( \Gama^{k-1} + \frac{1}{c} \D^T(\Y-\D\Gama^{k-1}) \right),
\end{equation}
where $\mathcal{S}$ applies an entry-wise soft thresholding operation with threshold $\lambda/c$. 
Defining as $\P_i$ the operator which extracts the $i^{th}$ $m-$dimensional vector from $\Gama$, we can break the above algorithm into local updates by
\begin{equation}
\P_i\Gama^k = \mathcal{S}_{\lambda/c}\left( \P_i\Gama^{k-1} + \frac{1}{c} \P_i\D^T(\Y-\D\Gama^{k-1}) \right).
\end{equation}
As a first observation, the matrix $\P_i\D^T$, which is of size $m \times N$, is in-fact $\D_L^T$ padded with zeros. As a consequence, the above can be rewritten as follows:
\begin{equation}
\P_i\Gama^k = \mathcal{S}_{\lambda/c}\left( \P_i\Gama^{k-1} + \frac{1}{c} \P_i\D^T\R_i^T \R_i (\Y-\D\Gama^{k-1}) \right),
\end{equation}
where we have used $\R_i$ as the operator which extracts the $i^{th}$ $n-$dimensional patch from an $N-$dimensional global signal. The operator $\P_i$ extracts $m$ rows from $\D^T$, while $\R_i^T$ extracts its non-zero columns. Therefore, $\P_i\D^T\R_i^T = \D_L^T$, and so we can write
\begin{equation}
\P_i\Gama^k = \mathcal{S}_{\lambda/c}\left( \P_i\Gama^{k-1} + \frac{1}{c} \D_L^T \R_i (\Y-\D\Gama^{k-1}) \right).
\end{equation}
Noting that $\alfa_i^k=\P_i\Gama^k$ is the $i^{th}$ local sparse code, and defining $\r_i^k=\R_i (\Y-\D\Gama^{k-1})$ as the corresponding patch-residual at iteration $k$, we obtain our final update (for every patch)
\begin{equation}
\alfa_i^k = \mathcal{S}_{\lambda/c}\left( \alfa_i^{k-1} + \frac{1}{c} \ \D_L^T \ \r_i^{k-1} \right).
\end{equation}
We summarize the above derivations in Section IX in the paper.

\end{document}